\newtheorem{lemma}{Lemma}[section]
\newtheorem{thm}{Theorem}[section]
\newtheorem{cor}{Corollary}[section]
\theoremstyle{definition}
\newtheorem{definition}{Definition}[section]
\newcommand{\rhop}{\rho^{(p)}}
\newcommand{\rhou}{\rho^{(u)}}
\newtheorem*{rep@theorem}{\rep@title}
\newcommand{\newreptheorem}[2]{%
\newenvironment{rep#1}[1]{%
 \def\rep@title{#2 \ref{##1}}%
 \begin{rep@theorem}}%
 {\end{rep@theorem}}}
\newcommand\footnoteref[1]{\protected@xdef\@thefnmark{\ref{#1}}\@footnotemark}
	\newcommand{\blind}{0}
\begin{document}
		
		\def\spacingset#1{\renewcommand{\baselinestretch}%
			{#1}\small\normalsize} \spacingset{1}

		\if0\blind
		{
			
			\title{Chasm in Hegemony: Explaining and Reproducing Disparities in Homophilous Networks}	
			
	\author[1]{Yiguang Zhang}
	\author[2]{Jessy Xinyi Han}
	\author[1]{Ilica Mahajan}
	\author[1]{Priyanjana Bengani}
	\author[1]{Augustin Chaintreau}
	\affil[1]{Columbia University}
	\affil[2]{Massachusetts Institute of Technology}

			\date{}
			\maketitle
		} \fi
		
		\if1\blind
		{

            \title{\bf \emph{IISE Transactions} \LaTeX \ Template}
			\author{Author information is purposely removed for double-blind review}
			
\bigskip
			\bigskip
			\bigskip
			\begin{center}
				{\LARGE\bf \emph{IISE Transactions} \LaTeX \ Template}
			\end{center}
			\medskip
		} \fi
		\bigskip

\begin{abstract}

In networks with a minority and a majority community, it is well-studied that minorities are under-represented at the top of the social hierarchy. However, researchers are less clear about the representation of minorities from the lower levels of the hierarchy, where other disadvantages or vulnerabilities may exist. We offer a more complete picture of social disparities at each social level with empirical evidence that the minority representation exhibits two opposite phases: at the higher rungs of the social ladder, the representation of the minority community decreases; but, lower in the ladder, which is more populous, as you ascend, the representation of the minority community improves. We refer to this opposing phenomenon between the upper-level and lower-level as the \emph{chasm effect}. Previous models of network growth with homophily fail to detect and explain the presence of this chasm effect. We analyze the interactions among a few well-observed network-growing mechanisms with a simple model to reveal the sufficient and necessary conditions for both phases in the chasm effect to occur. By generalizing the simple model naturally, we present a complete bi-affiliation bipartite network-growth model that could successfully capture disparities at all social levels and reproduce real social networks. Finally, we illustrate that addressing the chasm effect can create fairer systems with two applications in advertisement and fact-checks, thereby demonstrating the potential impact of the chasm effect on the future research of minority-majority disparities and fair algorithms.
	
	\end{abstract}
\maketitle
\pagestyle{plain}
\thispagestyle{empty}

\section{Introduction}

The "glass-ceiling" effect has multiple real-world applications; it is invoked when describing the invisible barrier that women --- or any minority group --- hit in their career as they approach the upper echelons of management \cite{cotter2001glass}\cite{morgan1998glass}. The top of the hierarchy has been well studied, whereas research on minority representation in the rest of the social hierarchy has received less attention. Having a complete characterization of social disparities at all levels of the hierarchy helps tackle questions including at what point a minority group starts experiencing a systemic disadvantage, and at what rung of the ladder -- if any -- are minorities fairly represented.

We tackle these questions leveraging real-world datasets (QQ, WhatsApp, and Instagram) in an attempt to understand the distribution of minority representation across the entire hierarchy. In order to talk about the advantage or disadvantage of the minorities, we have to agree on a measure of success in a social network. Following the conventional approach that sees network edges as the network's "social capital.", we define successful members in a friendship (unipartite) network to be people with a large number of friends, and define successful groups in a group-member (bipartite) network to be groups with many members.

Our main finding is the surprising but repeated evidence that the ratio of people belonging to a minority group initially increases as one moves up in the lower layers of the hierarchy, before it reaches a plateau and drops. We refer to this effect as a ``chasm'' because people who observe the lower or upper layer of a hierarchy might agree that a systemic bias is present but would hastily claim it is in opposite directions. This is in striking contrast to the monotonic behavior one would expect in all previous systemic models of hegemonic biases. As we prove that previous models cannot explain our observation, we also provide the first generative model that offers a simple explanation and is general enough to apply broadly.

The question we ask in this paper addresses the causes of this chasm effect. What are the mechanisms that interact with each other to create both the glass-ceiling effect and the chasm effect, and in particular, how do social networks play a role in creating these two effects?

Previous studies on the glass-ceiling effect have provided mechanisms that capture the glass-ceiling effect \cite{avin2015homophily}. However, the same mechanisms do not capture the chasm effect we have observed. In this paper, we primarily focus on bi-affiliated bipartite networks, where the network is partitioned into groups and members, and each member and each group has an independent or collective (respectively) viewpoint that favors the minority or majority. We are interested in these bipartite networks for two reasons: (1) the nature of bipartite networks is less understood but more intriguing due to their complexity; (2) many social platforms, such as WhatsApp, are now group-based where members find communities of their interests within the larger network. We analyze the interactions among a few well-observed network-growing mechanisms with a simple model to reveal the sufficient and necessary conditions for both the glass-ceiling effect and the chasm effect to both be present. We further generalize the simple model naturally and present a complete bi-affiliation bipartite network-growth model. We demonstrate our proposed model's effectiveness through both mathematical proofs and data synthesis. Our generative model is the first to capture the chasm effect in social disparities.

This study has important practical applications, especially as it puts a spotlight on structural biases in bipartite networks and hints at ways to address them. More specifically, the new idea of a chasm effect we put forward provides a foundation for allocating resources differently in diverse settings to minimize bias among those people who constitute a large portion of the population that are more disadvantaged and vulnerable. We present two examples taken from different contexts: (1) (gender fairness) we aim to provide recruiters with a better job placement strategy if they want to diversify their pool of candidates; (2) (political fairness) in politics-related group chats where conversations are not accessible outside the immediate community, we aim to show how fake-news can have more of an adverse impact on the minority population in a constrained environment.  

As a summary, our main contributions are:
\begin{itemize}
\item We prove the existence of the chasm effect with empirical evidence from real-world datasets, and characterize the phenomenon in-depth to provide a more complete picture of social parities. That is, we show that the ratio of the minority community does not decrease monotonically as we move up the hierarchy. (Section 3)
\item We analyze the interactions among network-growth mechanisms and derive the necessary mechanisms for both the chasm effect and the glass-ceiling effect to be present in bipartite networks. (Section 4)
\item We propose a complete bipartite bi-affiliation network-growth model that generalizes the necessary mechanisms discussed in Section 4. The generalized model is capable of reproducing real-world social networks. Under the generalized model, we provide proofs to show that both types of entities in the generated networks have power-law degree distributions, and specify the sufficient and necessary conditions mathematically for both the glass-ceiling effect and the chasm effect to present. (Section 5)
\item Finally, we provide two real-world applications of our findings, job advertisement and fact-checking, where the chasm effect could impact the direction of bias, thereby motivating the importance of considering the chasm effect. (Section 6) 
\end{itemize}

Those results together suggest that the chasm effect can be observed, at least frequently in online networks which may exhibit simple selective homophily dynamics, and has consequences. We urge some caution as our results do not, however, prove that the chasm is unavoidable: Some social networks (and, under some conditions, our general model) can exhibit a systemic monotonic bias against minority groups at \emph{all} level of the hierarchy.

\section{Related Work}
Social disparities and the hegemony of the majority community have been widely studied in uni-partite social networks, and it has been well-observed that disadvantages are exerted on the minority community, for example, in the case of the gender gap \cite{cotter2001glass}\cite{morgan1998glass} or rural-urban inequality \cite{qu2017glass}. It has also been shown, through homophilous preferential attachments, that structural bias in uni-partite social networks can create such disadvantages \cite{avin2015homophily} at the top of the hierarchy and the effects can be reinforced when recommendation algorithms are applied \cite{stoica2018algorithmic}. However, no existing model analyzes the structural biases that may exist beyond the top of the hierarchy.

Further, studying hegemony is no longer straightforward in bipartite networks. Often, the bipartite netorks are comprised of different types of entities and it is only meaningful to study homophily within a single entity. \emph{Projection} can convert bipartite networks back to uni-partite networks, but this loses important network information \cite{latapy2008basic}. Therefore, a model that studies hegemony directly on bipartite networks is imperative. Unfortunately, there are not many bipartite network models and even fewer studies on social disparities. Previous analytical literature \cite{borgatti1997network} and \cite{latapy2008basic} provide notations studying bipartite networks and extend several common notations in uni-partite network to bipartite networks, but those do not consider hegemony. Random graphs models like Stochastic Block Model can be used to model homophily, but do not reproduce the large range of degrees \cite{barabasi2002evolution} well-observed in social networks. Configuration models like exponential random graph models \cite{bomiriya2014topics} can be modified to study homophily but are restricted by nature to static graphs with no internal reinforcing dynamics. We hereby introduce the first generative model that can be used to analyze hegemony in bipartite networks.

	One important application of bipartite networks is \emph{fairness in fake news detection in encrypted group-member networks}. In the last decade, researchers have expended tremendous efforts attempting to automatically detect fake news by analyzing texts \cite{castillo2011information}\cite{qazvinian2011rumor}, images \cite{huh2018fighting}, propagation models \cite{liu2018early}, and more \cite{shu2017fake}\cite{feng2013detecting}. Most auto-detection methods apply only to \emph{public social media}\footnote{https://www.facebook.com/facebookmedia/blog/working-to-stop-misinformation-and-false-news} where platforms have access to all content. However, on \emph{private} platforms (such as WhatsApp, which is end-to-end encrypted), platforms are unable to proactively auto-detect misinformation due to the lack of visibility into the content. Instead, one of the ways in which they detect potentially inaccurate political news is through user reports. Due to the diversity of information and the massive volume of queries received, stories reported as fake by a large number of users are often prioritized by fact checkers \cite{babaei2019analyzing}. When there is more than one political party in the network, such detection methodologies may create \emph{unfairness} as the party with more members could have more fake news against them debunked and removed due to their advantages in reporting. To the best of our knowledge, our results are the first to tackle factors affecting fairness of fake news detection in \emph{encrypted} social media.
	
\section{Hegemony in Networks: An Unexpected Chasm}
Here we define a \emph{hegemonic} subset as one that is systematically over-represented among the tail of most popular nodes. It was shown that a majority affiliation among the nodes can become hegemonic under simple rich-get-richer and homophily dynamics \cite{avin2015homophily}. We find, among three large networks with affiliations, including for a bipartite graph, that such hegemonic subsets always exhibit a remarkable paradox: It appears, starting at small degrees, that members of the hegemonic subset are becoming scarcer as degree increases, while the fraction of members from other subset initially increases! This creates a chasm since, assuming one concentrates on a partial local observation of the degree distribution, one may hastily conclude that network growth either disproportionally favors or disfavors those in the hegemonic subset.

\subsection{Gender and political affiliations in the networks}
\textbf{QQ dataset \cite{you2015empirical}}
One of the most popular instant messengers for group chats in China is Tencent QQ, which has over 700 million active users. Users can create new groups or join existing ones. Depending on the account level of the group creator, QQ group sizes are capped at 100, 200, 500, or 1000 members. This dataset contains 274,335,183 users and 58,523,079 groups, of which 273,204,518 users have gender information and 48,676,355 groups have the complete information about the group identifier, member list, and group creation date. Females take up 42.5\% of the users in this dataset; hence, we label groups whose members are less than 42.5\% female as male-dominated groups and those with with more than 42.5\% females are classified as female-dominated \footnote{\label{note1} In the setting of groups with political-leaning, it is often the case that the group creator maintains the group to favor the creator's political affiliation. However, in the gender-based setting, it makes more sense to color groups using the gender ratio within the groups, as it is less obvious to identify the gender-leaning of a topic. We show theoretically in Appendix \ref{gender ratio color} that for the purpose of this paper, assigning groups the same affiliation as the group creators is equivalent to assigning groups affiliations by member ratios.}. 

We observe that the group size distribution of this dataset becomes discontinuous at 100, 200, and 500 members due to the imposed group-size caps. To avoid the impacts of the discontinuity, we focus our analysis on groups of sizes no larger than 100, which account for 99.2\% of all groups in the dataset and have an the average ratio of female membership in a group of 40.9\%. 

\textbf{WhatsApp dataset \cite{garimella2020images}\cite{garimella2018whatsapp}}
WhatsApp is one of the most widely-used messaging apps around the world. The WhatsApp data we use was collected over a period of 9 months from October 2018 to June 2019. It includes 2,092 groups around political conversations and 205,880 unique users. The party affiliation of each group is labeled according to the group title and some of its content by authors in \cite{garimella2018whatsapp}. Based on the ideology and relevant reports of the group's party affiliation, we characterize each group's political leaning as pro-BJP or anti-BJP where BJP stands for Bharatiya Janata Party, the current ruling party in India. To obtain sound and rigorous results, we only consider groups where the political leanings are evident\footnoteref{note1}. Once we identify the political leanings of groups, we label each user as pro-BJP if the ratio of pro-BJP groups the user joined exceeds the overall pro-BJP group ratio in the dataset and vice versa. Overall, we get 1,198 pro-BJP groups and 465 anti-BJP groups, with 62,920 pro-BJP members and 21,625 anti-BJP members sharing 897 images manually labelled as misinformation.

The data are very sparse for groups with more than 165 members in this dataset, so we restrict our analysis to groups of size less than 165. Furthermore, since WhatsApp is an end-to-end encrypted application where members have a reasonable expectation for privacy, we drop all groups with less than 52 members ($20$\% of the maximum group size).

\subsection{Evidence of reinforcing and homophilous growth\label{sec:three_mech}}
\subsubsection{Minority-major affiliation}
An unequal proportion of two affiliations arise in different identity contexts like gender (social identity) and political leaning (political identity). In the QQ dataset which illustrates the social identity aspect, female members make up 42.5\% of the population and 41.0\% of the groups are female-dominated, thus females are considered as the minority and male the majority; in our WhatsApp dataset which exhibits the political identity aspect, 25.6\% of all members and 28.0\% of all groups are anti-BJP, thus anti-BJP is denoted as the minority and pro-BJP the majority. Despite the completely different nature of the majority-minority groups in these two datasets, our later findings will show that they share some similar properties, which is worth further study. 

\subsubsection{Rich-get-richer}
The degree distribution of a network reflects how the resources and power are distributed in society. Previous studies on one-mode social networks demonstrate a ``rich-get-richer'' mechanism \cite{adamic2000power}\cite{barabasi2002evolution}, suggesting that those with more connections have an advantage in building even more connections. In bi-affiliation bipartite networks, we study each affiliation and each type of entities separately. We take the number of members within a group as degree of the group entities and take the number of groups a member joins as degree of member entities. We find a smooth slow decay in small degrees and a fast decay in large degrees for both the group size distributions and member degree distributions, exhibiting a similar ``rich-get-richer'' result as in one-mode networks. Specifically, in the QQ dataset, female-dominated groups follow a power law with power -4.00 and their male counterparts, -3.51; QQ female members follow a power law with power -3.82 and their male counterparts the same; in the WhatsApp dataset, anti-BJP groups follow a power law with power -2.67 and their pro-BJP counterparts -2.48; WhatsApp anti-BJP members and their pro-BJP counterparts follow a power law with almost identical power, -2.29 and -2.23 respectively. This ``rich-get-richer'' result on bi-affiliation bipartite networks illustrates a few basic ideas on member-group interactions: (1) members are more likely to join large groups, likely due to large groups' popularity or their potential to offer more resources; (2) this higher tendency of members to join large groups is more pronounced when joining majority groups; (3) members who are active in joining groups are more likely to join new groups than those who are less active.

\subsubsection{Homophily}

Homophily is a well-observed phenomenon that says that people tend to connect with those who are similar to them\cite{mcpherson2001birds}. To test for homophily, we count the number of minority-majority member pairs. Specifically, two members form a member pair if they are both in the same groups, and they have multiple pairs if they share multiple common groups. We count the number of member pairs in the network such that one end of the pair is a member from the minority affiliation and the other end is a member from the majority. Note that when there is no homophily in the network, the ratio of minority-majority member pairs over all member pairs is $2r(1-r)$, where $r$ is the percentage of minority members in the network. Having the actual minority-majority member pairs be less than the expected number of minority-majority member pairs is therefore an indication for homophily. 

Both the QQ dataset and the WhatsApp dataset show a strong indication for homophily in Fig.  \ref{homophily_test}, as the actual number of minority-majority member pairs (orange line) is significantly smaller than the expected value when assume no homophily (green line). Therefore, we conclude that homophily exists in bipartite networks.

\begin{figure}
\centering
\includegraphics[scale=0.5]{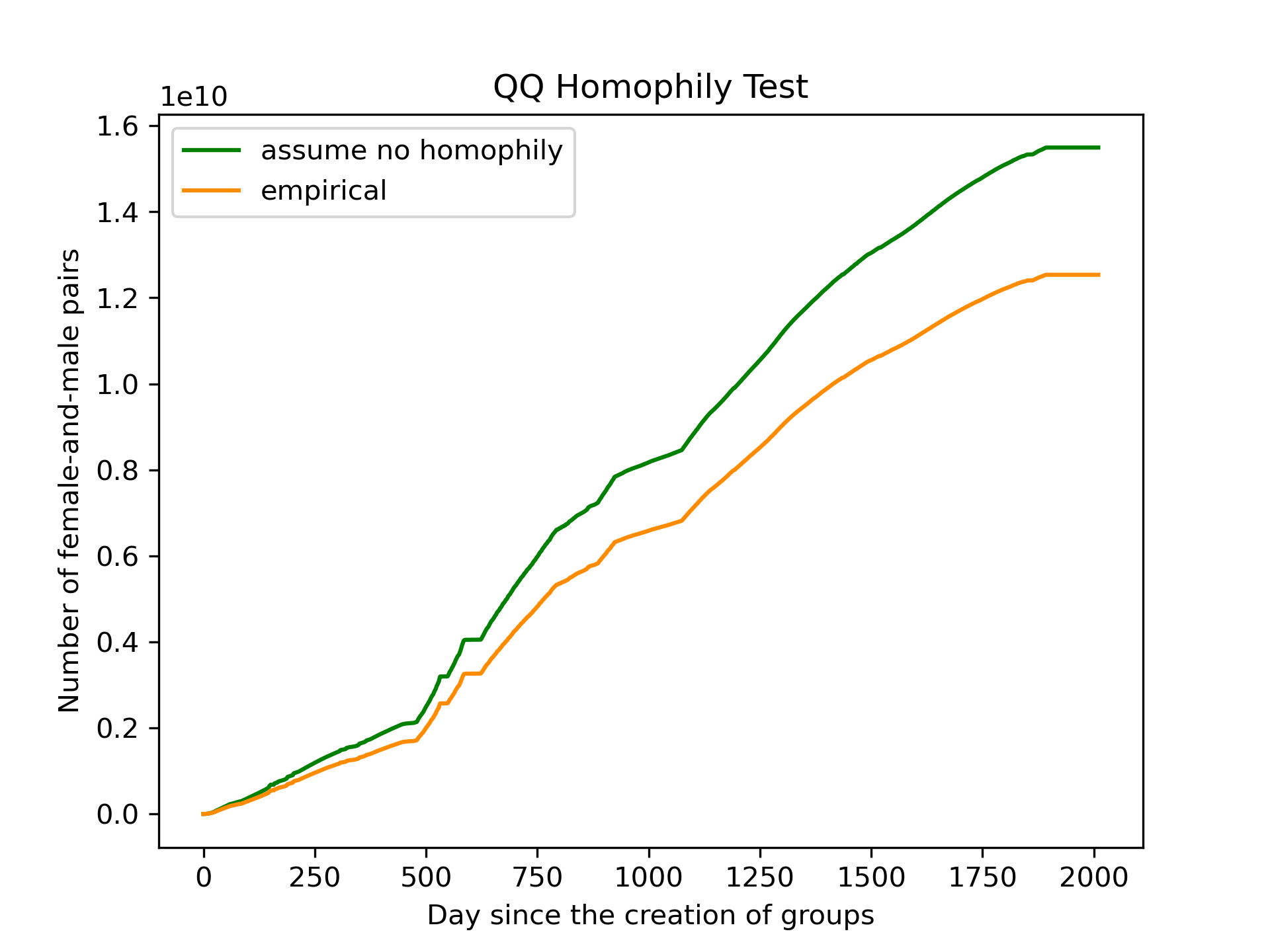}
\includegraphics[scale=0.5]{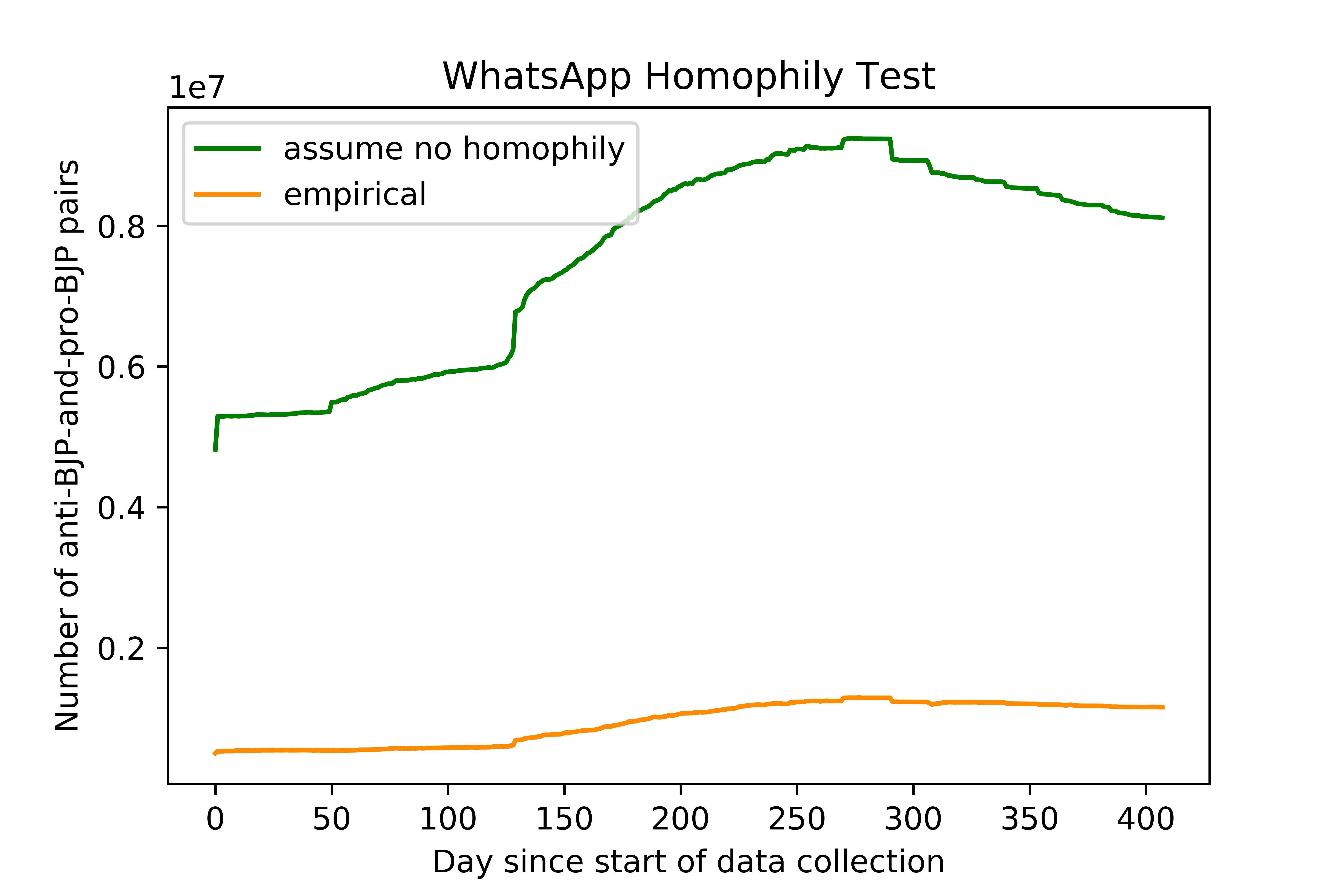}
\caption{\emph{Homophily mechanism}: we observe in both the QQ dataset, where the minority-majority imbalance often arises in the context of gender disparity, and in the WhatsApp dataset, where the imbalance often arises in the context of political parties, the network exhibits homophily. That is, people have the tendency to connect with the ones of their own affiliation.}
\label{homophily_test}
\end{figure}

\vskip 0.15in
As a conclusion, our analysis on the real-world data illustrates the following three mechanisms in bi-affiliated group-member networks:

\begin{enumerate}
\item \emph{Minority-majority affiliation:} the two affiliations have non-negligible size differences.
\item \emph{Rich-get-richer:} new members are more likely to join large groups; members who are active in joining groups are more likely to join new groups than those who are less active. 
\item \emph{Homophily:} members are more likely to join groups of their own affiliation.
\end{enumerate}

\subsection{Disparities before the glass ceiling: Chasm in Hegemony}
 
While the glass-ceiling effect depicts the under-representation of minorities at the higher rungs, we zoom out to study the minority representation across the social hierarchy. We find that at the lowest level, minorities are also under-represented and this under-representation eases as they move up the social ladder but deteriorates closer to the top. This matches the glass-ceiling effect at the higher levels. As the minority representation exhibits opposite trends when we move up in the lower rungs and in the upper rungs, we refer to this phenomenon as the ``chasm effect'' between the lower-level and upper-level.

\begin{figure}[H]
\centering
\includegraphics[scale=0.5]{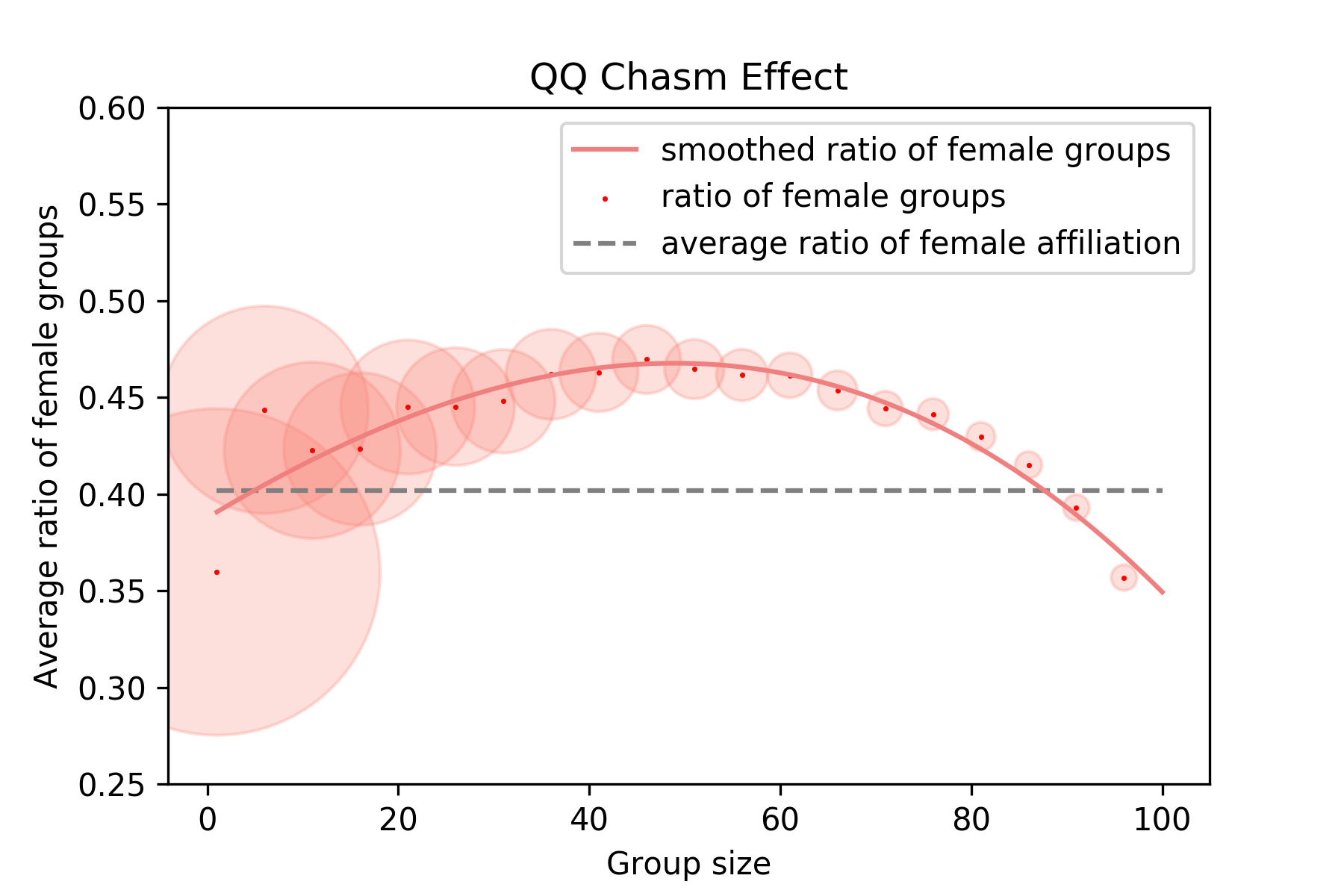}
\includegraphics[scale=0.5]{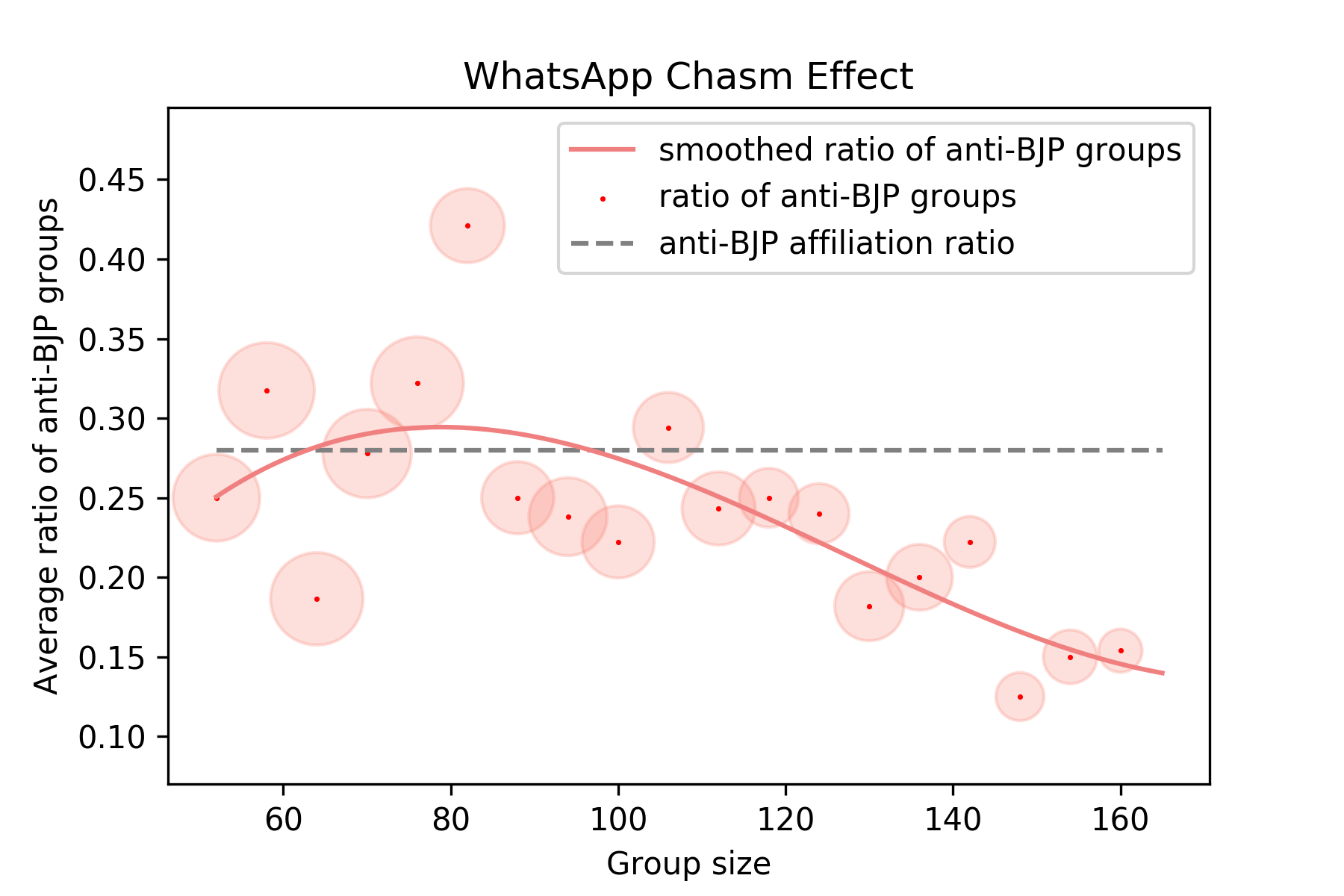}
\caption{Chasm effect on group ratio: we observe that in both datasets, the ratio of minority groups (vs majority groups) is not monotone. As expected from the glass-ceiling effect, the ratio decreases for large group sizes; however, it increases for small groups, which constitutes a larger parts of all groups. In this plot, the radius of light-red circles are proportional to groups counts.}
\label{group_ratio}
\end{figure}

\begin{figure}[H]
\centering
\includegraphics[scale=0.5]{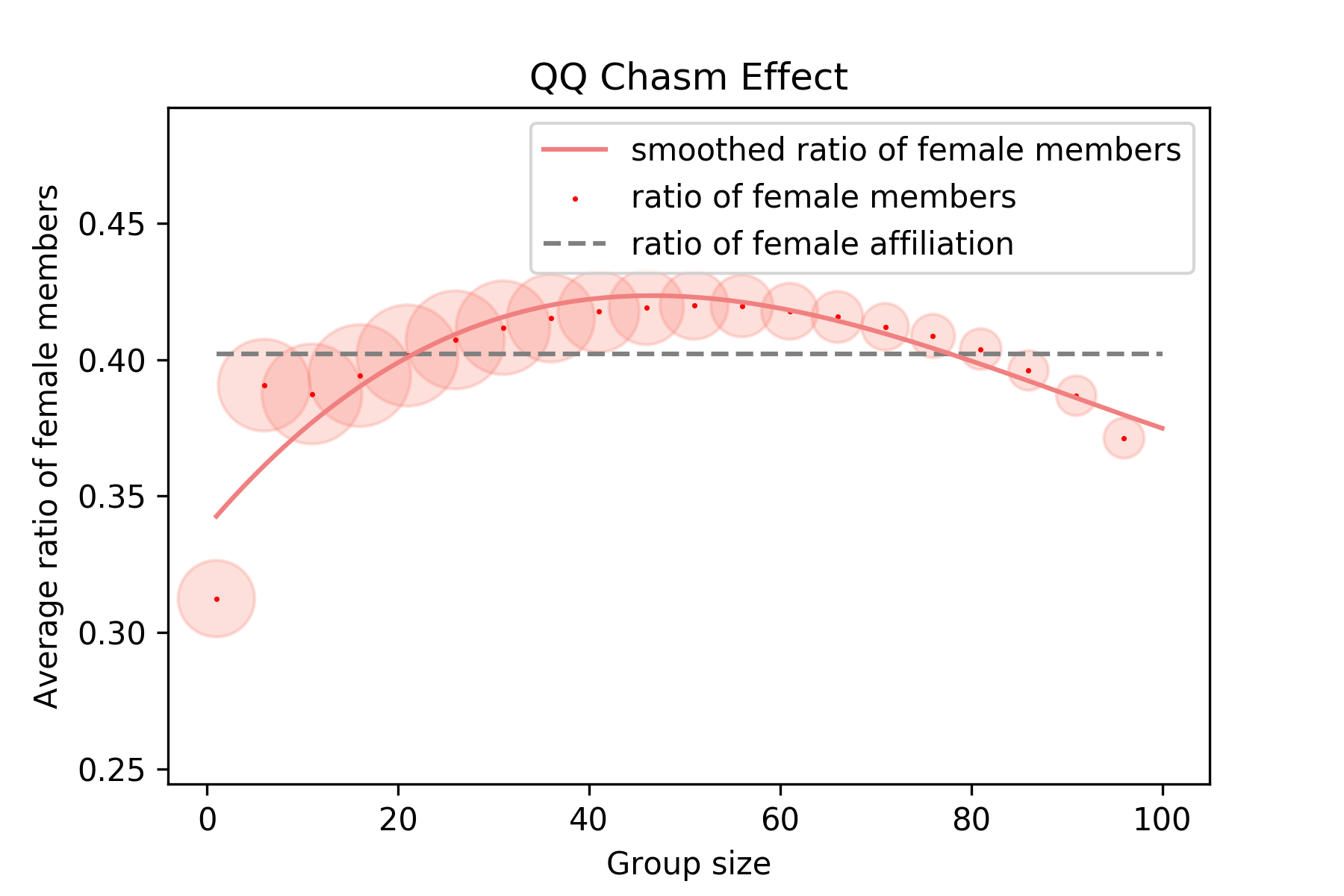}
\includegraphics[scale=0.5]{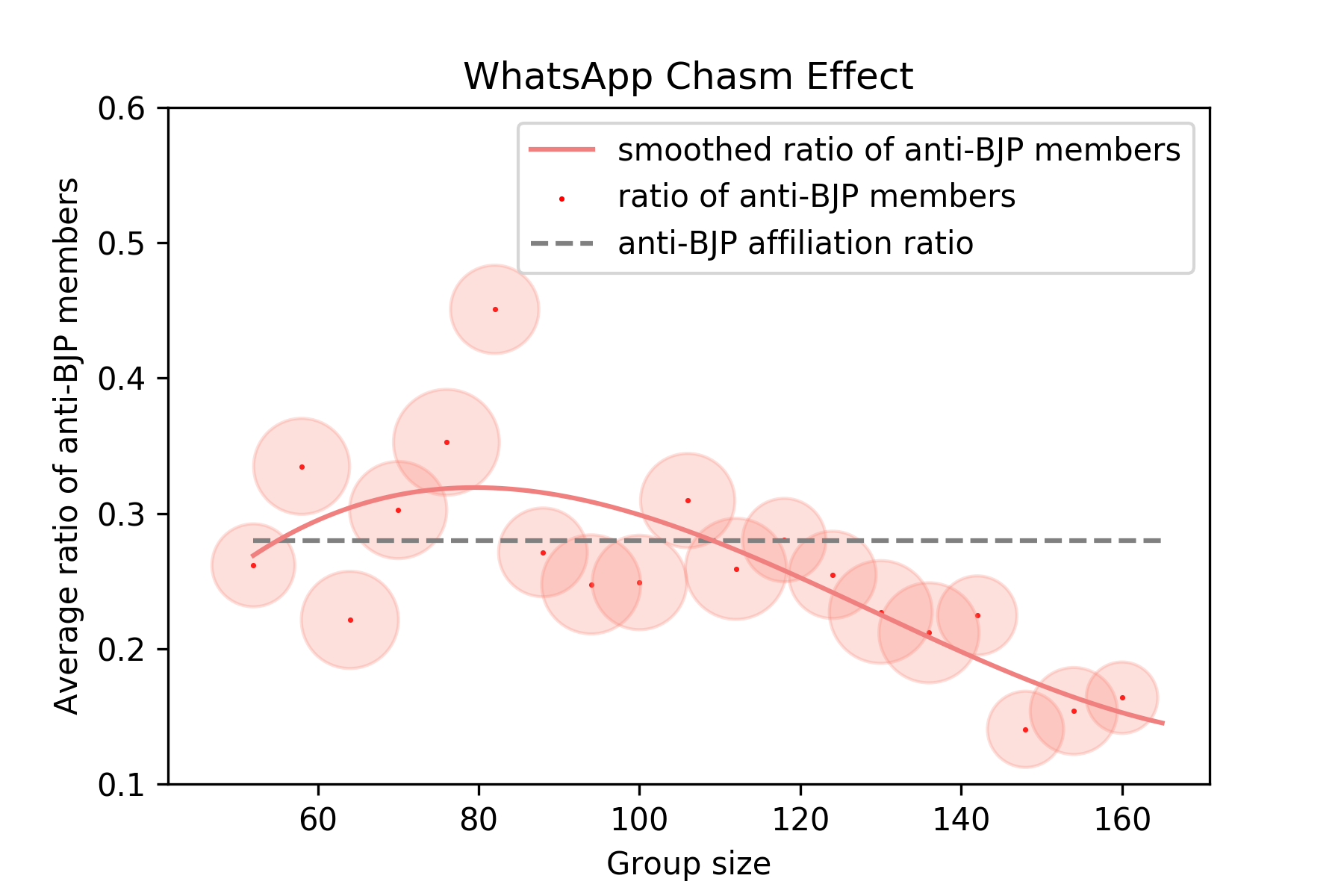}
\caption{Chasm effect on average member ratio: we again observe that the average ratio of minorities within groups of fixed sizes first increases, and then decreases. The radius of light-red circles are proportional to the sum of group sizes.}
\label{member_ratio}
\end{figure}

In our bi-affiliated bipartite networks, we observe this chasm effect for both the group mode and member mode. As shown in Fig. \ref{group_ratio}, we calculate the ratio of minority-dominated groups for each group size bucket and find that the minority group ratio does not monotonically decrease. More specifically, in the QQ dataset, we observe that the ratio of female-dominated groups increases among groups of size 1-55 and decreases afterwards. In the WhatsApp dataset, the ratio of the anti-BJP group increases for groups of size 52-85 and decreases thereafter. In both plots, we see that the very small and very large minority groups are under-represented and the representation improves in medium-sized minority groups. Similarly, in Fig. \ref{member_ratio}, we calculate the average ratio of minority members at each level of group sizes and find a similar non-monotonic trend. In the QQ dataset, the average ratio of female members in a group first increases among groups of size smaller than 55 and decreases afterward; similarly, the average proportion of anti-BJP members in the WhatsApp dataset increases among groups of size less than 82, and decreases thereafter. In both plots, we see that minority members are under-represented in the very small, and the representation gets improved in middle-sized minority groups.

The above observations have not been studied in the existing literature of social networks but they are non-negligible. First, smaller groups constitute a significant portion of all groups in the networks: 40.9\% groups have sizes smaller than 55 in the QQ dataset, and 41.7\% groups have sizes smaller than 82 in the WhatsApp dataset. Furthermore, this observation is not unique to bipartite networks as we find a similar non-monotonic result in unipartite networks (Fig. \ref{member_net}). Due to the space limit, we delay the description of unipartite network datasets, as well as further analysis on unipartite networks to Appendix \ref{unipartite analysis}.

\begin{figure}[H]
\centering
\includegraphics[scale=0.4]{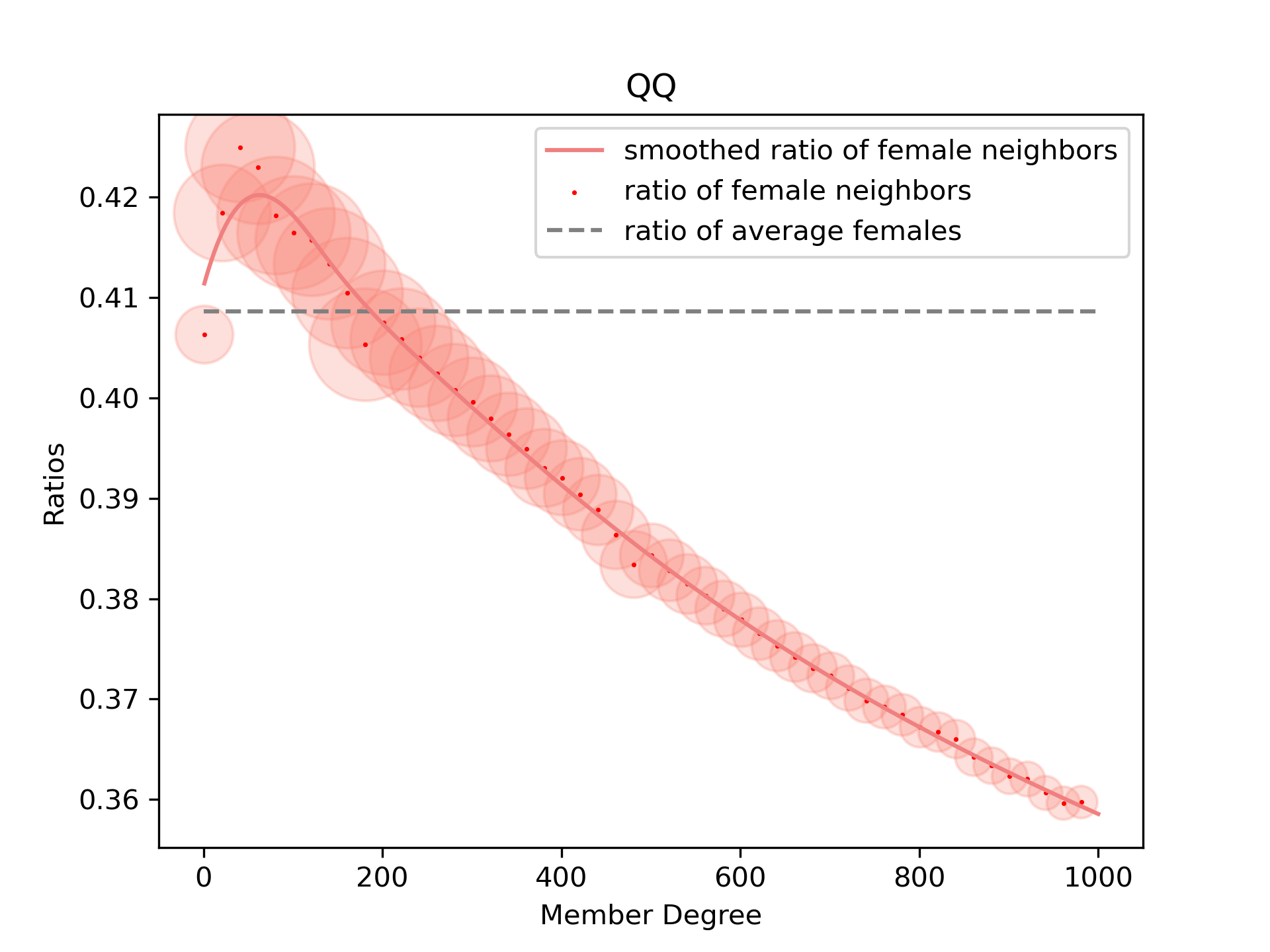}
\includegraphics[scale=0.475]{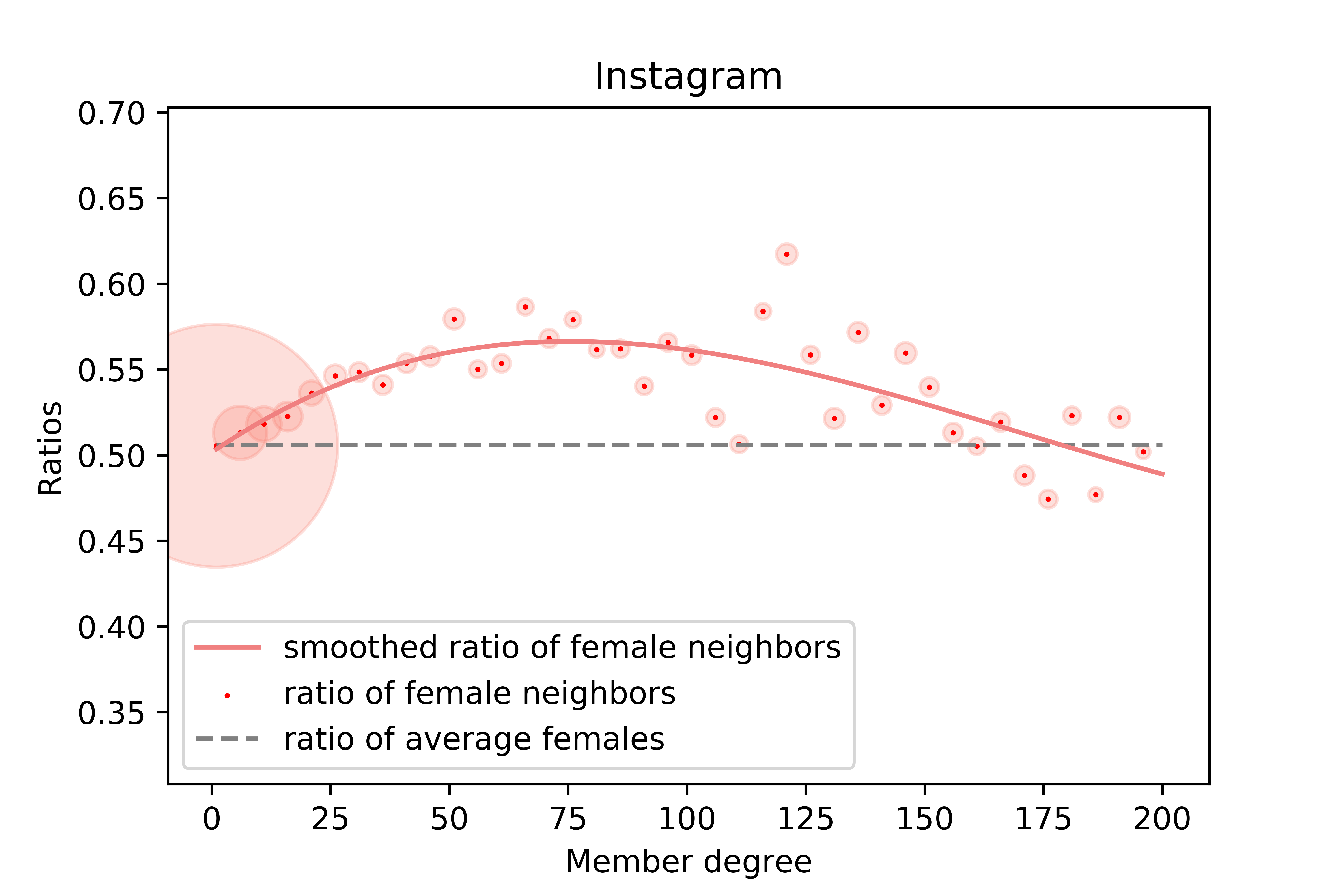}
\caption{Chasm effect on unipartite networks: the chasm effect is not unique to bipartite networks. We observe in the projected QQ membership networks, as well as in the Instagram network, that the ratio of female connections a member has first increase, then decreases. This common pattern shared by networks of different type, as well as networks focusing on different context, indicates that there may be simple structural patterns that are not explained in the existing literature.}
\label{member_net}
\end{figure}

This more complete picture of minority representation in every level of a social hierarchy is especially significant as it can provide insights into minorities at the lower rungs who are far more disadvantaged and vulnerable than those at the higher level. Previous models of network growth with only the three mechanisms discussed in Section \ref{sec:three_mech} are unable to capture or explain this chasm effect (proved in Section 4). This motivates us to propose a new bi-affiliation bipartite network model in the next sections that could reveal the complex interaction among several driving mechanisms of the social disparities. 

\section{Explaining the Chasm using Selective Homophily}
We now examine the roles played by the observed mechanisms, and the way they interact with each other, as well as another well-observed social network mechanism, to create the glass-ceiling and the chasm effect. To better characterize the interactions, we use a simple model to show that the two effects can naturally arise under a specific combination of the network mechanisms. What's more, the mechanisms that constitute this combination are necessary conditions for the two effects to occur at the same time.

\subsection{A model of network growth dynamic}
Formally, we consider a bi-affiliated bipartite network, with one subset of nodes representing members, $M$, and the other groups, $G$. We assume two affiliations in the network and we denote them as red and blue, where the red affiliation represents the minority, and the blue affiliation represents the majority. Every member $m\in M$ belongs to exactly one of the two affiliations. Similarly, every group $g\in G$ belongs to one affiliation. We use $\mathcal{N}(M\cup G, t, \Theta)$ to denote a network generated with a model by $\Theta$ at time step $t$ where $\Theta$ is the set of parameters that is used to generate networks. 

We assume the following well-observed mechanisms:
\begin{enumerate}
\item \emph{rich-get-richer:} current active members are likely to join more groups than current inactive members; large groups are likely to have a higher growth rate than small groups. 
\item \emph{homophily:} members tend to join groups of their own affiliation.
\item \emph{equal-chance:} members may join groups uniformly at random.
\end{enumerate} 

Applying the homophily mechanism to the other two gives rise to three possible homophilous mechanisms. {We test each of them in a \emph{simple homophilous model (SHM)}. Namely, they are \emph{SHM with selective homophily on rich-get-richer}, \emph{SHM with selective homophily on equal-chance}, and \emph{SHM with general homophily}.}

\begin{figure}
\centering
\includegraphics[scale=0.35]{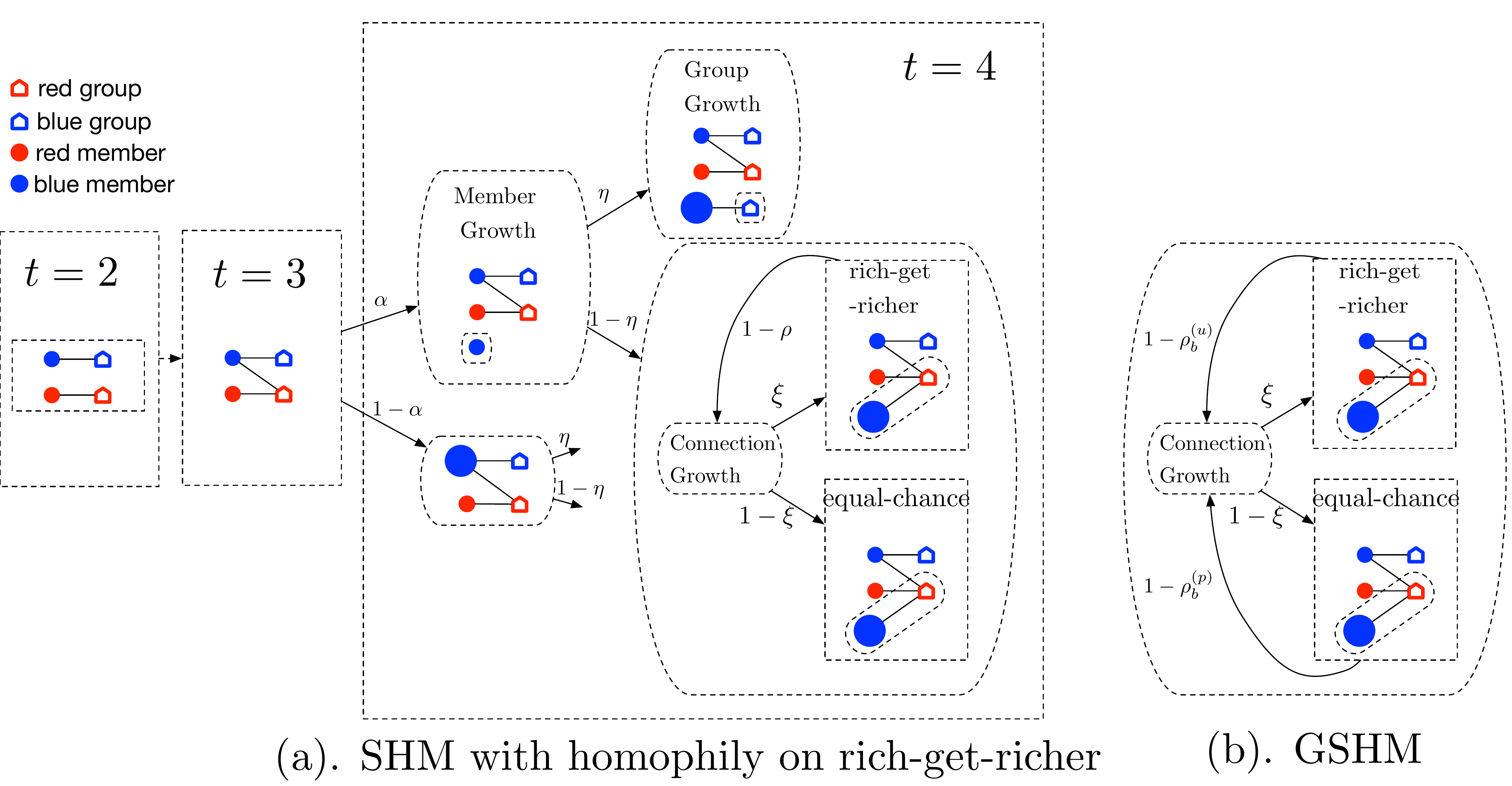}
\caption{SHM (defined in Section 4) and GSHM (defined in Section 5): in SHM, at each time $t$, exactly one connection is built between the set of members and the set of groups. A chosen member can either create a group or join an existing group. If joining an existing group, the member selects a group based on either (1) rich-get-richer mechanism, or (2) equal-chance mechanism. If homophily is applied to the chosen mechanism, then the member may reject the connection and choose a new group until successfully joining a group. The GSHM follows the same step, except that in SHM, the homophily level is the same for different mechanisms and members from different affiliations, while GSHM has differentiated homophily levels.}
\label{model_plots}
\end{figure}

Formally, we have $\Theta = (\alpha, \eta, r, \xi, \rho)$, where $\alpha$ and $\eta$ captures the arrival rate of members and groups, respectively, $r\leq 1/2$ represents the likelihood of a new arrival member being red, $0\leq \xi\leq 1$ captures the level of the rich-get-richer mechanism for groups, and $\rho$ represents the level of homophily in the network.

{We now describe SHM with the three homophilous mechanisms in more details, and demonstrate them in Figure \ref{model_plots} and Figure \ref{corollary_fig}}. At time $t = 2$, we initialize the bipartite network with one red member connecting to a red group, and one blue member connecting to a blue group. At time $t$, the network grows as follows:
\begin{itemize}
\item \textbf{Member Growth}: 
\begin{itemize}
\item (\emph{minority - majority}) with probability $\alpha$ ($0<\alpha <1$), a new member $m^*$ joins the network, and it is colored red with probability $r$ ($0<r\leq 1/2$) and colored blue with probability $1-r$; 
\item (\emph{rich-get-richer}) otherwise, with probability $1-\alpha$, we randomly pick an existing member $m^*$ with a probability proportional to $\text{degree}(m^*)$.
\end{itemize}
\item \textbf{Group Growth}: with probability $\eta$ ($0<\eta <1$), the member creates a group of color $c(m^*)$.
\item \textbf{Connection Growth}: with probability $1-\eta$, the member $m^*$ joins an existing group, according to the following two steps:
    \begin{itemize}
	\item (\emph{rich-get-richer}) with probability $\xi$, $m^*$ picks a group $g^*$ with probability proportional to $\text{degree}(g^*)$.
	\begin{itemize} 
		\item under \emph{selective homophily on rich-get-rich mechanism} or \emph{general mechanism}: if $c(m^*) = c(g^*)$, $m^*$ joins $g^*$ directly; otherwise, $m^*$ accepts the connection with probability $\rho$. If $m^*$ does not accept the connection, $m^*$ restarts from the beginning of the \emph{Connection Growth} until a new connection is built.
		\item under \emph{selective homophily on equal-chance}: $m^*$ joins $g^*$ directly.
	\end{itemize} 
    \item (\emph{equal-chance}) with probability $1-\xi$, $m^*$ uniformly picks a group $g^*$ at random.  
    \begin{itemize} 
		\item under \emph{selective homophily on rich-get-rich mechanism}: $m^*$ joins $g^*$ directly
		\item under \emph{selective homophily on equal-chance} or \emph{general mechanism}: if $c(m^*) = c(g^*)$, $m^*$ joins $g^*$ directly; otherwise, $m^*$ accepts the connection with probability $\rho$. If $m^*$ does not accept the connection, $m^*$ restarts from the beginning of the \emph{Connection Growth} until a new connection is built.
	\end{itemize}   
\end{itemize}
\end{itemize}

\subsection{A sufficient and necessary condition}
We now provide the formal definition of the glass-ceiling effect and the chasm effect. First note that the two subsets of nodes in bipartite networks often represent different entities, and therefore shall be analyzed separately. For the purpose of this paper, we focus our analysis on the group set, and refer to both the tail glass-ceiling effect and the chasm effect as the effects on groups.

{The tail glass-ceiling effect in bipartite networks describes a decreasing fraction of groups of certain affiliation among larger groups, i.e., in the tail of the group-size sequence.} Let $\text{top}_k^{(G)}(R)$ ($\text{top}_k^{(G)}(B)$) be the number of red (blue) groups that have a size at least $k$, as $t$ goes to infinity. 

\begin{definition}(tail glass-ceiling)
	A network sequence $\left\{\mathcal{N}(M\cup G, t, \Theta)\right\}$  exhibits a tail glass-ceiling effect (or glass-ceiling effect for short) against red if there exists an increasing function $k(t)$ such that $\lim_{t\rightarrow \infty} \text{top}^{(G)}_{k(t)}(B) = \infty$, and
	\begin{equation}
		\lim_{t\rightarrow \infty} \frac{\text{top}^{(G)}_{k(t)}(R)}{\text{top}^{(G)}_{k(t)}(B)} = 0.
	\end{equation}
	
\end{definition}

The chasm effect captures the phenomenon that the representation of groups of the minority affiliation first increases and then decreases, as the group size goes up.
\begin{definition}(chasm)
	A network sequence $\left\{\mathcal{N}(M\cup G, t, \Theta)\right\}$ exhibits a chasm effect against red if there exists $K < \infty$ such that as $t$ goes to infinity, the ratio of red groups $r_k^{(G)}$ as a function of $k$ increases for $k<K$ and decreases for $k>K$.
\end{definition}

\begin{figure}
\centering
\includegraphics[scale=0.3]{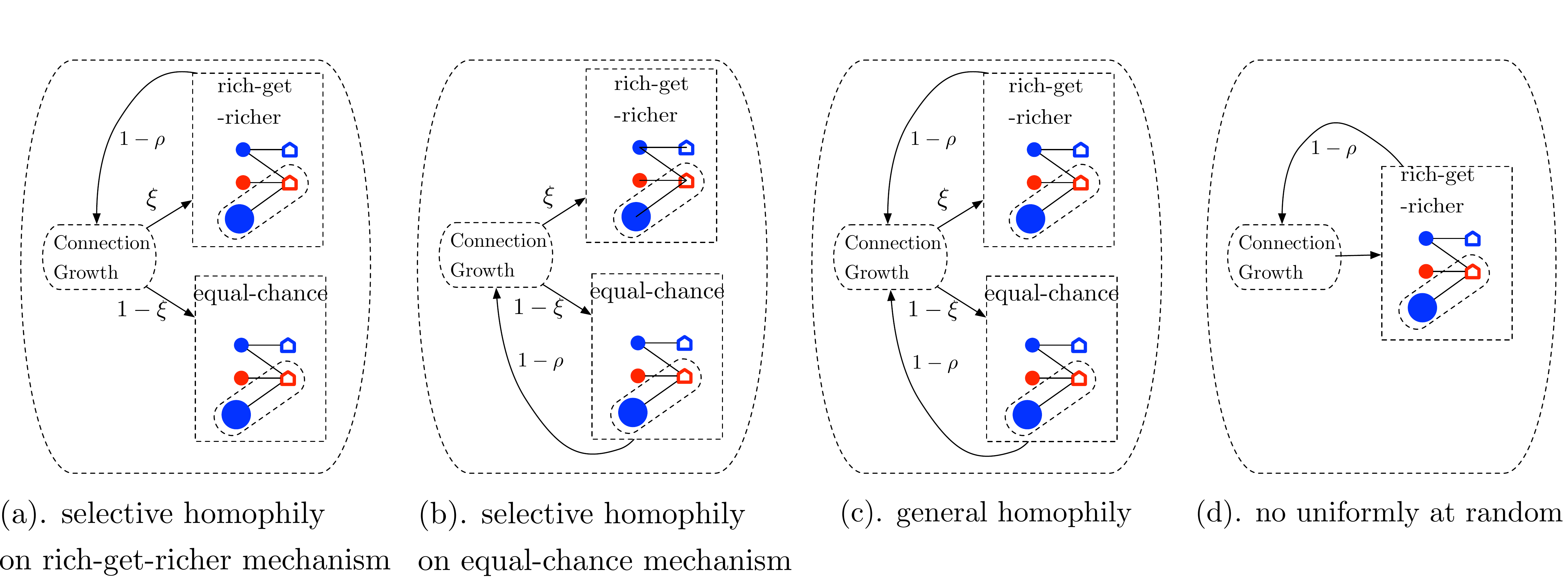}
\caption{Mechanisms in SHM: homophily can apply to either of the rich-get-richer mechanism, or both.}
\label{corollary_fig}
\end{figure}

We first note that the selective homophily on rich-get-richer mechanism can lead to both the tail glass-ceiling effect and the chasm effect. As we will establish all of the following Lemma results later in a generalized model, we defer our proofs to corollaries found in Section 5.
\begin{lemma}\label{lem_select_rich}
	Under some conditions of $\Theta$, a network sequence $\left\{\mathcal{N}(M\cup G, t, \Theta)\right\}$ generated by SHM with the selective homophily on rich-get-richer mechanism exhibits both the tail glass-ceiling effect and the chasm effect as $t$ goes to infinity.
\end{lemma}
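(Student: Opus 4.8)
The plan is to pass from the random growth process to a deterministic recursion for the per-color group-size distribution, extract two power laws with distinct exponents, and then read both effects off the recursion. Since exactly one connection is added at every step, the total group-degree mass $D(t)=\sum_k k\,(G^R_k(t)+G^B_k(t))$ grows like $t$ and the number of groups like $\eta t$, where $G^c_k(t)$ denotes the number of color-$c$ groups of size $k$. First I would show that the degree-mass shares $\pi_c=D_c/D$, the group-count shares $\nu_c$, and the probability $q_R$ that the member triggering an event is red all converge to constants; pinning down $q_R$ requires solving the coupled member-side arrival/rich-get-richer dynamics, and this is where the minority condition $r\le 1/2$ enters. Treating the restart rule as a renewal, one Connection-Growth round terminates with color-dependent probability $P^c=\xi(\pi_c+\pi_{\bar c}\rho)+(1-\xi)$, and summing the resulting geometric series shows that a fixed color-$c$ group of size $k$ gains a member at per-step probability $(A_c k+B)/t$ for the constants made explicit below.

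The crucial structural point is that the additive term $B=\tfrac{1-\xi}{\eta}\big(q_R/P^R+q_B/P^B\big)$ originates entirely from the equal-chance mechanism, on which no homophily acts, and is therefore identical for the two colors, whereas the preferential slopes differ because homophily discounts cross-color rich-get-richer joins by $\rho$: a short computation gives $A_B-A_R=\xi(1-\rho)\big(q_B/P^B-q_R/P^R\big)$. Under the minority condition together with genuine homophily $\rho<1$ this is strictly positive, so $A_R<A_B$; isolating exactly when $q_B/P^B>q_R/P^R$ holds is precisely the characterization of the admissible $\Theta$ that Section 5 supplies.

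Seeking a stationary profile $G^c_k(t)=g^c_k\,t$, the balance equations yield $g^c_1=\eta q_c/(1+A_c+B)$ and the recursion $g^c_k=\frac{A_c(k-1)+B}{1+A_c k+B}\,g^c_{k-1}$, whose closed form is a ratio of Gamma functions with tail $g^c_k\sim C_c\,k^{-(1+1/A_c)}$. Each color's group-size law is thus a power law, and since $A_R<A_B$ the red exponent $1+1/A_R$ strictly exceeds the blue one; passing to the tail sums $\text{top}^{(G)}_{k}(\cdot)$ preserves this gap, so along any increasing $k(t)$ with $\text{top}^{(G)}_{k(t)}(B)\to\infty$ we obtain $\text{top}^{(G)}_{k(t)}(R)/\text{top}^{(G)}_{k(t)}(B)\to 0$, which is the glass-ceiling effect.

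For the chasm, set $h_k=g^R_k/g^B_k$, so that $r^{(G)}_k=(1+h_k^{-1})^{-1}$ increases exactly when $h_k$ does. From the recursion the sign of $h_{k+1}-h_k$ equals the sign of $(A_R-A_B)(k-B)$ — the quadratic cross-terms cancel — so $h_k$, and hence $r^{(G)}_k$, increases for $k<B$ and decreases for $k>B$: a chasm with threshold $K=B$. This cleanly separates the two requirements on $\Theta$: homophily $\rho<1$ (forcing $A_R\neq A_B$, hence the glass ceiling and a non-degenerate threshold) and a genuine equal-chance component $\xi<1$ (forcing $B>0$, hence an honest increasing phase rather than the monotone decay one gets when $\xi=1$). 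The main obstacle will not be these algebraic identities but the probabilistic justification that $G^c_k(t)/t$ concentrates on $g^c_k$ and that $\pi_c,q_c$ converge, via an Azuma-type martingale argument in the spirit of the homophilous preferential-attachment literature, while simultaneously controlling the renewal induced by the restart rule; this is exactly the apparatus developed for the generalized model, from which the present lemma will follow as a corollary.
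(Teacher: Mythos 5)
Your proposal is correct and follows essentially the same route as the paper: the authors also defer this lemma to the generalized (GSHM) analysis, deriving the recursion with linear attachment rate $(C_{\cdot,1}k+C_{\cdot,2})/t$, obtaining power laws with exponents $1+1/C_{\cdot,1}$ (your $1+1/A_c$), reading the glass ceiling off the exponent gap (Corollary \ref{top_k}), and getting the chasm from exactly the cancellation you exhibit, which in their notation gives the threshold $k^*=1+B$ when $C_{R,2}=C_{B,2}$ (Theorem \ref{lemma_bump}). The only slip is quantitative: an ``honest increasing phase'' needs $B>1$ (the paper's condition $k^*>2$), not merely $B>0$, but since $B$ scales like $(1-\xi)/\eta$ this is achievable and does not affect the existential claim of the lemma.
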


\begin{table}
\begin{tabular}{|c|c|c|c|c|}
\hline 
& selective homophily & selective homophily& general homophily & no equal-chance \\ 
 & on rich-get-richer & on equal-chance & &  \\ \hline 
 glass-ceiling & yes [Lemma \ref{lem_select_rich}]&no [Lemma \ref{lem_select_opp}]& yes [citation \cite{avin2015homophily}] & yes [citation \cite{avin2015homophily}]\\ \hline 
chasm &yes [Lemma \ref{lem_select_rich}]& no [no glass-ceiling]&no [Lemma \ref{lem_general}] & no [Lemma \ref{lem_no_homo}]\\ \hline 
\end{tabular}
\caption{Networks generated by SHM cannot have both the glass-ceiling effect and the chasm effect when (1) there is no homophily on rich-get-richer, or (2) there is no equal-chance mechanism.}
\label{table: conditions}
\end{table}
\raggedbottom
Previous works on uni-partite networks imply that the \emph{selective homophily on equal-chance} mechanism cannot lead to the glass-ceiling effects \cite{avin2015homophily}. Indeed, this is also true for bipartite networks.

\begin{lemma}\label{lem_select_opp}
	A network sequence $\left\{\mathcal{N}(M\cup G, t, \Theta)\right\}$ generated by SHM with selective homophily on the equal-chance mechanism do not exhibit tail glass-ceiling effect.
\end{lemma}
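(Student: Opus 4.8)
The plan is to show that, under selective homophily on equal-chance, red and blue groups share the \emph{same} power-law tail exponent, so that the ratio in the glass-ceiling definition converges to a strictly positive constant rather than to $0$. The whole argument rests on one structural observation: when homophily is applied \emph{only} to the equal-chance branch, the rich-get-richer branch is color-blind, and it is precisely the rich-get-richer branch that governs the tail of the group-size distribution. The homophily-modulated equal-chance term contributes only a bounded, size-independent additive correction, which (unlike in the usual additive-kernel preferential-attachment analysis, where the additive constant sits inside the degree-normalized kernel) is normalized by the group \emph{count} $|G(t)|$ rather than by the total degree $D(t)$, and therefore cannot shift the exponent.

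First I would set up the mean-field master equations. Let $N_k^c(t)=\mathbb{E}[\#\{\text{color-}c\text{ groups of size }k\text{ at time }t\}]$. Since one edge is added per step and a group is created with probability $\eta$, the total degree and group count grow linearly, $D(t)\sim t$ and $|G(t)|\sim\eta t$; a concentration argument on the step-wise bounded increments (as in the uni-partite treatment) lets me replace these by their deterministic leading order. After unwinding the reject-and-restart loop, the per-step probability that a fixed color-$c$ group of size $k$ gains a member takes the form
\[
\text{rate}_k^c(t)\;=\;A(t)\,k\;+\;C^c(t),
\]
where $A(t)\sim A_0/t$ comes from the rich-get-richer term $\xi\,k/D(t)$ and is \emph{independent of the group color}, while $C^c(t)\sim C^c_0/t$ comes from the homophily-weighted equal-chance term $(1-\xi)\,h(c(m^*),c)/|G(t)|$ and is the only color-dependent piece. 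The restart merely introduces a normalizing factor $Z(m^*)$ depending on the \emph{joining} member's color; this converges to a positive constant and does not disturb the color-independence of the coefficient of $k$.

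Next I would insert the stationary ansatz $N_k^c(t)=t\,n_k^c+o(t)$ into the balance equation
\[
\frac{dN_k^c}{dt}=\text{rate}_{k-1}^c\,N_{k-1}^c-\text{rate}_k^c\,N_k^c+\eta\,q_c\,\delta_{k,1},
\]
where $q_c>0$ is the probability a newly created group has color $c$ (equal to the color of its creator, hence bounded away from $0$ since $r>0$). This yields the recurrence $n_k^c/n_{k-1}^c=A_0(k-1+a^c)/(1+A_0(k+a^c))$ with additive constant $a^c:=C^c_0/A_0=O(1)$. Writing the product in terms of Gamma functions and applying Stirling gives $n_k^c\sim\kappa_c\,k^{-\gamma}$ with $\gamma=1+1/A_0$ the \emph{same} for both colors: the shift $a^c$ only rescales the prefactor $\kappa_c$, it does not move the exponent. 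Summing the tail, $\text{top}^{(G)}_{k}(c)\sim t\,\kappa_c\,k^{-(\gamma-1)}/(\gamma-1)$, so for any increasing $k(t)$ with $\text{top}^{(G)}_{k(t)}(B)\to\infty$ I obtain
\[
\lim_{t\to\infty}\frac{\text{top}^{(G)}_{k(t)}(R)}{\text{top}^{(G)}_{k(t)}(B)}=\frac{\kappa_R}{\kappa_B}>0,
\]
which contradicts the glass-ceiling requirement that this limit equal $0$.

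The main obstacle I expect is analytic rather than conceptual: turning the mean-field recurrence into a statement that holds \emph{uniformly in} $k$ as $k(t)\to\infty$. I must (i) justify that the fluctuations of $N_k^c(t)$ around $t\,n_k^c$ are lower order uniformly over the range of $k$ that $k(t)$ explores, and (ii) control the tail sums $\sum_{j\ge k(t)}$ so that the ratio of the two tails genuinely converges to $\kappa_R/\kappa_B$ and stays bounded away from $0$, not merely for each fixed $k$. The delicate bookkeeping is confirming that the restart normalization $Z(m^*)$ and the creation probabilities $q_c$ enter only through $A_0$ and the prefactors $\kappa_c$, and never break the color-symmetry of the exponent. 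Since the generalized model of Section~5 subsumes this case, I would ultimately present the argument as the specialization in which the homophily level on the rich-get-richer channel is set so that no rejection occurs, and read off the conclusion from the Section~5 corollaries.
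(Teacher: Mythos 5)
Your proposal is correct and takes essentially the same route as the paper: the paper proves Lemma \ref{lem_select_opp} (via Corollary \ref{selective homophily on opportunity}) by noting that selective homophily on equal-chance is the GSHM specialization $\rhop_R = \rhop_B = 1$, $\rhou_R = \rhou_B$, under which $C_{R,1} = C_{B,1}$, so both colors share the same power-law exponent and the third case of Corollary \ref{top_k} rules out the tail glass-ceiling effect. Your master-equation derivation showing that the coefficient of $k$ is color-blind while the homophily-weighted equal-chance term enters only the additive constant (hence only the prefactor $\kappa_c$) is precisely the content of Theorem \ref{group size power-law} restricted to this case, and your closing plan to read the conclusion off the Section 5 corollaries is exactly what the paper does.
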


Therefore, applying homophily on the rich-get-richer mechanism is a necessary condition for the tail glass-ceiling effect. Then a natural question to ask is: can the general homophily give rise to the tail glass-ceiling effect? The answer is yes, and the reasoning again follows from Theorem Corollary \ref{top_k}. Now, the second question is: can the general homophily give rise to the chasm? The answer here is no.

\begin{lemma}\label{lem_general}
A network sequence $\left\{\mathcal{N}(M\cup G, t, \Theta)\right\}$ generated by SHM with the general mechanism do not exhibit chasm effect.
\end{lemma}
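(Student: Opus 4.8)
The plan is to show that under general homophily the limiting fraction of red groups at each size is a \emph{monotone} function of size, which directly contradicts the chasm definition (an increase on $k<K$ followed by a decrease on $k>K$). Since $r_k^{(G)} = \mu_k/(1+\mu_k)$ with $\mu_k := n_{R,k}/n_{B,k}$ is an increasing function of the red-to-blue size-$k$ density ratio $\mu_k$, it suffices to prove that $\mu_k$ is monotone in $k$. First I would write the mean-field master equation for $N_{c,k}(t)$, the expected number of color-$c$ groups of size $k$ at time $t$: each connection-growth step moves one group from size $k-1$ to size $k$ with rate equal to the effective per-group attachment probability $A_c(k)$, and each group-creation step injects a fresh size-$1$ group whose color matches the selecting member. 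Under the linear-growth scaling $N_{c,k}(t)\sim t\,n_{c,k}$ (whose existence and the power-law tails I would import from the Section~5 corollaries for GSHM rather than re-derive), this becomes a first-order recursion in $k$ for the densities $n_{c,k}$.

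The crux is the form of $A_c(k)$, and this is exactly where general homophily differs from selective homophily on rich-get-richer. General homophily is the special case of GSHM in which the homophily level applied to the rich-get-richer pick and to the equal-chance pick coincide. Because the common rejection factor $\rho$ then multiplies the \emph{entire} pick-distribution uniformly, the restart-conditioned attachment weight of a size-$k$ group factorizes as
\begin{equation}
A_c(k) = \Lambda_c\,\phi(k), \qquad \phi(k) = \frac{\xi k}{d} + \frac{1-\xi}{g},
\end{equation}
where $\phi(k)$ is a linear function of $k$ whose coefficients do \emph{not} depend on color ($d$ and $g$ are the limiting per-time growth rates of total group-degree and of the number of groups), while $\Lambda_c$ is a color-dependent constant collecting the homophily multiplier and the red/blue member mix. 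Thus red and blue groups share the identical size-dependence $\phi(k)$ and differ only by the scalar $\Lambda_R$ versus $\Lambda_B$. By contrast, selective homophily on rich-get-richer multiplies only the $\xi k$ term, producing color-dependent slopes over a shared color-blind intercept, hence genuinely different effective exponents; that asymmetry is what opens the chasm in Lemma~\ref{lem_select_rich}, and its absence here is what closes it.

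Given the factorization, I would solve the recursion to get $n_{c,k}/n_{c,k-1} = a_c\,\phi(k-1)/\big(1 + a_c\,\phi(k)\big)$ with $a_c := (1-\eta)\Lambda_c$, and then form $\mu_k$. Writing $a = a_R$, $b = a_B$, a direct computation yields
\begin{equation}
\frac{\mu_k}{\mu_{k-1}} = \frac{a}{b}\cdot\frac{1 + b\,\phi(k)}{1 + a\,\phi(k)} =: f\big(\phi(k)\big), \qquad f(x) - 1 = \frac{a-b}{b\,(1 + a x)}.
\end{equation}
For every $x>0$ the sign of $f(x)-1$ equals the sign of $a-b$, so $\mu_k$ is strictly increasing in $k$ if $\Lambda_R>\Lambda_B$, strictly decreasing if $\Lambda_R<\Lambda_B$, and constant if they are equal (in the minority/homophilous regime one expects the decreasing case, consistent with the glass-ceiling effect, but monotonicity holds unconditionally). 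In every case $r_k^{(G)}$ is monotone in $k$, so no finite $K$ can separate a rising phase from a falling phase, and the chasm definition fails.

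The step I expect to be the real obstacle is not the sign analysis of $f$, which is elementary once the factorization is established, but rigorously pinning down $\Lambda_c$, $d$, $g$, and the member-color mix as genuine limits of the coupled, evolving network statistics and justifying $N_{c,k}(t)/t \to n_{c,k}$ strongly enough to transfer monotonicity to the limit. I would discharge this entirely by citing the concentration and convergence results proved for GSHM in Section~5, so that the content of the present argument reduces to the identity $A_c(k)=\Lambda_c\,\phi(k)$ together with the monotonicity of $f$.
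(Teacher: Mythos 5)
Your proposal is correct and takes essentially the same route as the paper: your factorization $A_c(k)=\Lambda_c\,\phi(k)$ is precisely the paper's observation that general homophily forces $C_{R,2}=\gamma C_{R,1}$ and $C_{B,2}=\gamma C_{B,1}$ for a common $\gamma=(1-\xi)/(\eta\xi)>0$, and your computation that $\mu_k/\mu_{k-1}-1=(a-b)/\bigl(b(1+a\phi(k))\bigr)$ has constant sign is the inlined special case of the criterion in Theorem \ref{lemma_bump}, into which the paper substitutes the proportionality to get $k^*=1-\gamma<2$ and hence monotonicity of $G_k(R)/G_k(B)$. The only minor difference is that by arguing monotonicity directly instead of citing Theorem \ref{lemma_bump} you avoid its hypothesis $C_{R,1}<C_{B,1}$, covering all sign cases uniformly.
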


So far, we have shown that having the selective homophily model is necessary for both the tail glass-ceiling effect and the chasm effect. We have also seen that selective homophily on the equal-chance mechanism does not help create the glass-ceiling effect either. Moreover, having the same level of homophily on the rich-get-richer mechanism and the equal-chance mechanism would eliminate the chasm effect. It seems like the equal-chance mechanism is not useful in creating the glass-ceiling and the chasm effects (Figure \ref{corollary_fig}-(d)). However, this is not true. The following corollary shows that although the homophily on equal-chance mechanism is not necessary for either effect to emerge, the equal-chance mechanism itself is needed to have the chasm effect. 
\begin{lemma}\label{lem_no_homo}
A sequence of networks $\mathcal{N}(M\cup G, t, \Theta)$ generated by SHM without the equal-chance mechanism do not exhibit chasm effect.
\end{lemma}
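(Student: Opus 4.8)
Setting $\xi = 1$ removes the equal-chance branch, so every connection is formed through the rich-get-richer step under selective homophily: a joining member $m^*$ repeatedly picks a group with probability proportional to its size, joins immediately on a color match, and on a color mismatch accepts with probability $\rho$ (restarting otherwise). The plan is to show that this makes the growth of each color a \emph{pure} preferential-attachment process, so that each color attains a clean power-law size distribution and the red fraction $r_k^{(G)}$ is a monotone function of $k$, which directly contradicts the chasm definition. The conceptual reason is that the equal-chance mechanism is the only one contributing a size-\emph{independent} inflow that is filtered by homophily differently from the rich-get-richer inflow; removing it makes small and large groups obey the same multiplicative dynamics, eliminating the small-$k$/large-$k$ asymmetry that a chasm requires.

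First I would compute the effective attachment rate. Writing $D_R(t)$ and $D_B(t)$ for the total sizes (degrees) of red and blue groups, the restart mechanism renormalizes the acceptance mass, so a red member joins a given red group of size $k$ with probability $k/(D_R+\rho D_B)$ and a given blue group of size $k$ with probability $\rho k/(D_R+\rho D_B)$; symmetrically a blue member uses the denominator $\rho D_R + D_B$. Letting $P_R, P_B$ be the asymptotic probabilities that the joining member is red or blue, a color-$c$ group of size $k$ gains a member at a rate proportional to $k\,\beta_c$, where
\[
 \beta_R = \frac{P_R}{D_R+\rho D_B} + \frac{\rho P_B}{\rho D_R + D_B},
 \qquad
 \beta_B = \frac{\rho P_R}{D_R+\rho D_B} + \frac{P_B}{\rho D_R + D_B}.
\]
The key structural point is that $\beta_c$ is independent of $k$: with no equal-chance term there is no size-independent contribution, so the per-group growth rate is exactly linear in $k$.

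Next I would invoke the convergence of the aggregate quantities $D_R/t$, $D_B/t$, $P_R$, $P_B$ to constants (the fluid/mean-field argument of Section 5, which I take as given), so that $\beta_R,\beta_B$ are asymptotically constant. Writing $N_k^{(c)}(t)$ for the number of color-$c$ groups of size $k$ and seeking the self-similar solution $N_k^{(c)}(t)\sim t\,n_k^{(c)}$, the master equation
\[
 \frac{dN_k^{(c)}}{dt}
 = \beta_c\big((k-1)N_{k-1}^{(c)} - k N_k^{(c)}\big) + \lambda_c\,\delta_{k,1},
\]
with $\lambda_c$ the color-$c$ group-creation rate from the Group Growth step, has the stationary solution obtained by the standard telescoping recursion, $n_k^{(c)} = n_1^{(c)}\prod_{j=2}^{k}\frac{\beta_c(j-1)}{1+\beta_c j}$, which by Stirling is asymptotically $n_k^{(c)}\sim A_c\,k^{-\gamma_c}$ with $\gamma_c = 1 + 1/\beta_c$ depending on the color only through $\beta_c$.

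Finally I would form the red fraction at size $k$,
\[
 r_k^{(G)} = \frac{n_k^{(R)}}{n_k^{(R)}+n_k^{(B)}}
 = \frac{1}{1 + (A_B/A_R)\,k^{\,\gamma_R-\gamma_B}},
\]
which is strictly monotone in $k$ (increasing if $\gamma_R<\gamma_B$, decreasing if $\gamma_R>\gamma_B$, constant if $\gamma_R=\gamma_B$). A monotone sequence cannot increase on $\{k<K\}$ and then decrease on $\{k>K\}$ for any finite $K$, so by the definition of the chasm effect no such $K$ exists and the sequence does not exhibit a chasm. The main obstacle is the third step: rigorously justifying the self-similar power-law solution of the \emph{coupled} master equations and the convergence of the shared normalizers $D_R,D_B$, since the two colors interact only through these denominators. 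Granting the Section 5 fluid-limit results this reduces to the routine telescoping computation above; without them it would additionally require a concentration argument showing that $N_k^{(c)}(t)/t$ concentrates around $n_k^{(c)}$.
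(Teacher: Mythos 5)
Your proposal follows essentially the same route as the paper: the paper proves this lemma as a corollary of its general GSHM analysis, observing that removing the equal-chance mechanism amounts to setting $\xi=1$, which forces the additive constants $C_{R,2}=C_{B,2}=0$ in the group-size recursion of Theorem \ref{group size power-law}, so the chasm threshold of Theorem \ref{lemma_bump} becomes $k^*=1<2$ and no chasm can occur. Your $\beta_R,\beta_B$ are exactly the paper's $C_{R,1},C_{B,1}$ (up to common factors), your telescoping recursion $n_k^{(c)}=n_{k-1}^{(c)}\tfrac{\beta_c(k-1)}{1+\beta_c k}$ is the paper's recursion with the additive term removed, and your structural observation --- that the equal-chance mechanism is the sole source of a size-independent inflow, so its removal makes the dynamics purely multiplicative --- is precisely the content of the paper's one-line computation. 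The only difference is packaging: you rederive the special case from scratch rather than instantiating the general theorems.

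One step, as written, does not hold up: you deduce strict monotonicity of $r_k^{(G)}$ for all $k$ from the asymptotic form $n_k^{(c)}\sim A_c k^{-\gamma_c}$, writing $r_k^{(G)}=\bigl(1+(A_B/A_R)k^{\gamma_R-\gamma_B}\bigr)^{-1}$. But asymptotic equivalence controls neither small $k$ nor monotonicity at large $k$; the chasm definition concerns the exact sequence at every finite $k$, so an increase-then-decrease shape at moderate $k$ is not a priori excluded by tail behavior alone. The fix is immediate from the exact product you already wrote down: the ratio of consecutive terms is
\begin{equation}
\frac{n_k^{(R)}/n_k^{(B)}}{n_{k-1}^{(R)}/n_{k-1}^{(B)}}
=\frac{\beta_R(1+\beta_B k)}{\beta_B(1+\beta_R k)}
=\frac{\beta_R+\beta_R\beta_B k}{\beta_B+\beta_R\beta_B k},
\end{equation}
which is uniformly $>1$, $<1$, or $=1$ for every $k\ge 2$ according as $\beta_R>\beta_B$, $\beta_R<\beta_B$, or $\beta_R=\beta_B$; hence $n_k^{(R)}/n_k^{(B)}$, and with it $r_k^{(G)}$, is monotone in $k$ exactly, and no finite $K$ as in the chasm definition exists. (This uniform comparison is the same fact the paper expresses by $k^*=1$: the crossing point of the per-step ratio never enters the range $k\ge 2$.) With that substitution your argument is complete, modulo the fluid-limit convergence you correctly flag as inherited from Section 5 --- the same dependence the paper's own proof has.
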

Therefore, the equal-chance mechanism is also a necessary mechanism in creating both effects. We conclude the above findings in Table \ref{table: conditions}.

\begin{thm}
The selective homophily on rich-get-richer mechanism and the equal-chance mechanism are both necessary mechanisms for networks generated through the SHM to exhibit both the tail glass-ceiling effect and the chasm effect.
\end{thm}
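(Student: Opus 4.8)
The plan is to derive this necessity statement as a direct logical combination of Lemmas \ref{lem_select_rich}--\ref{lem_no_homo}, treating it as an elimination argument over the finite space of SHM configurations rather than as a fresh computation. Since SHM fixes every other part of the dynamics, the only design choices that vary are (i) to which of the two group-selection steps homophily is applied --- rich-get-richer only (selective on rich-get-richer), equal-chance only (selective on equal-chance), or both (general homophily) --- and (ii) whether the equal-chance step is ever used, i.e. whether $\xi<1$ or $\xi=1$. I would first make this enumeration explicit, so that proving necessity reduces to showing that every configuration other than ``selective homophily on rich-get-richer together with an active equal-chance step'' fails to produce both effects simultaneously. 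Note that only the negative direction matters for necessity, so the bulk of the work is contrapositive rather than constructive.

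For the necessity of selective homophily on the rich-get-richer mechanism, I would argue by contraposition over the remaining homophily placements. If homophily acts only on the equal-chance step, Lemma \ref{lem_select_opp} shows the sequence has no tail glass-ceiling effect, so it cannot exhibit both effects. If homophily acts on both steps (general homophily), Lemma \ref{lem_general} shows the sequence has no chasm effect, again precluding both; this is the case that forces the \emph{selective} qualifier, since homophily must be present on rich-get-richer yet absent from equal-chance. The only remaining placement --- no homophily anywhere, obtained by setting $\rho=1$ --- collapses to the homophily-free dynamics, which by \cite{avin2015homophily} yields no glass-ceiling; I would flag and dispatch this degenerate case in one line. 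Hence any SHM sequence exhibiting both effects must place homophily selectively on rich-get-richer, and Lemma \ref{lem_select_rich} certifies that this placement is not vacuous, exhibiting a range of $\Theta$ realizing both effects.

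For the necessity of the equal-chance mechanism, I would invoke Lemma \ref{lem_no_homo}: removing the equal-chance step ($\xi=1$) destroys the chasm effect regardless of homophily placement, so an active equal-chance step ($\xi<1$) is required for the chasm, and therefore for both effects, to appear. Combining the two contrapositive arguments --- homophily must sit selectively on rich-get-richer, and the equal-chance step must be present --- yields the theorem, matching the summary in Table \ref{table: conditions}.

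I do not expect a genuine analytic obstacle here, since all the quantitative content lives in the four lemmas, whose proofs are deferred to the corollaries of Section 5. The one place demanding care is making the case enumeration genuinely exhaustive: I must confirm that the table's four columns cover every instantiation of SHM with respect to homophily placement and the presence of equal-chance, that the degenerate $\rho=1$ and $\xi=1$ settings are each handled, and that ``necessary'' is read throughout as the contrapositive ``absence of the mechanism precludes the simultaneous presence of both effects,'' rather than as an isolated claim about a single parameter in isolation.
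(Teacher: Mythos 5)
Your proposal is correct and takes essentially the same route as the paper: the theorem is established exactly as the case elimination summarized in Table \ref{table: conditions}, i.e.\ by combining Lemma \ref{lem_select_rich} (non-vacuousness), Lemma \ref{lem_select_opp} (homophily on equal-chance only kills the glass ceiling), Lemma \ref{lem_general} (general homophily kills the chasm), and Lemma \ref{lem_no_homo} ($\xi=1$ kills the chasm). Your one addition---explicitly dispatching the homophily-free case $\rho=1$---is a reasonable exhaustiveness check that the paper leaves implicit; it also follows from Corollary \ref{top_k}, since $\rho=1$ forces $C_{R,1}=C_{B,1}$ and hence $\beta(R)=\beta(B)$.
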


Intuitively, the equal-chance mechanism gives small groups chances to grow, and having homophily on rich-get-richer mechanism allows majorities to grow large groups. Under the selective homophily on rich-get-richer mechanism, because there are more majority groups, minorities are less likely to join groups through the rich-get-richer mechanism. Instead, they grow smaller groups. In a long run, there are more minority groups with middle sizes; when there is no equal-chance mechanism, small groups do not have the chance to grow, and therefore the network does not have the chasm effect; under the selective homophily on the equal-chance mechanism, because there is no homophily on rich-get-richer mechanism, majorities do not have the chance to grow large groups, and therefore, there is no glass-ceiling effect; under the general homophily mechanism, small blue groups grow no less than small red groups, and thus do not exhibit the chasm effect. If we allow different homophily levels for the majority and the minority, it is possible for small red groups to grow faster than blue groups. We will see more in the next section. The interaction among the mechanisms in the real world is undoubtedly more complex, but we hope the above intuition could offer a more profound understanding of the driving mechanisms of social disparities.

\section{Hegemony in General Homophilous Networks}
We now extend the SHM with selective homophily on rich-get-richer mechanism to a new model that serves two purposes: first, it can still capture both the glass-ceiling effect and the chasm effect; second, it allows more degrees of freedom, and therefore can reproduce real social networks. In this section, we introduce a generalized model, prove the sufficient and necessary conditions for the two effects to happen, and reproduce real datasets with the generalized model.

For clarity, we list all notations that are used in our theory presentation in Table \ref{notations}.

\subsection{Generalized homophily model}
The previous analysis on SHM implies that the level of homophily plays an important role in large blue groups and small red groups' faster growth rate than the other affiliation. We therefore introduce a new \emph{generalized selective homophily model (GSHM)} with two sets of new parameters: $\rho_r^{(u)}$ ($\rho_b^{(u)}$) captures the level of \emph{red (blue) selective homophily on equal-chance mechanism}; $\rho_r^{(p)}$ ($\rho_b^{(p)}$) captures the level of red \emph{(blue) selective homophily on rich-get-richer mechanism}.

We now present the generalized model in details. At time $t = 2$, we initialize the bipartite network with one red member connecting to a red group, and one blue member connecting to a blue group. At time $t$, the network grows as the following:
\begin{itemize}
\item \textbf{Member Growth}: 
\begin{itemize}
\item (\emph{minority-majority}) with probability $\alpha$ ($0<\alpha <1$), a new member $m^*$ joins the network, and it is colored red with probability $r$ ($0<r\leq 1/2$); 
\item (\emph{rich-get-richer}) otherwise, with probability $1-\alpha$, we randomly pick an existing member $m^*$ with probability proportional to $\text{deg}(m^*)$.
\end{itemize}
\item \textbf{Group Growth}: with probability $\eta$ ($0<\eta <1$), the member creates a group of color $c(m^*)$.
\item \textbf{Connection Growth}: with probability $1-\eta$, the member $m^*$ joins an existing group, according to the following two steps:
    \begin{itemize}
	\item (\emph{rich-get-richer}) with probability $\xi$, $m^*$ picks a group $g^*$ with probability proportional to $\text{deg}(g^*)$. If $c(m^*) = c(g^*)$, $m^*$ joins $g^*$ directly; otherwise, $m^*$ accepts the connection with probability $\rhop_{c(m^*)}$. If $m^*$ does not accept the connection, $m^*$ restarts from the beginning of the \emph{Connection Growth} until a new connection is built.    
    \item (\emph{equal-chance}) with probability $1-\xi$, $m^*$ uniformly picks a group $g^*$ at random. If $c(m^*) = c(g^*)$, $m^*$ joins $g^*$ directly; otherwise, $m^*$ accepts the connection with probability $\rhou_{c(m^*)}$. If $m^*$ does not accept the connection, $m^*$ restarts from the beginning of the \emph{Connection Growth} until a new connection is built. 
\end{itemize}
\end{itemize}

Under GSHM, when a user decides on whether to join a selected group, the probability of accepting depends on both the user's affiliation and the mechanism that the user uses to pick the group. We illustrate this probability specification in Figure \ref{model_plots} - (b). Note that all of the three homophilous mechanisms are special cases of the GSHM.

\begin{table}
    \begin{tabularx}{\textwidth}{p{0.22\textwidth}X}
    \toprule
      { General notations: } \\
 	$c(x)$ & color of node $x \in M\cup G$.\\
 	$deg(x)$ & degree of node $x \in M\cup G$.\\
 
      { Group notations: } \\
 	$G_t(C)$ & number of groups in color $C$ at time $t$.\\
 	$G_{k,t}(C)$ & number of groups in color $C$ with size $k$ at time $t$; $G_k(C):=\lim_{t\rightarrow \infty} \frac{\mathbb{E}\left(G_{k,t}(C)\right)}{t}$.\\
      $r^{(G)}_t(C)$ & group growth rate of color $C$ at time $t$; that is, $r^{(G)}_t(C):= \frac{G_t(C)}{t}$.\\
       { Member notations: } \\  
    $M_t(C)$ &number of $C$ members at time $t$.\\
    $M_{k,t}(C)$ &number of members in color $C$ with degree $k$ at time $t$; $M_{k}(C):=\lim_{t\rightarrow \infty}\frac{\mathbb{E}\left(M_{k,t}(C)\right)}{t}$.\\
    $M^{(k)}_t(C)$ & number of members in color $C$ that are contained in groups of size $k$.\\
    $r^{(M,G)}_{k,t}(C)$ & ratio of expected number of members in color $C$ that are contained in groups of size $k$ at time $t$.\\
    $r^{(M,G)}_{k,t}(C_1, C_2)$ & ratio of expected members of color $C_1$ in groups of size $k$ with color $C_2$; $r^{(M,G)}_{k}(C_1, C_2) = \lim_{t\rightarrow \infty} r^{(M,G)}_{k,t}(C_1, C_2)$.\\
    	{ Edges notations: } \\ 
    	$E^{(G)}_t(C)$ & sum of group sizes in color $C$ at time $t$; $r^{(E, G)}_t(R): = \frac{E^{(G)}(R)}{t}$.\\
    	$E^{(M)}_t(C)$ & sum of member degrees in color $C$ at time $t$; $r^{(E, M)}_t(R): = \frac{E^{(M)}(R)}{t}$.\\
      \bottomrule
     \end{tabularx}
     \caption{Notation}\label{notations}
    \end{table}
\raggedbottom
We now mathematically characterize the degree distributions of the two types of nodes in GSHM and provide the sufficient and necessary conditions for the glass-ceiling effect and the chasm effect to happen.
\subsection{Proof of convergence to limit degree distributions}
\subsubsection{Group-size distributions}
We first investigate the size distributions of the red and blue groups in a bipartite network generated by the GSHM, and show that the number of red groups of size $k$, $G_k(R)$ and the number of blue groups of size $k$, $G_k(B)$ follow power laws under the GSHM model.

\begin{thm}\label{group size power-law}  (proof in appendix \ref{proof of thm 5.1}) Let $\{N(M\cup G, t, \Theta)\}$ be a sequence of networks produced by the GSHM model. Assume that $\rhop_R, \rhop_B > 0$. The red group-size distribution $G_k(R)$ and the blue group-size distribution $G_k(B)$ asymptotically follow the power law distributions; specifically, as $t$ goes to infinity,
\begin{align}
G_k(R)\propto k^{-\beta(R)}, \,\,\,\,\,\,
G_k(B)\propto k^{-\beta(B)},
\end{align}
with $ \beta(R) = 1 + \frac{1}{C_{R,1}}$ and $\beta(B) = 1 + \frac{1}{C_{B,1}}$, where
\begin{align}
    C_{R,1} & := \frac{r (1-\eta) \xi }
    			{1-(1-\rhop_R)\xi (1-\alpha^*)-(1-\rhou_R)(1-\xi) (1-r)}
    			+ \frac{(1-r) (1-\eta)\rhop_B\xi }
    				{1-(1-\rhop_B)\xi \alpha^*-(1-\rhou_B)(1-\xi)r},
\end{align}
\begin{align}
	C_{B,1} & := \frac{(1-r) (1-\eta) \xi }
    			{1-(1-\rhop_B)\xi \alpha^* - (1-\rhou_B)(1-\xi) r}
    			+ \frac{r (1-\eta)\rhop_R\xi }
    				{1-(1-\rhop_R)\xi (1-\alpha^*) - (1-\rhou_R)(1-\xi) (1-r)},
\end{align}
{where $\alpha^*$ denotes the limit of the sum of red group sizes over the sum of all group sizes as $t$ goes to infinity}, and it is the unique solution in $(0,1)$ satisfying
    	\begin{equation}
    		\label{eq_alpha_star}
    	\alpha^* = r \eta + \frac{r (1-\eta) (\xi \alpha^* + (1-\xi)r) }
    			{1-(1-\rhop_R)\xi (1-\alpha^*)-(1-\rhou_R)(1-\xi) (1-r)}
    			+ \frac{(1-r) (1-\eta) (\rhop_B \xi \alpha^* + \rhou_B (1-\xi)r)}
    				{1-(1-\rhop_B)\xi \alpha^*-(1-\rhou_B)(1-\xi)r }.
    	\end{equation}
\end{thm}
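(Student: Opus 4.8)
The plan is to derive and solve a \emph{master (rate) equation} for the expected number of red groups of each fixed size, after first pinning down the deterministic limits of the global quantities that drive attachment. The argument has three layers: (i) identify the long-run fractions of red members, red groups, and red group-edges, which collapse the random attachment probabilities to explicit constants; (ii) write the one-step expected increment of $G_{k,t}(R)$ and pass to the $t\to\infty$ limit $G_k(R)$; (iii) solve the resulting recurrence in $k$ and read off the power-law tail. The blue case is identical with the roles of red/blue and the corresponding parameters swapped.

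First I would establish the two invariants that remove the global randomness. A red member-edge is created exactly when the selected member is red, and the selected member is red with probability $\alpha r+(1-\alpha)\,E^{(M)}_t(R)/t$; since $E^{(M)}_t(R)/t$ tends to this same selection probability, a one-line self-consistency argument forces the selected member to be red with limiting probability exactly $r$. Because a newly created group inherits the color of its (red-with-probability-$r$) creator, the fraction of red groups likewise tends to $r$. With these, the per-round acceptance probability of a red member is
\[
P_R := 1-(1-\rhop_R)\xi(1-\alpha^*)-(1-\rhou_R)(1-\xi)(1-r),
\]
and symmetrically $P_B := 1-(1-\rhop_B)\xi\alpha^*-(1-\rhou_B)(1-\xi)r$ for a blue member, where $\alpha^*=\lim_t E^{(G)}_t(R)/t$ is the limiting fraction of red group-edges; these are precisely the denominators in the statement. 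Counting the rate at which red group-edges are produced (creation by a red member, plus final acceptance of a red group by a red or a blue member, each divided by the relevant acceptance probability because of the restart-until-accept loop) yields exactly the fixed-point equation \eqref{eq_alpha_star}; I would then show its right-hand side is a monotone rational function of $\alpha^*$ meeting the diagonal exactly once in $(0,1)$, giving existence and uniqueness of $\alpha^*$.

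Next I would write the master equation. Since the total group-edges are $\approx t$ and the total number of groups is $\approx \eta t$, the probability that a fixed red group of size $k$ gains an edge in one step is, to leading order, $\tfrac1t(C_{R,1}\,k+B_R)$, where the coefficient of $k$ comes solely from the size-proportional (rich-get-richer) term and equals precisely $C_{R,1}$, while the size-independent part $B_R$ comes from the equal-chance term. Hence
\[
\mathbb{E}\!\left[G_{k,t+1}(R)-G_{k,t}(R)\mid\mathcal F_t\right]=\tfrac1t\big(C_{R,1}(k-1)+B_R\big)G_{k-1,t}(R)-\tfrac1t\big(C_{R,1}k+B_R\big)G_{k,t}(R)+\eta r\,\delta_{k,1}.
\]
Using a standard lemma on recurrences of the form $a_{t+1}=(1-\tfrac{b}{t})a_t+c_t$, together with an Azuma/bounded-difference step so the random global fractions may be replaced by $\alpha^*$ and $r$ with vanishing error, I would prove by induction on $k$ that $G_{k,t}(R)/t\to G_k(R)$, where the limits satisfy
\[
\frac{G_k(R)}{G_{k-1}(R)}=\frac{C_{R,1}(k-1)+B_R}{1+C_{R,1}k+B_R}=\frac{(k-1)+B_R/C_{R,1}}{k+(1+B_R)/C_{R,1}}.
\]
Telescoping this as a ratio of Gamma functions and applying $\Gamma(k+a)/\Gamma(k+b)\sim k^{a-b}$ gives $G_k(R)\propto k^{-1-1/C_{R,1}}$, i.e.\ $\beta(R)=1+1/C_{R,1}$; notably the additive term $B_R$ cancels and the exponent depends only on the size-proportional coefficient $C_{R,1}$.

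The hardest part will be the convergence/concentration layer rather than the algebra. Because of the rejection-and-restart loop, the effective attachment probability to any group is a \emph{ratio} whose denominators $P_R,P_B$ and whose normalizers (total group-edges and total group count) are themselves random and couple the per-group size process to the global edge fractions; I must show these fractions concentrate around $\alpha^*$ and $r$ quickly enough that substituting the limits into the master equation incurs only $o(1)$ error, and that the induction on $k$ is uniform enough to yield the stated power law rather than merely a per-$k$ limit. Establishing the single-crossing uniqueness of $\alpha^*$ in $(0,1)$, where $\alpha^*$ appears inside both denominators, is the other place that requires genuine (though elementary) care.
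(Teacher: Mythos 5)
Your proposal is correct and follows essentially the same route as the paper's proof: your layer (i) is the paper's Lemma \ref{lemma_pre_convergence} (martingale/Azuma concentration of $r^{(E,M)}_t(R)\to r$, $r^{(G)}_t(R)\to r\eta$, and the fixed-point characterization $r^{(E,G)}_t(R)\to\alpha^*$, where the paper handles uniqueness via a cubic-polynomial sign argument and the self-referential drift via an almost-supermartingale argument rather than plain Azuma), and your layers (ii)--(iii) are exactly the paper's recurrence $\mathbb{E}(G_{k,t+1}(R)\vert\mathcal{F}_t)$ with $A_t(R)\to C_{R,1}$, $B_t(R)\to C_{R,2}$, followed by the Chung--Lu limit lemma and the telescoping/Gamma-function asymptotics in which the additive constant drops out of the exponent.
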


\subsubsection{Member degree distribution}
We can use similar strategies to show that the member degrees also follow power-laws with the same power.
\begin{thm}\label{member degree power-law}(proof in appendix \ref{proof of thm 5.2}) Let $\{N(M\cup G, t, \Theta)\}$ be a sequence of networks produced by GSHM. The red member-degree distribution $M_k(R)$ and the blue member-degree distribution $M_k(B)$ asymptotically follow the power law distributions with the same power; specifically, as $t$ goes to infinity,
\begin{align}
M_k(R)\propto k^{-\left(1+\frac{1}{1-\alpha}\right)}, \,\,\,\,\,\,
M_k(B)\propto k^{-\left(1+\frac{1}{1-\alpha}\right)}.
\end{align}
\end{thm}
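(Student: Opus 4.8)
The plan is to exploit the key structural fact that, at the level of member degrees, the GSHM dynamics collapse to a color-blind preferential-attachment process. First I would observe that in every time step exactly one member $m^*$ acts — either a freshly arrived member (probability $\alpha$) or an existing member chosen with probability proportional to its degree (probability $1-\alpha$) — and that member always gains exactly one new connection, whether it creates a group or joins an existing one. Crucially, the homophily acceptance steps and the rich-get-richer-versus-equal-chance split only determine \emph{which} group receives the edge, and the repeated ``restart until a connection is built'' rule still terminates in precisely one new connection; neither ever changes whether the acting member's degree increases. Hence the evolution of any member's degree is governed solely by $\alpha$ and the degree-proportional selection rule, independently of color and of the homophily parameters $\rhop$, $\rhou$, $\xi$.

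Next I would set up the mean-field rate equation for $\mathbb{E}[M_{k,t}(C)]$. Since each step adds exactly one edge, the total member degree at time $t$ equals $t+O(1)$, so an existing member of degree $k$ is selected to act with probability $(1-\alpha)\,k/t$ up to lower-order terms. This yields, for $k\ge 2$, a gain term from degree-$(k-1)$ members of color $C$ and a loss term from degree-$k$ members of color $C$, together with a source term $\alpha\, r_C$ at $k=1$ accounting for new arrivals of color $C$ (where $r_R=r$, $r_B=1-r$). Writing $M_k(C)=\lim_{t\to\infty}\mathbb{E}[M_{k,t}(C)]/t$, the stationary recurrence becomes
\begin{equation*}
M_1(C)=\frac{\alpha\, r_C}{2-\alpha}, \qquad M_k(C)=\frac{(1-\alpha)(k-1)}{1+(1-\alpha)k}\,M_{k-1}(C)\quad (k\ge 2).
\end{equation*}

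I would then solve this recurrence in closed form. Telescoping the ratio and applying the standard Gamma-function (Stirling) asymptotic to $\prod_{j=2}^{k}\frac{j-1}{\,j+1/(1-\alpha)\,}\propto \frac{\Gamma(k)}{\Gamma(k+1/(1-\alpha)+1)}$ gives $M_k(C)\propto k^{-\left(1+\frac{1}{1-\alpha}\right)}$. The exponent is identical for red and blue precisely because the multiplicative ratio $M_k(C)/M_{k-1}(C)$ depends only on $\alpha$; color enters exclusively through the $k=1$ boundary value $M_1(C)$, which fixes only the proportionality constant and not the tail slope. This is what yields the ``same power'' claim.

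The main obstacle, exactly as in the companion group-size result (Theorem \ref{group size power-law}), is the rigorous justification that $\mathbb{E}[M_{k,t}(C)]/t$ converges to the claimed limit $M_k(C)$ — that is, that the random counts concentrate around their means and that the mean-field recurrence is asymptotically exact rather than merely heuristic. I would discharge this with the same Azuma--Hoeffding / supermartingale concentration machinery used there: bound the one-step change of $M_{k,t}(C)$ by a constant, verify that the conditional expectation of the increment matches the rate equation up to $O(1/t)$ error, and invoke a standard stochastic-approximation argument to pass from the expected recurrence to an almost-sure limit. Once this convergence is established, the algebraic solution of the recurrence and the extraction of the exponent are routine.
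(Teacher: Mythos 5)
Your proposal is correct and takes essentially the same approach as the paper's proof: both observe that member-degree evolution is color-blind (homophily and the $\xi$-split only decide which group receives the edge), set up the conditional-expectation recurrence $\mathbb{E}\left(M_{k,t+1}(C)\mid\mathcal{F}_t\right)=M_{k,t}(C)\left(1-(1-\alpha)\frac{k}{t}\right)+M_{k-1,t}(C)(1-\alpha)\frac{k-1}{t}$ with source term $\alpha r_C$ at $k=1$, solve the stationary recurrence $M_1(C)=\frac{\alpha r_C}{2-\alpha}$, $M_k(C)=M_{k-1}(C)\frac{(1-\alpha)(k-1)}{1+(1-\alpha)k}$, and extract the exponent $1+\frac{1}{1-\alpha}$, which is identical for both colors since color enters only through the boundary value. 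Your extra steps (the explicit Gamma-function asymptotics and the concentration argument for passing from the mean-field recurrence to the limit) simply spell out what the paper leaves implicit.
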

\subsection{Conditions for glass ceiling and chasm in hegemony}
\subsubsection{Tail glass-ceiling}

The existence of tail glass-ceiling follows directly from Theorem \ref{group size power-law}. 
\begin{cor}\label{top_k}
Let $\{N(M\cup G, t, \Theta)\}$ be a sequence of networks produced by GSHM. Let $\beta(R),\beta(B)$ be as defined in Theorem \ref{group size power-law}. Then
\begin{itemize}
\item when $\beta(R) < \beta(B)$, $\{N(M\cup G, t, \Theta\}$ exhibits tail glass-ceiling effect against the blue groups.
\item when $\beta(R) > \beta(B)$, $\{N(M\cup G, t, \Theta\}$ exhibits tail glass-ceiling effect against the red groups.
\item when $\beta(R) = \beta(B)$ or $\xi =0$, $\{N(M\cup G, t, \Theta\}$ the network does not exhibit tail glass-ceiling effect.
\end{itemize}
\end{cor}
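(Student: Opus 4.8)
The plan is to read off the corollary directly from the power-law densities supplied by Theorem \ref{group size power-law}, by converting the per-size densities $G_k(R), G_k(B)$ into the tail counts $\text{top}^{(G)}_k(R), \text{top}^{(G)}_k(B)$ and comparing their rates of decay in $k$. Since $\beta(C) = 1 + 1/C_{C,1} > 1$ whenever $C_{C,1} > 0$ (which holds as soon as $\xi > 0$ and $\rhop_R, \rhop_B > 0$), each tail sum converges, and a standard sum-to-integral estimate gives
\begin{equation}
\sum_{j \geq k} G_j(C) \;\propto\; \sum_{j\geq k} j^{-\beta(C)} \;\sim\; \frac{1}{\beta(C)-1}\, k^{-(\beta(C)-1)} \qquad (k\to\infty).
\end{equation}
Thus the number of color-$C$ groups of size at least $k$ grows, to leading order, like $t\,k^{-(\beta(C)-1)}$, so that $\text{top}^{(G)}_k(C)$ decays in $k$ with exponent $\beta(C)-1$.

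First I would fix an increasing cutoff $k(t)$ slow enough that $k(t) = o\!\left(t^{1/(\beta(B)-1)}\right)$; then $\mathbb{E}\,\text{top}^{(G)}_{k(t)}(B) \asymp t\,k(t)^{-(\beta(B)-1)} \to \infty$, which secures the first requirement of the glass-ceiling definition. With the same cutoff the ratio of tails becomes
\begin{equation}
\frac{\text{top}^{(G)}_{k(t)}(R)}{\text{top}^{(G)}_{k(t)}(B)} \;\asymp\; \frac{\beta(B)-1}{\beta(R)-1}\; k(t)^{\,\beta(B)-\beta(R)},
\end{equation}
so the behavior is dictated entirely by the sign of $\beta(B)-\beta(R)$. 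When $\beta(R) > \beta(B)$ the exponent is negative and the ratio tends to $0$, i.e.\ glass-ceiling against red; when $\beta(R) < \beta(B)$ the exponent is positive, the ratio diverges, and the symmetric quantity (blue over red) tends to $0$, i.e.\ glass-ceiling against blue; when $\beta(R) = \beta(B)$ the ratio is bounded away from both $0$ and $\infty$, so neither affiliation is squeezed out and no glass-ceiling occurs. The degenerate case $\xi = 0$ I would treat separately: then every term in $C_{R,1}$ and $C_{B,1}$ carries a factor $\xi$ and vanishes, the group sizes no longer follow a genuine heavy-tailed power law, and the tail ratio stays bounded, so again no glass-ceiling appears.

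The hard part will be justifying the passage to the double limit rigorously: $G_k(C)$ is defined only as $\lim_{t\to\infty} \mathbb{E}(G_{k,t}(C))/t$ for each fixed $k$, whereas the glass-ceiling definition lets $k = k(t)$ grow with $t$. To make the asymptotic $\mathbb{E}\,\text{top}^{(G)}_{k(t)}(C) \asymp t\,k(t)^{-(\beta(C)-1)}$ legitimate I would need the convergence $\mathbb{E}(G_{k,t}(C))/t \to G_k(C)$ to be uniform in $k$ over the relevant range, together with a tail bound on $\sum_{j\geq k(t)} G_j(C)$ that controls the remainder; this is exactly where the rate at which $k(t)$ is allowed to grow (the constraint $k(t) = o(t^{1/(\beta(B)-1)})$) is used. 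If the definition is to be read for the random counts rather than their expectations, a second-moment or concentration estimate on $\text{top}^{(G)}_{k(t)}(C)$ would additionally be required to replace expectations by the actual counts; given the linear-in-$t$ growth of the numerator this should follow from a routine variance bound, but it is the one technical point that needs care beyond the elementary tail computation.
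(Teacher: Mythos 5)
Your argument for the first two bullets is essentially the paper's own: read the tail counts off Theorem \ref{group size power-law}, pick a growing cutoff $k(t)$, and let the sign of $\beta(B)-\beta(R)$ decide the limit of the ratio. Your execution is in fact more careful than the paper's. The paper sets $k(n)=n^{1/\beta(B)}$ and computes $\mathbb{E}[\text{top}^{(G)}_{k}(B)]= n\sum_{k'\geq k}G_{k'}(B)=O\bigl(n\cdot n^{-\beta(B)/\beta(B)}\bigr)=O(1)$, which silently treats the tail sum of $k'^{-\beta}$ as $\Theta(k^{-\beta})$; the correct order is $\Theta\bigl(k^{-(\beta-1)}\bigr)$, exactly your sum-to-integral estimate, so with the paper's cutoff the blue tail is really $\Theta\bigl(n^{1/\beta(B)}\bigr)\to\infty$, not $O(1)$. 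The conclusion survives either way, since the tail ratio is $\asymp k(n)^{\beta(B)-\beta(R)}=n^{(\beta(R)-\beta(B))/\beta(B)}\to 0$, which is what you compute; and your choice $k(t)=o\bigl(t^{1/(\beta(B)-1)}\bigr)$ makes both tails diverge, verifying the $\lim \text{top}^{(G)}_{k(t)}=\infty$ clause of the definition more cleanly. You also name the two issues the paper passes over in silence (uniformity of $\mathbb{E}(G_{k,t}(C))/t\to G_k(C)$ over $k\leq k(t)$, and expectations versus the random counts); you leave them unresolved, but so does the paper, so on that score you are even.

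The genuine gap is in your third bullet, specifically the $\xi=0$ case; note the paper's displayed proof covers only the first bullet (the second being symmetric) and says nothing about $\beta(R)=\beta(B)$ or $\xi=0$, so you attempted strictly more than the paper proves. Your $\beta(R)=\beta(B)$ argument is fine: the tail ratio is pinned between two positive constants uniformly in $k$, so no cutoff can drive it to $0$. But for $\xi=0$ the assertion that ``the tail ratio stays bounded'' is unjustified and false in general. When $\xi=0$ we have $C_{R,1}=C_{B,1}=0$, and the recursions (\ref{power_law_fit1})--(\ref{power_law_fit2}) give geometric laws $G_k(R)\propto q_R^{k}$ and $G_k(B)\propto q_B^{k}$ with $q_R=C_{R,2}/(1+C_{R,2})$ and $q_B=C_{B,2}/(1+C_{B,2})$. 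These rates depend on the color-specific homophily parameters and generically differ: for instance $\rhou_R=1$, $\rhou_B<1$ gives $C_{B,2}-C_{R,2}=\frac{1-\eta}{\eta}\cdot\frac{(1-r)(1-\rhou_B)}{1-(1-\rhou_B)r}>0$, hence $q_R<q_B$. Then $\text{top}^{(G)}_k(R)/\text{top}^{(G)}_k(B)\asymp (q_R/q_B)^k\to 0$, and a logarithmic cutoff $k(t)=\lceil \epsilon \log t/\log(1/q_B)\rceil$ with small $\epsilon$ keeps $\text{top}^{(G)}_{k(t)}(B)\to\infty$; so, at the same level of rigor as the rest of the argument (expected counts), the letter of the glass-ceiling definition is actually met. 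Absence of a heavy tail does not by itself preclude the effect as defined; ruling it out when $\xi=0$ requires either a different argument or a reading of the definition that excludes sub-polynomial tails, and your proposal as written does not supply one.
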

\begin{proof}
Assume $\beta(R) < \beta(B)$. Let $k(n):= n^{\frac{1}{\beta(B)}}$. Then
\begin{equation}
\mathbb{E}[\text{top}_k^G(B)]= n \sum_{k'\geq k} G_{k'}(B) = O\left(n\cdot n^{-\frac{\beta(B)}{\beta(B)}} \right) = O(1);
\end{equation}
\begin{equation}
\textrm{and, we have for an $\epsilon >0$,}\;\; 
\mathbb{E}[\text{top}_k^G(R)]= n \sum_{k'\geq k} G_{k'}(R) = \Omega \left(n\cdot n^{-\frac{\beta(R)}{\beta(B)}} \right) = \Omega\left(n^{1-\frac{\beta(R)}{\beta(B)}}\right) = \Omega(n^\epsilon).
\end{equation}
\end{proof}

\begin{cor}\label{selective homophily on opportunity}
	A network sequence $\left\{\mathcal{N}(M\cup G, t, \Theta)\right\}$ generated by SHM with selective homophily on equal-chance leads no tail glass-ceiling effect for groups.
\end{cor}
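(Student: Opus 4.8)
The plan is to realize SHM with selective homophily on the equal-chance mechanism as a special case of the GSHM and then invoke Theorem~\ref{group size power-law} together with Corollary~\ref{top_k}. In this variant of SHM, homophily is applied only when the group is selected through the equal-chance step, while a group selected through rich-get-richer is joined directly regardless of color. In the notation of the GSHM this corresponds exactly to setting $\rhop_R = \rhop_B = 1$ (cross-color connections are always accepted on the rich-get-richer branch, i.e.\ no homophily there) and $\rhou_R = \rhou_B = \rho$ (a common homophily level $\rho$ on the equal-chance branch). Since Theorem~\ref{group size power-law} applies to any GSHM with $\rhop_R, \rhop_B > 0$, and $1>0$, it yields $G_k(R)\propto k^{-\beta(R)}$ and $G_k(B)\propto k^{-\beta(B)}$ with $\beta(C)=1+1/C_{C,1}$.

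First I would substitute $\rhop_R=\rhop_B=1$ into the expressions for $C_{R,1}$ and $C_{B,1}$. Setting $(1-\rhop_R)=(1-\rhop_B)=0$ annihilates the $\xi(1-\alpha^*)$ and $\xi\alpha^*$ terms in all four denominators, so that $\alpha^*$ drops out entirely and the two constants reduce to
\begin{align*}
C_{R,1} &= \frac{r(1-\eta)\xi}{1-(1-\rhou_R)(1-\xi)(1-r)} + \frac{(1-r)(1-\eta)\xi}{1-(1-\rhou_B)(1-\xi)r},\\
C_{B,1} &= \frac{(1-r)(1-\eta)\xi}{1-(1-\rhou_B)(1-\xi)r} + \frac{r(1-\eta)\xi}{1-(1-\rhou_R)(1-\xi)(1-r)}.
\end{align*}
The two right-hand sides are termwise equal after swapping the order of the summands, so $C_{R,1}=C_{B,1}$ and hence $\beta(R)=\beta(B)$. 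Notably this identity holds for arbitrary $\rhou_R,\rhou_B$; only the absence of homophily on rich-get-richer is needed, matching the intuition that a color-blind rich-get-richer branch grows red and blue groups symmetrically in the tail.

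Finally I would apply Corollary~\ref{top_k}: its third case states that when $\beta(R)=\beta(B)$ the network exhibits no tail glass-ceiling effect, which is exactly the claim.

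I do not anticipate a serious obstacle, since once the correct GSHM specialization is identified the conclusion is a one-line symmetry observation. The only point requiring care is the translation from the SHM description (``join directly on rich-get-richer, apply homophily $\rho$ on equal-chance'') into GSHM parameters, and in particular checking that ``join directly'' corresponds to acceptance probability $1$ rather than $0$; getting this mapping backwards would instead reproduce the selective-homophily-on-rich-get-richer case of Lemma~\ref{lem_select_rich}, where the exponents genuinely differ.
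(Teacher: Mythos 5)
Your proposal is correct and follows essentially the same route as the paper: the paper's proof likewise identifies this SHM variant with the GSHM specialization $\rhop_R = \rhop_B = 1$, $\rhou_R = \rhou_B$, notes that this yields $C_{R,1} = C_{B,1}$ (hence $\beta(R)=\beta(B)$), and concludes via the third case of Corollary~\ref{top_k}. Your extra observation that the symmetry holds for arbitrary $\rhou_R, \rhou_B$ is a harmless strengthening, and your explicit verification of the hypothesis $\rhop_R,\rhop_B>0$ of Theorem~\ref{group size power-law} is a detail the paper leaves implicit.
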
	
\begin{proof}
	SHM with selective homophily on equal-chance implies $\rhou_R = \rhou_B$ and $\rhop_R = \rhop_B = 1$, which yields $C_{R,1} =  C_{B,1}$.
\end{proof}

\subsubsection{Chasm}
We are now ready to prove the first result on monotonicity of minority ratio change in homophilous networks, from a \emph{novel} analysis of the distribution. Suppose a network produced by the GSHM has tail glass-ceiling effect against red groups, the following theorem provides the necessary and sufficient condition for the chasm effect to happen.
\begin{thm}
    \label{lemma_bump} (proof in Appendix \ref{proof group bump})
	Following the same notations as in Theorem \ref{group size power-law}. Assume $C_{R,1} < C_{B,1}$. then the group ratio sequence $\{G_k(R)/G_k(B), k \geq 1\}$ has the chasm effect against red, if and only if $k^* > 2$, where
	\begin{equation}
		k^* := \frac{(1+C_{R,1})(1+C_{B,2}) - (1+C_{R,2})(1+C_{B,1})}{C_{R,1} - C_{B,1}},
	\end{equation}
	where
	\begin{align}
    C_{R,2} & := \frac{r (1-\eta) (1-\xi) \frac{1}{\eta} }
    			{1-(1-\rhop_R)\xi (1-\alpha^*)-(1-\rhou_R)(1-\xi) (1-r)}
    			+ \frac{(1-r) (1-\eta)\rhou_B(1-\xi) \frac{1}{\eta} }
    				{1-(1-\rhop_B)\xi \alpha^*-(1-\rhou_B)(1-\xi)r};\\
    C_{B,2} & := \frac{(1-r) (1-\eta) (1-\xi) \frac{1}{\eta}  }
    			{1-(1-\rhop_B)\xi \alpha^* - (1-\rhou_B)(1-\xi) r}
    			+ \frac{r (1-\eta)\rhou_R (1-\xi) \frac{1}{\eta}  }
    				{1-(1-\rhop_R)\xi (1-\alpha^*) - (1-\rhou_R)(1-\xi) (1-r)}.
\end{align}
	Moreover, when $k^* > 2$, the monotonicity of $\{G_k(R)/G_k(B), k \geq 1\}$ changes at $[k^*]$, which is the largest integer smaller than $k^*$.
\end{thm}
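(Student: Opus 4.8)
The plan is to reduce the monotonicity of the ratio sequence $\Phi(k):=G_k(R)/G_k(B)$ to the sign of a single \emph{affine} function of $k$, and then read off both the chasm criterion and the peak location. The essential input is the exact finite-$k$ recurrence that sits behind the power laws of Theorem~\ref{group size power-law}. Re-running the master-equation argument of that proof while retaining the size-independent term, a group of color $C$ and current size $j$ gains a member at a per-step rate whose numerator splits into a rich-get-richer piece proportional to $j$ (coefficient $C_{C,1}$, because the total of all group sizes grows like $t$) and an equal-chance piece that is size-independent (coefficient $C_{C,2}$, which carries the $\tfrac{1}{\eta}$ factor because the uniform choice is spread over the $\approx \eta t$ existing groups). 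Writing $\mathbb{E}[G_{k,t}(C)] = G_k(C)\,t + o(t)$ and balancing the gain from size $k-1$ against the loss from size $k$ yields, for every $k\ge 2$,
\[
G_k(C)\,(1 + C_{C,1}\,k + C_{C,2}) = (C_{C,1}(k-1) + C_{C,2})\,G_{k-1}(C),
\]
so that $G_k(C)/G_{k-1}(C) = (C_{C,1}(k-1)+C_{C,2})/(1+C_{C,1}k+C_{C,2})$. The group-creation step enters only the $k=1$ boundary value and cancels in every ratio below.

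Next I would analyze the discrete increment of $\Phi$. Since $\Phi(k)/\Phi(k-1)$ is the quotient of the two color-wise step ratios, $\Phi$ increases at step $k$ exactly when
\[
(C_{R,1}(k-1)+C_{R,2})(1+C_{B,1}k+C_{B,2}) > (C_{B,1}(k-1)+C_{B,2})(1+C_{R,1}k+C_{R,2}).
\]
Each side is quadratic in $k$, but the leading $C_{R,1}C_{B,1}k^2$ terms cancel, so the difference (left minus right) is affine in $k$, with slope proportional to $C_{R,1}-C_{B,1}$. Under the standing glass-ceiling-against-red hypothesis we have $C_{R,1}<C_{B,1}$ (equivalently $\beta(R)>\beta(B)$ by Corollary~\ref{top_k}), so the slope is negative and the affine function changes sign exactly once, at its unique root. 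A direct collection of the linear and constant coefficients then identifies this root with the stated closed form $\dfrac{(1+C_{R,1})(1+C_{B,2}) - (1+C_{R,2})(1+C_{B,1})}{C_{R,1}-C_{B,1}}$, i.e. with $k^*$.

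Because the slope is negative, $\Phi$ increases on $\{k<k^*\}$ and decreases on $\{k>k^*\}$, so the sequence is unimodal and its only possible nontrivial shape is increase-then-decrease. The first available comparison is between indices $1$ and $2$ (the recurrence is valid from $k=2$ onward), so a genuine increasing phase exists if and only if the increment at $k=2$ is positive, that is, if and only if $2<k^*$; this is exactly the chasm criterion $k^*>2$. When it holds, the last increasing step occurs at the largest integer below $k^*$, so $\Phi$ peaks and switches to decreasing at $[k^*]$, as claimed; when $k^*\le 2$ the increment is already nonpositive at $k=2$ and stays negative thereafter, giving a monotone sequence with no chasm. I expect the main obstacle to be the first step: justifying the recurrence with the additive equal-chance constant $C_{C,2}$ and the $\alpha^*$-dependent normalizations exactly as in Theorem~\ref{group size power-law}, including the self-consistent value of $\alpha^*$ in \eqref{eq_alpha_star} and the $\mathbb{E}[G_{k,t}]/t$ concentration that legitimizes passing to the deterministic recurrence. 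Once the recurrence is secured, the cancellation of the $k^2$ term and the identification of the root with the stated closed form are routine but must be carried out with care.
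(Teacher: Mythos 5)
Your proposal is correct and takes essentially the same approach as the paper's proof: both use the recurrence $G_k(C) = G_{k-1}(C)\frac{(k-1)C_{C,1}+C_{C,2}}{1+kC_{C,1}+C_{C,2}}$ from Theorem \ref{group size power-law}, reduce monotonicity of $G_k(R)/G_k(B)$ to comparing the consecutive-step ratio with $1$, and show the sign of that comparison is governed by an expression linear in $k$ with slope proportional to $C_{R,1}-C_{B,1}<0$ and root exactly $k^*$, yielding increase for $k<k^*$, decrease for $k>k^*$, the criterion $k^*>2$, and the peak at $[k^*]$. The only cosmetic difference is that you cross-multiply the two quadratics and cancel the $k^2$ terms, whereas the paper subtracts the two fractions $\frac{1+C_{C,1}}{1+kC_{C,1}+C_{C,2}}$ directly to exhibit the numerator $(k-k^*)(C_{B,1}-C_{R,1})$.
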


\begin{cor}
	Following the notation defined in Theorem \ref{group size power-law} and Theorem \ref{lemma_bump}. A network sequence $\left\{\mathcal{N}(M\cup G, t, \Theta)\right\}$ produced by GSHM has a group chasm effect against the red groups if and only if $C_{R,1} < C_{B,1}$ and $k^* >2$.
\end{cor}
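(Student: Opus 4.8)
The plan is to derive this corollary as a direct consequence of Theorem~\ref{lemma_bump}, together with the tail analysis furnished by Theorem~\ref{group size power-law} and Corollary~\ref{top_k}. The only genuinely new content beyond Theorem~\ref{lemma_bump} is ruling out a chasm against red whenever $C_{R,1} \ge C_{B,1}$, since that theorem is stated only for the regime $C_{R,1} < C_{B,1}$.

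For the \emph{if} direction, suppose $C_{R,1} < C_{B,1}$ and $k^* > 2$. Then Theorem~\ref{lemma_bump} applies verbatim and yields a chasm against the red groups, with the monotonicity of $\{G_k(R)/G_k(B)\}$ changing at $[k^*]$; nothing further is needed.

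For the \emph{only if} direction I would argue in two steps. First I would show that a chasm against red forces $C_{R,1} < C_{B,1}$. The structural fact I would extract from the proof of Theorem~\ref{lemma_bump} is that the consecutive ratio $(G_{k+1}(R)/G_{k+1}(B)) / (G_k(R)/G_k(B))$ is a ratio of two quadratics in $k$ whose difference is \emph{linear} in $k$; hence the sign of each increment of $\{G_k(R)/G_k(B)\}$ is governed by a single linear function of $k$, so the sequence changes monotonicity at most once. Its large-$k$ direction is pinned down by the power-law exponents of Theorem~\ref{group size power-law}: since $G_k(R)/G_k(B) \propto k^{\beta(B)-\beta(R)}$ with $\beta(C) = 1 + 1/C_{C,1}$, and the $C_{C,1}$ are positive, the sequence is eventually increasing when $C_{R,1} > C_{B,1}$, constant when $C_{R,1} = C_{B,1}$, and eventually decreasing when $C_{R,1} < C_{B,1}$. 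A chasm against red requires the red group ratio, equivalently $G_k(R)/G_k(B)$ (an increasing function of it), to be \emph{decreasing} in the tail; this rules out $C_{R,1} > C_{B,1}$ (tail increasing, i.e.\ glass-ceiling against blue by Corollary~\ref{top_k}) and $C_{R,1} = C_{B,1}$ (monotone, no chasm), leaving only $C_{R,1} < C_{B,1}$. Second, once $C_{R,1} < C_{B,1}$ is in force, Theorem~\ref{lemma_bump} states that the chasm is present if and only if $k^* > 2$, so the assumed chasm forces $k^* > 2$. Combining the two steps gives $C_{R,1} < C_{B,1}$ and $k^* > 2$.

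The main obstacle is making the claim ``at most one monotonicity change, with tail direction determined by $\mathrm{sign}(C_{R,1}-C_{B,1})$'' rigorous at finite $k$ rather than only asymptotically: the power-law statement of Theorem~\ref{group size power-law} controls the leading order of $G_k(R)/G_k(B)$ but not, on its own, eventual monotonicity. I would close this gap by invoking the exact product form of $G_k(C)$ used in the proof of Theorem~\ref{lemma_bump}, from which the linear sign-governing function, and thus the single crossing at $k^*$, can be read off directly; this certifies both the tail direction and the unimodal shape at once, so no separate tail estimate is required.
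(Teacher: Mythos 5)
Your proposal is correct and follows essentially the paper's route: the paper states this corollary without a separate proof, treating it as an immediate consequence of Theorem \ref{lemma_bump}, whose proof already rests on the single-crossing identity you invoke (the sign of each increment of $G_k(R)/G_k(B)$ is governed by the linear factor $(k-k^*)(C_{B,1}-C_{R,1})$). Your only addition --- ruling out $C_{R,1}\ge C_{B,1}$ in the only-if direction via that same identity (constant sign when $C_{R,1}=C_{B,1}$, decrease-then-increase when $C_{R,1}>C_{B,1}$) --- is exactly the step the paper leaves implicit, and your use of the exact product form of $G_k(C)$ rather than the asymptotic power law to certify finite-$k$ monotonicity is the right way to close that gap.
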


\begin{cor}\label{general selective homophily}
	A network sequence $\left\{\mathcal{N}(M\cup G, t, \Theta)\right\}$ generated by SHM with the general homophily mechanism leads to no chasm effect.
\end{cor}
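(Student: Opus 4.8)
The plan is to realise SHM with general homophily as the special case of GSHM in which a single homophily level governs every acceptance decision, i.e.\ $\rhop_R = \rhop_B = \rhou_R = \rhou_B = \rho$, and then to feed this identification into the chasm criterion of Theorem~\ref{lemma_bump}. Since a chasm against red requires $k^* > 2$, it suffices to show that under this parameter choice $k^* < 2$ whenever $C_{R,1}\neq C_{B,1}$, and to dispose of the degenerate case $C_{R,1}=C_{B,1}$ separately.

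First I would record the structural relationship between the four constants $C_{R,1},C_{B,1},C_{R,2},C_{B,2}$. Comparing the definitions in Theorem~\ref{group size power-law} and Theorem~\ref{lemma_bump}, the denominators appearing in $C_{R,2},C_{B,2}$ are literally identical to those in $C_{R,1},C_{B,1}$; the numerators differ only in that the factor $\xi$ carried by each rich-get-richer term is replaced by $(1-\xi)/\eta$, and the weight $\rhop_{\cdot}$ is replaced by $\rhou_{\cdot}$. Under general homophily $\rhop_B=\rhou_B$ and $\rhop_R=\rhou_R$, so these numerators become exactly proportional term by term. Writing $\lambda := \tfrac{1-\xi}{\eta\xi}$, I would conclude $C_{R,2}=\lambda\,C_{R,1}$ and $C_{B,2}=\lambda\,C_{B,1}$ with the \emph{same} scalar $\lambda$ for both colours.

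With this proportionality the evaluation of $k^*$ collapses. Substituting $C_{R,2}=\lambda C_{R,1}$ and $C_{B,2}=\lambda C_{B,1}$ into
\[
k^* = \frac{(1+C_{R,1})(1+C_{B,2}) - (1+C_{R,2})(1+C_{B,1})}{C_{R,1} - C_{B,1}},
\]
the numerator factors as $(C_{R,1}-C_{B,1})(1-\lambda)$, so after cancellation $k^* = 1-\lambda$. Since $0<\xi<1$ and $0<\eta<1$ force $\lambda>0$, this gives $k^* < 1 < 2$, and Theorem~\ref{lemma_bump} then says the minority-group ratio is eventually monotone, so no chasm arises against red. Repeating the algebra with $R$ and $B$ interchanged yields the identical value $k^*=1-\lambda$, excluding a chasm against blue as well; and when $C_{R,1}=C_{B,1}$ the two group-size laws share the same exponent $\beta$, so $G_k(R)/G_k(B)$ converges to a constant and is trivially monotone.

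The crux — and the single place where generality of the homophily is essential — is the term-by-term proportionality $C_{\cdot,2}=\lambda\,C_{\cdot,1}$, which hinges precisely on $\rhop=\rhou$ within each colour. Under selective homophily on rich-get-richer one has instead $\rhou=1\neq\rhop$, the equal-chance numerators cease to be proportional to the rich-get-richer ones, and $k^*$ is no longer pinned below $2$ — exactly why that regime can manufacture a chasm (Lemma~\ref{lem_select_rich}) while the general regime cannot. I would therefore concentrate the verification effort on the numerator/denominator bookkeeping that produces $\lambda$ and on confirming the clean factorisation of the $k^*$ numerator.
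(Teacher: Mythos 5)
Your proposal is correct and follows essentially the same route as the paper's proof: identify general homophily with the GSHM parameter choice $\rhop_R=\rhop_B=\rhou_R=\rhou_B$, observe the term-by-term proportionality $C_{R,2}=\gamma C_{R,1}$, $C_{B,2}=\gamma C_{B,1}$, and substitute into $k^*$ to get $k^*=1-\gamma<1<2$, so Theorem \ref{lemma_bump} rules out a chasm. Your version is slightly more complete than the paper's — you make the constant explicit as $\gamma=\tfrac{1-\xi}{\eta\xi}$ and separately dispose of the degenerate case $C_{R,1}=C_{B,1}$ and the symmetric chasm-against-blue case, which the paper leaves implicit.
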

\begin{proof}
	The general selective homophily is equivalent to setting $\rhou_r = \rhop_r=\rhou_b = \rhop_b$ in the GSHM. It is easy to see that, for some positive constant $\gamma > 0$, we have that $C_{R,2} = \gamma C_{R,1}$, $C_{B,2} = \gamma C_{B,1}$. Substituting this relation into the expression for $k^*$, we have that
	\begin{equation}
		k^* = \frac{(1+C_{R,1})(1+\gamma C_{B,1}) - (1+ C_{B,1})(1+\gamma C_{R,1})}{C_{R,1} - C_{B,1}}
			= 1-\gamma < 1,
	\end{equation}
\end{proof}
\begin{cor}
	A network sequence $\left\{\mathcal{N}(M\cup G, t, \Theta)\right\}$ generated by SHM with no equal-chance mechanism in the model leads to no chasm effect.
\end{cor}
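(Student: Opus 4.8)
The plan is to reduce this statement to an application of Theorem~\ref{lemma_bump} by recognizing that ``no equal-chance mechanism'' is exactly the parameter choice $\xi = 1$ in the GSHM parameterization: the equal-chance branch of \emph{Connection Growth} is triggered with probability $1-\xi$, so removing it entirely means that every new connection is formed through the rich-get-richer branch, i.e.\ $\xi = 1$. As in the proof of Corollary~\ref{general selective homophily}, I would realize the relevant SHM variant as a special case of GSHM; here the only feature needed is $\xi = 1$, together with $\rhop_R = \rhop_B = \rho > 0$ under SHM, which keeps the tail glass-ceiling alive (consistent with Theorem~\ref{group size power-law} and Corollary~\ref{top_k}) but turns out to be irrelevant to the chasm computation.

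First I would observe that both second-order constants carry a factor $(1-\xi)$ in each of their numerator terms: inspecting the expressions for $C_{R,2}$ and $C_{B,2}$ in Theorem~\ref{lemma_bump}, every summand is proportional to $(1-\xi)$. Hence setting $\xi = 1$ forces $C_{R,2} = C_{B,2} = 0$ immediately, with no need to determine $\alpha^*$ or the homophily levels explicitly.

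Next I would substitute these vanishing constants into the defining formula for $k^*$. With $C_{R,2} = C_{B,2} = 0$ the numerator collapses to $(1+C_{R,1}) - (1+C_{B,1}) = C_{R,1} - C_{B,1}$, so that
\begin{equation}
k^* = \frac{C_{R,1} - C_{B,1}}{C_{R,1} - C_{B,1}} = 1
\end{equation}
whenever $C_{R,1} \neq C_{B,1}$. Since $1 < 2$, the ``$k^* > 2$'' half of the characterization in Theorem~\ref{lemma_bump} fails, so the group ratio sequence $\{G_k(R)/G_k(B)\}$ exhibits no chasm against red. The identical computation with the roles of red and blue exchanged (equivalently, with $C_{R,1}$ and $C_{B,1}$ swapped) again yields $k^* = 1$, ruling out a chasm against blue as well.

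Finally I would dispose of the degenerate case $C_{R,1} = C_{B,1}$, in which the expression for $k^*$ is undefined. There $\beta(R) = \beta(B)$ by Theorem~\ref{group size power-law}, so $G_k(R)/G_k(B)$ is asymptotically constant in $k$ and thus monotone, which precludes a chasm trivially. I do not anticipate a genuine obstacle: the argument is essentially a one-line substitution once the $(1-\xi)$ factor is spotted. The only points that need care are (i) covering both directions of the chasm rather than only the against-red case stated in Theorem~\ref{lemma_bump}, and (ii) separating off the equal-exponent case so that the division defining $k^*$ is legitimate.
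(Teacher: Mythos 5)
Your proof is correct and follows essentially the same route as the paper's: identify ``no equal-chance'' with $\xi=1$ in GSHM, note that every term of $C_{R,2}$ and $C_{B,2}$ carries a factor $(1-\xi)$ so both vanish, and conclude $k^*=1<2$ from the formula in Theorem~\ref{lemma_bump}. Your additional handling of the swapped-color case and the degenerate case $C_{R,1}=C_{B,1}$ is more careful than the paper's two-line proof, but the core argument is identical.
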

\begin{proof}
	Removing the oppotunity mechanism from SHM is equivalent to setting $\xi=1$ in the GSHM. It is easy to check that $C_{R,2} = C_{B,2} = 0$, and thus $k^* = 1$.
\end{proof}

\subsubsection{Non-monotonicity of member-ratios}
So far, our analysis on bipartite networks focuses mainly on groups. We have observed in Section 3.3 that the average member ratio in groups with a fixed size is also non-monotone. The following lemma calculates the average red member ratio among groups of size 1, and that among groups of size going to infinity. When both values are below $r$, we can say that the member ratio is non-monotone.

\begin{lemma}\label{member_ratio_nonmonotone} (proof in appendix \ref{proof member_ratio_nonmonotone})
For the red member ratios within groups with size 1, and within groups with size goes to infinity, we have:
\begin{itemize}
	\item For groups with size 1,
	\begin{align}
		\label{rm_ratio_1}
		\lim_{t \rightarrow \infty} r^{(M, G)}_{1, t} (R)&= \frac{G_1(R)}{G_1(R) + G_1(B)}
		= \frac{1+C_{B,1} + C_{B,2}}{2+C_{R,1} + C_{R,2}+C_{B,1} + C_{B,2}}.\\
	\end{align}
	\item For groups with size goes to infinity, assume $C_{R,1} < C_{B,1}$,
	\begin{equation}
		\label{rm_ratio_infty}
		\lim_{k \rightarrow \infty} \lim_{t \rightarrow \infty} r^{(M, G)}_{k, t} (R)
		= r^{(M, G)}(R),
	\end{equation}
	where $r^{(M, G)}$ is defined as 
	\begin{equation}\label{rm_fit1}
		r^{(M, G)}(R) = \frac{q_{RB}}{q_{RB} + q_{BB}}, 
	\end{equation}
	with
	\begin{align}
			q_{RB} &= r\rhop_R \left(1-(1-\rhop_B)\xi \alpha^* - (1-\rhou_B)(1-\xi) r\right),\\
			q_{BB} &= (1-r)\left(1-(1-\rhop_R)\xi(1-\alpha^*) - (1-\rhou_R)(1-\xi) (1-r)\right).
	\end{align}
\end{itemize}
\end{lemma}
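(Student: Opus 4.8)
The plan is to handle the two size regimes by related arguments, in each case reducing the member-ratio to the group-level growth bookkeeping already established for Theorem \ref{group size power-law}. Throughout I would work directly with expected counts, since $r^{(M,G)}_{k,t}$ is defined as a ratio of expectations, so no martingale-concentration step is needed. For the size-one case I would start from the structural observation that a group of size $1$ contains exactly its creator, whose color equals the group's color; hence a red member lies in a size-$1$ group if and only if that group is itself a red size-$1$ group, giving the exact identity $r^{(M,G)}_{1,t}(R)=\mathbb{E}[G_{1,t}(R)]/(\mathbb{E}[G_{1,t}(R)]+\mathbb{E}[G_{1,t}(B)])$ and, after dividing by $t$ and sending $t\to\infty$, the first equality $G_1(R)/(G_1(R)+G_1(B))$. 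It then remains to evaluate $G_1(R),G_1(B)$, which I would read off from the $k=1$ boundary of the master equation underlying Theorem \ref{group size power-law}: writing the per-group probability that a color-$C$ group of size $k$ gains a member as $(C_{C,1}k+C_{C,2})/t$ and letting $D_C$ be the creation rate of color-$C$ size-$1$ groups, the stationary balance at $k=1$ reads $G_1(C)\,(1+C_{C,1}+C_{C,2})=D_C$. Substituting and forming the ratio produces the stated closed form; the only delicate point is the correct identification of the boundary term $D_C$ together with the attachment coefficients $C_{C,1},C_{C,2}$, all of which are precisely the quantities computed in the proof of Theorem \ref{group size power-law}.

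For the large-group limit I would first split the members of size-$k$ groups according to the color of the enclosing group. Writing $\phi_C(k)$ for the limiting (in $t$) expected fraction of red members inside a color-$C$ group of size $k$, one has
\begin{equation}
	\lim_{t\to\infty} r^{(M,G)}_{k,t}(R)=\frac{\phi_R(k)\,G_k(R)+\phi_B(k)\,G_k(B)}{G_k(R)+G_k(B)}.
\end{equation}
Under the hypothesis $C_{R,1}<C_{B,1}$ we have $\beta(R)>\beta(B)$ in Theorem \ref{group size power-law}, so the blue group-size tail is strictly heavier and $G_k(R)/G_k(B)\to 0$ as $k\to\infty$. Consequently the large-$k$ limit collapses to $\lim_{k\to\infty}\phi_B(k)$, the red-member fraction deep inside a very large blue group: this is the conceptual heart of the statement, since large groups are overwhelmingly blue and the asymptotic member ratio is therefore dictated entirely by whom a large blue group recruits.

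To compute $\lim_{k\to\infty}\phi_B(k)$ I would set up a coupled (two-dimensional) recursion for the expected number of red members inside blue groups of each size, run in parallel with the size-counts $G_{k,t}(B)$. The key simplification is that for a blue group of size $k$ the rich-get-richer channel contributes an attachment rate proportional to $k$, whereas the equal-chance channel contributes only an $O(1)$ rate; hence as $k\to\infty$ the probability that the next member absorbed by a large blue group is red tends to the rich-get-richer ratio $q_{RB}/(q_{RB}+q_{BB})$. Here $q_{RB}=r\,\rhop_R\,(1-(1-\rhop_B)\xi\alpha^*-(1-\rhou_B)(1-\xi)r)$ collects the admission rate of red members (a short stationarity argument shows the active member is red with probability exactly $r$; it accepts a cross-color blue group with probability $\rhop_R$, renormalized by the blue rejection factor), and $q_{BB}=(1-r)\,(1-(1-\rhop_R)\xi(1-\alpha^*)-(1-\rhou_R)(1-\xi)(1-r))$ is the analogous blue rate. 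A Cesàro/law-of-large-numbers argument applied to the colors of successive incoming members then forces the \emph{fraction} $\phi_B(k)$ to the same limit $q_{RB}/(q_{RB}+q_{BB})=r^{(M,G)}(R)$.

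The main obstacle is this within-group composition step. Unlike Theorem \ref{group size power-law}, which only tracks sizes, here I must control the joint (size, red-count) distribution of blue groups, show that its $t$-normalized limit exists, and prove that the equal-chance contribution becomes asymptotically negligible relative to the rich-get-richer contribution as the size grows, so that only the preferential rates $q_{RB},q_{BB}$ survive. Making rigorous the order of limits implicit in $\lim_{k\to\infty}\lim_{t\to\infty}$ for this two-dimensional recursion, and confirming that the thin red-group tail cannot contribute to the $k\to\infty$ limit, is where the real work lies; by contrast the size-one computation and the tail-dominance reduction are comparatively routine consequences of Theorem \ref{group size power-law}.
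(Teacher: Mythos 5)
Your proposal is correct and follows essentially the same route as the paper: the size-one case via the exact identity $r^{(M,G)}_{1,t}(R)=G_{1,t}(R)/\bigl(G_{1,t}(R)+G_{1,t}(B)\bigr)$ together with the $k=1$ boundary of the group-size recursion from Theorem \ref{group size power-law}, and the large-$k$ case via the color decomposition, the tail dominance $G_k(R)/G_k(B)\to 0$ under $C_{R,1}<C_{B,1}$, and the observation that the preferential (proportional to $j$) attachment term dominates the equal-chance ($O(1)$) term so the per-arrival red probability tends to $q_{RB}/(q_{RB}+q_{BB})$, followed by Ces\`aro averaging. The only cosmetic difference is that the paper (Lemma \ref{lemma_binomial_result_general}) formalizes the within-group composition step by following a uniformly chosen group's join times and applying Bayes' rule to each arrival, rather than your joint (size, red-count) master equation, but the substance is identical.
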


	

\subsection{Fitting the model on real data}
In the previous sections, we have noticed that all the real-data observations we present in Section 3.2 and 3.3 may be present in networks generated by GSHM. In this section, we illustrate its performance in terms of its capability of reproducing the chasm effect and the glass-ceiling effects from real social networks.

{
To do so, we first need to infer parameters from the real dataset. {The minority ratio $r$, the member growth rate $\alpha$, and the group-member ratio $\gamma$ can be directly calculated from the dataset.} We then can optimize over all $\xi, \rhop_R, \rhop_B, \rhou_R, \rhou_B$ to find a set of parameters that gives the that minimizes $\vert\sum_{k=1}^{K}\frac{\hat{G_k}(R)}{\hat{G_k}(B)} - \sum_{k=1}^{K}\frac{{G_k}(R)}{{G_k}(B)}\vert$, where $K$ is the maximam group size, $\hat{G_k}(R)$ and $\hat{G_k}(B)$ are obtained through (\ref{power_law_fit1}) and (\ref{power_law_fit2}), and ${G_k}(R)$ and ${G_k}(B)$ are empirical values observed from the dataset. With the set of $r, \xi, \rho_r^{(p)}, \rho_b^{(p)}, \rho_r^{(u)}, \rho_b^{(u)}$, we can use (\ref{member ratio cor}) to obtain the numerical values for the average ratio of minority members.}

\begin{figure}
\centering
\includegraphics[scale=0.45]{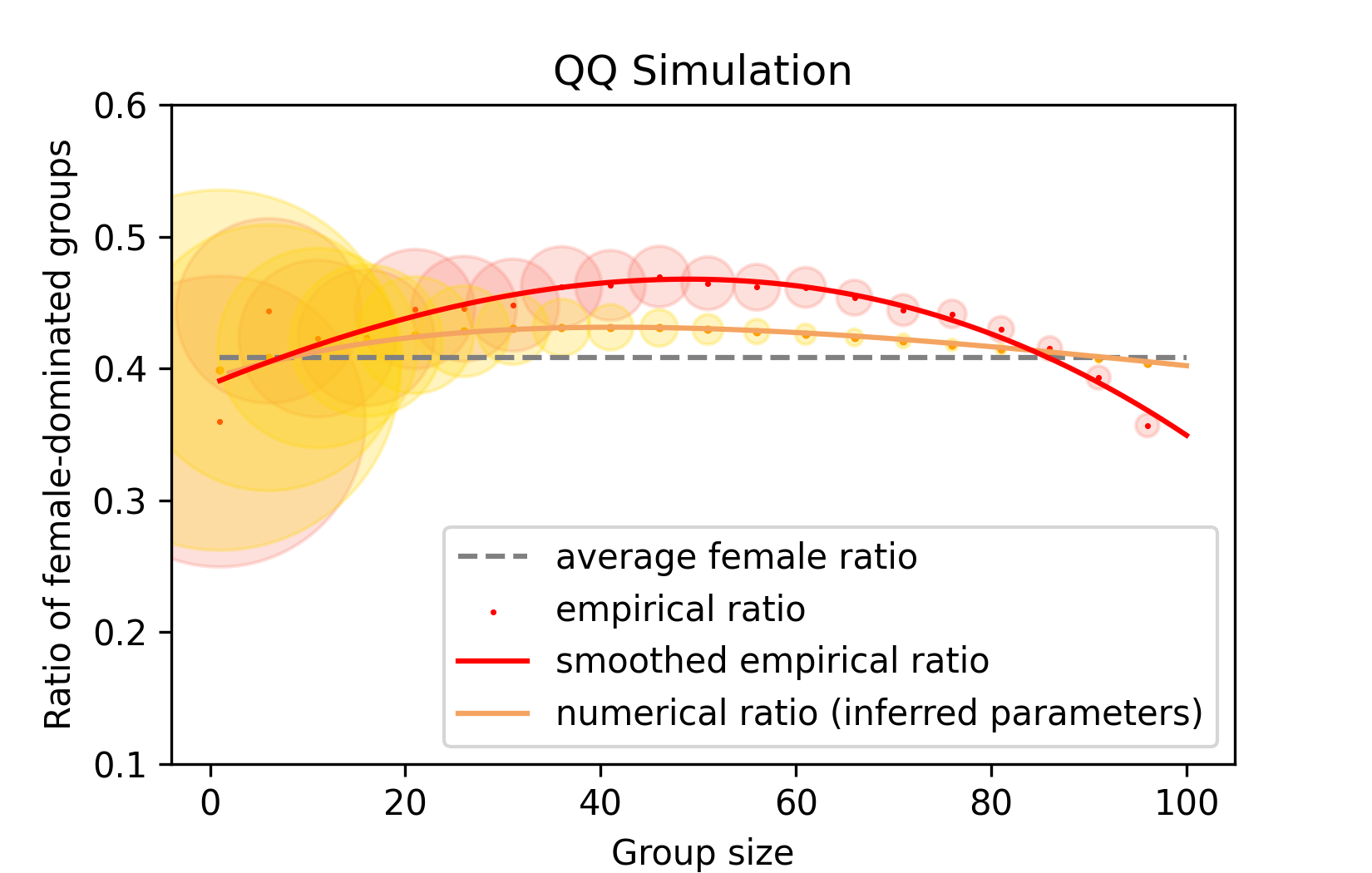}
\includegraphics[scale=0.4]{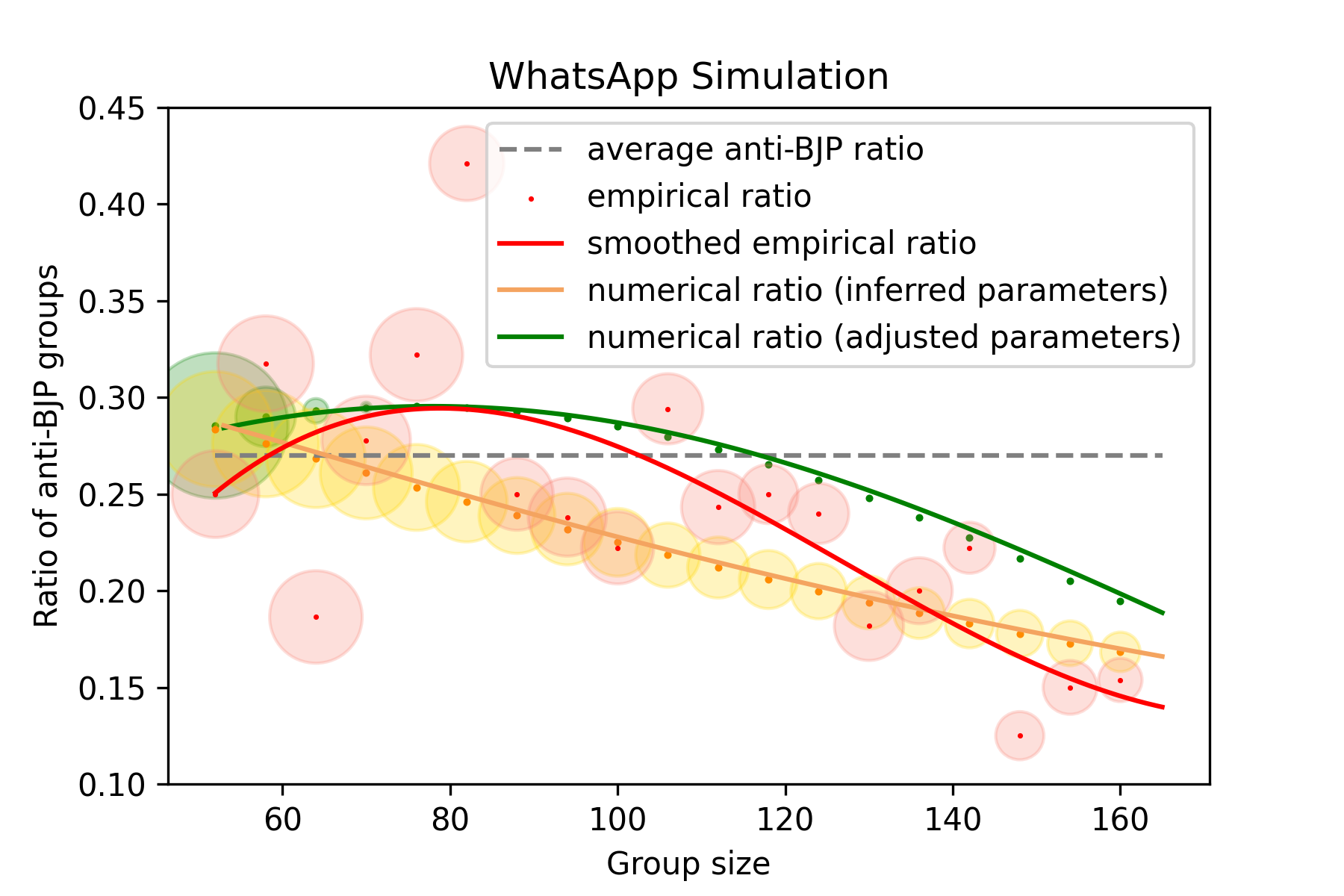}\\
\includegraphics[scale=0.45]{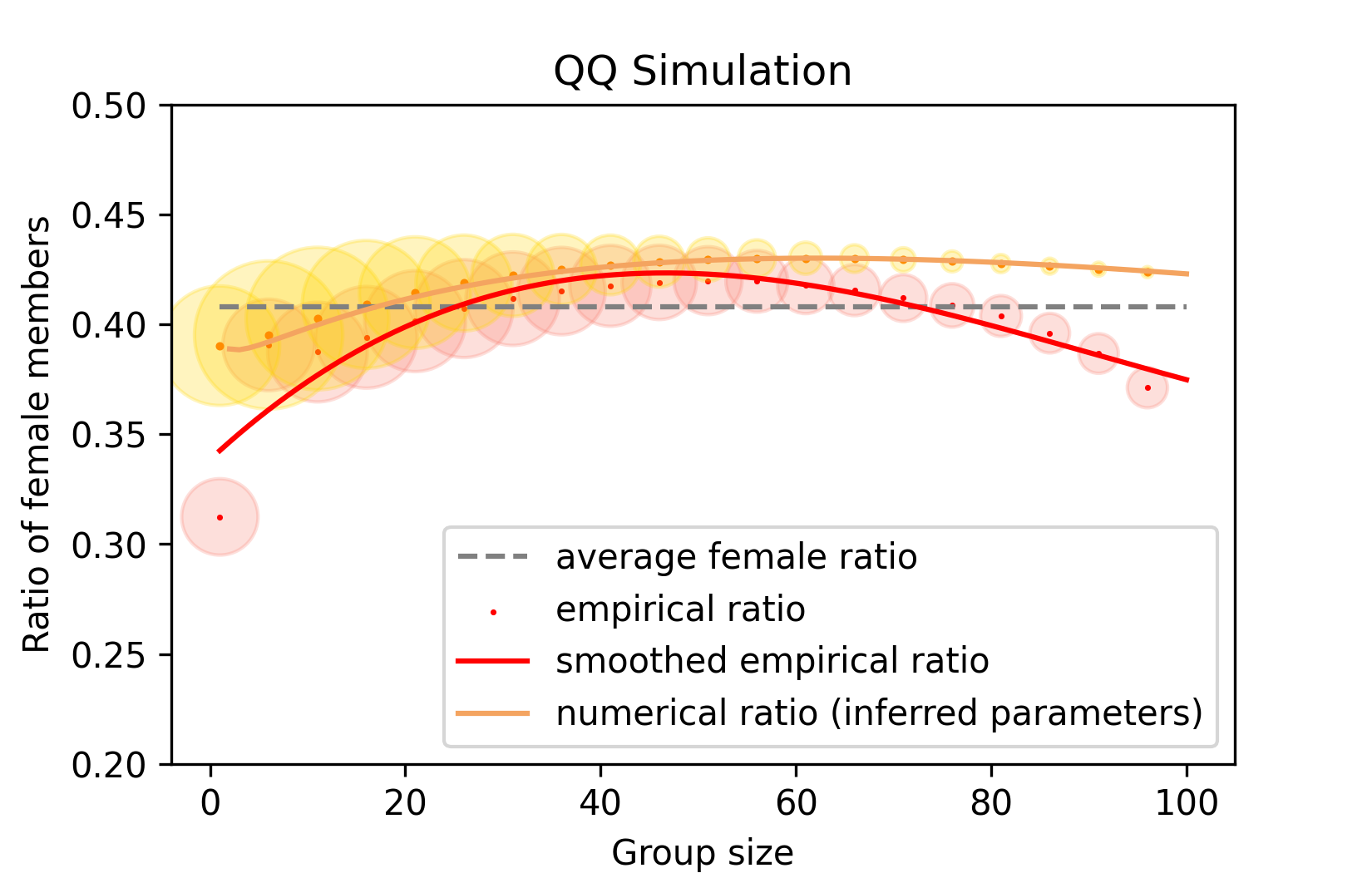}
\includegraphics[scale=0.4]{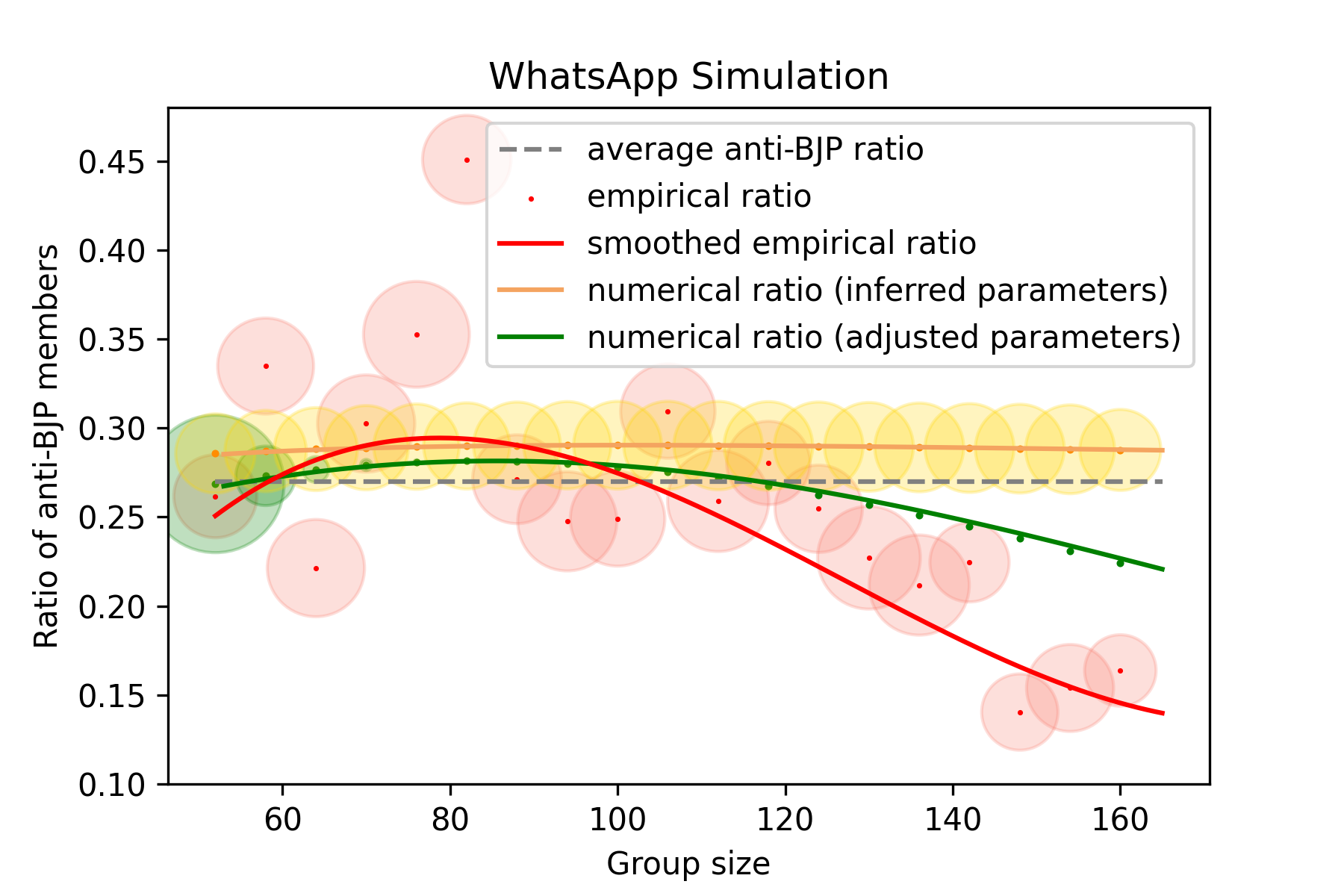}
\caption{Model fits: in the first row, we present the ratios for minority groups. The radii of the red and yellow circles represent the group degree distributions in the empirical datasets and in the numerical simulations respectively. In the second row, we present the ratios for minority members. The radii of the red and yellow circles represent the group member distributions in the empirical datasets and in the numerical simulations respectively. Our numerical simulations with inferred parameters captures the decay of the two distributions in both datasets.}
\label{model_fits}
\end{figure}

In the QQ dataset, we see that for the female-dominated group ratio, our simulation demonstrates both the glass-ceiling effect and the chasm effect. Moreover, our simulation locates the group size where the monotonicity of the ratio changes. This re-confirms our calculation in Theorem \ref{lemma_bump}. For the average female member ratio, we see that it again exhibits both the glass-ceiling effect and the chasm effect. However, it does not locate where the monotonicity changes. We are not surprised by this inaccuracy, as our generalized model only extend SHM by allowing different homophily levels, and we expect real social networks to be more complicated. For a better performance, a more complex model may be needed. 

{
{In the WhatsApp dataset, for the ratio of anti-BJP groups, our simulation with inferred parameters demonstrates the glass-ceiling effect, but no longer the chasm effect, as the yellow line is monotonically decreasing; for the ratio of anti-BJP members, our simulation shows very weak glass-ceiling effect and the chasm effect, as the yellow line first goes up and then goes down, with a minor monotonicity change around group size being 90}. These mismatch can be caused by inaccurate estimations of the parameters, with several factors. For example, the WhatsApp dataset is gathered by collecting member information of 2,092 groups, while members in the collected groups are likely to join other groups in the big WhatsApp network. The missing information of the rest of groups these members join can lead to an under-estimation of $\eta$. Furthermore, the sparsity of the data can also lead to the inaccurate estimation of the other parameters. {To further test the performance of our model, we optimize over  $\eta, \xi, \rhop_R, \rhop_B, \rhou_R, \rhou_B$, instead perviously only $\xi, \rhop_R, \rhop_B, \rhou_R, \rhou_B$, to get a set of adjusted parameters.} We see that with the adjusted parameters, our model well-captures the chasm effects and the glass-ceiling effects, {as the green lines in both the anti-BJP group ratio plot and the anti-BJP member ratio plot clearly show that the numerical ratios first increase and then decrease.} Furthermore, the numerical ratio with adjusted parameters again locates the group size at which the ratio of anti-BJP groups changes the monotonicity.}

\section{Applications to Information Flow}
 
The presence of hegemony in networks have already been linked to important consequences on the fairness of many graph algorithms \cite{stoica2020seeding}. We now present two examples where our identified chasm effect, which contrasts with the tail effect, invites us to shed light on the fairness of targeted advertisement and content moderation.

\subsection{Job advertisements and equal opportunity among genders}

The nature of classified ads went through a seismic shift with the advent of Craigslist. Employers posted job opportunities online, providing an additional advantage to people with access to computers and good internet connections. In the last two decades, recruitment strategies evolved further; prospective employees are targeted on LinkedIn or Facebook based on self-uploaded or inferred profile data, which raises a myriad of issues\footnote{\url{https://www.vox.com/identities/2019/9/25/20883446/facebook-job-ads-discrimination.}}. These approaches can be exclusionary or discriminatory - perhaps inadvertently - and expensive. Nowadays, both hiring companies and recruitment companies post job openings in interest-based groups on social networks or popular job boards as a way to organically reach a larger, more diverse audience without paying a premium for targeted advertising. However, the make-up of groups and job boards may not uniform, and this strategy can impact the diversity of applications. Simply casting a wide net without attempting to understand the breakdown of the people on the platform may increase the gender imbalance. However, the chasm effect shows that there is a group-size threshold that, if adopted, can help ensure a more diverse net is cast with the job posting reaching more women. Acknowledging the existence of this threshold and attempting to determine the optimal threshold could go a long way in reducing the implicit biases in the hiring process. 
 
In detail, consider the advertising strategy that places ads for groups with size greater than or equal to $k_A$. Let $r^{(A)}(k_A)$ be the ratio of red members among all the members seeing the ads, in the limit $t \rightarrow \infty$. We have the following theorem, whose proof is delayed to Appendix \ref{proof_of_ads}.
\begin{thm}\label{ads_theorem}(proof in appendix \ref{proof_of_ads})
	Assume the red member ratios for very small and large groups are smaller than the average red member ratio $r$ in the network. There exist $0 < k_A^{lower} \leq k_A^{upper}$, such that
	\begin{itemize}
		\item For $k_A > k_A^{upper}$, $r^{(A)}(k_A) < r$;
		\item For $k_A < k_A^{lower}$, $r^{(A)}(k_A) > r$.
	\end{itemize}
\end{thm}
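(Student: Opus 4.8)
The plan is to write $r^{(A)}(k_A)$ as a tail weighted average of the per-size red member ratios and then exploit the single-bump (chasm) shape of that sequence established in Lemma~\ref{member_ratio_nonmonotone}. Let $r_k := r^{(M,G)}_k(R)$ denote the limiting red fraction among the member-slots of size-$k$ groups, and let $w_k := M^{(k)}(R) + M^{(k)}(B) = k\,(G_k(R)+G_k(B))$ be the limiting number of member-slots in size-$k$ groups. Since the members who see the ad are exactly those lying in groups of size at least $k_A$, the definitions give
\begin{equation}
r^{(A)}(k_A) = \frac{\sum_{k\ge k_A} M^{(k)}(R)}{\sum_{k\ge k_A} w_k} = \frac{\sum_{k\ge k_A} w_k\, r_k}{\sum_{k\ge k_A} w_k},
\end{equation}
so $r^{(A)}(k_A)$ is the $w$-weighted average of $\{r_k : k\ge k_A\}$. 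The weights are positive and $\sum_{k\ge 1} w_k$ is finite (it is the normalized total number of edges, which is $1$ since each time step adds one edge), so every such tail average is well defined.

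First I would establish the \emph{baseline identity} $r^{(A)}(1)=r$, i.e. $\sum_{k\ge 1} w_k(r_k-r)=0$. Observe that $\sum_{k} M^{(k)}(C) = E^{(M)}(C)$, so $r^{(A)}(1)$ is exactly the red share of all member-slots, $E^{(M)}(R)/(E^{(M)}(R)+E^{(M)}(B))$. Each time step adds exactly one edge, incident to the selected member $m^*$, which is red with probability $\alpha r + (1-\alpha)\rho_t$, where $\rho_t$ is the current red edge-share (new arrivals are red with probability $r$, while the rich-get-richer choice picks an existing member proportionally to degree and hence is red with the current edge-share). In the limit the stationary share $\rho^*$ solves $\rho^* = \alpha r + (1-\alpha)\rho^*$, whence $\rho^*=r$, giving the identity. (Equivalently, Theorem~\ref{member degree power-law} shows red and blue members obey the same power law, hence have equal mean degree, so the edge-share equals the count-share $r$.) Using this,
\begin{equation}
r^{(A)}(k_A) - r = \frac{\sum_{k\ge k_A} w_k(r_k-r)}{\sum_{k\ge k_A} w_k} = \frac{-\,P(k_A)}{\sum_{k\ge k_A} w_k}, \qquad P(k_A):=\sum_{k<k_A} w_k(r_k-r).
\end{equation}

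Next I would read off the two regimes from the sign of $r_k-r$. By hypothesis $r_k<r$ for very small and very large $k$, and by the chasm shape of Lemma~\ref{member_ratio_nonmonotone} the sequence $r_k$ increases and then decreases; combined with the baseline identity (which forces the positive ``bump'' to outweigh the negative tails, hence to exceed $r$ somewhere) the sign pattern of $r_k-r$ is negative, then positive, then negative. Define $k_A^{lower}$ as the first size at which $r_k\ge r$ and $k_A^{upper}$ as the first size beyond which $r_k<r$ holds for all larger sizes; the single-bump shape gives $0<k_A^{lower}\le k_A^{upper}$. For the narrow-net regime $k_A\ge k_A^{upper}$, every term of the tail average satisfies $r_k<r$, so the numerator of the first display is negative and $r^{(A)}(k_A)<r$. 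For the wide-net regime $k_A\le k_A^{lower}$, all omitted sizes $k<k_A$ satisfy $r_k<r$, so $P(k_A)<0$ and the second display yields $r^{(A)}(k_A)>r$. This proves both bullets.

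The main obstacle is the baseline identity $r^{(A)}(1)=r$: because $r^{(A)}$ is an edge-weighted (incidence) ratio rather than a count of distinct members, one must argue that red and blue members carry the same average degree. The fixed-point computation on the color of the freshly added edge settles this cleanly, but it is the step that genuinely uses the growth dynamics rather than mere bookkeeping. The remaining delicate point is purely qualitative --- converting ``$r_k$ increases then decreases, below $r$ at both ends'' into the sign pattern of the partial sums $P(k_A)$ --- and it rests entirely on the single-crossing structure supplied by Lemma~\ref{member_ratio_nonmonotone}, so no new estimate is required there.
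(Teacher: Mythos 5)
Your proposal is correct and follows essentially the same route as the paper's own proof: split the member population at the threshold, observe that for a large threshold every included per-size ratio is below $r$, and for a small threshold every excluded per-size ratio is below $r$, then conclude by the complementary-average argument anchored at the overall red share $r$. Two remarks are worth making. First, the baseline identity $r^{(A)}(1)=r$ that you prove via the fixed-point computation is precisely the step the paper leaves implicit (its phrase ``it further implies'' in the small-threshold case); in the paper this is available as the almost-sure limit $r^{(E,M)}_t(R)\to r$ from Lemma~\ref{lemma_pre_convergence}, so you can cite that result directly rather than re-deriving it heuristically --- you correctly identified this as the one step that genuinely uses the growth dynamics. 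Second, your appeal to a ``single-bump'' shape of the member-ratio sequence is not actually supported by Lemma~\ref{member_ratio_nonmonotone}: that lemma only computes the ratio at size $1$ and its limit as $k\to\infty$, and proves nothing about monotonicity in between. Fortunately your argument never needs it. The theorem's hypothesis already gives $r_k<r$ for all sufficiently small and all sufficiently large $k$; defining $k_A^{lower}$ as the first size with $r_k\ge r$ (which exists by your baseline identity) and $k_A^{upper}$ as the smallest size beyond which $r_k<r$ always holds, the inequality $k_A^{lower}\le k_A^{upper}$ follows automatically from these definitions, and both sign checks go through unchanged. Dropping the single-bump sentence therefore leaves a complete and correct proof that is a rigorous version of the paper's terser argument.
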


We examine this result empirically on the QQ dataset, and we see (in Figure \ref{advertisement}) that we can choose $k_A^{lower}=k_A^{upper}=63$. That is, if the group-size threshold is larger than 63, the advertising strategy favors males; on the other hand, if it is less than or equal to 63, the advertising strategy favors females.
\vspace{-5mm}
\begin{figure}[H]
\centering
\includegraphics[scale=0.6]{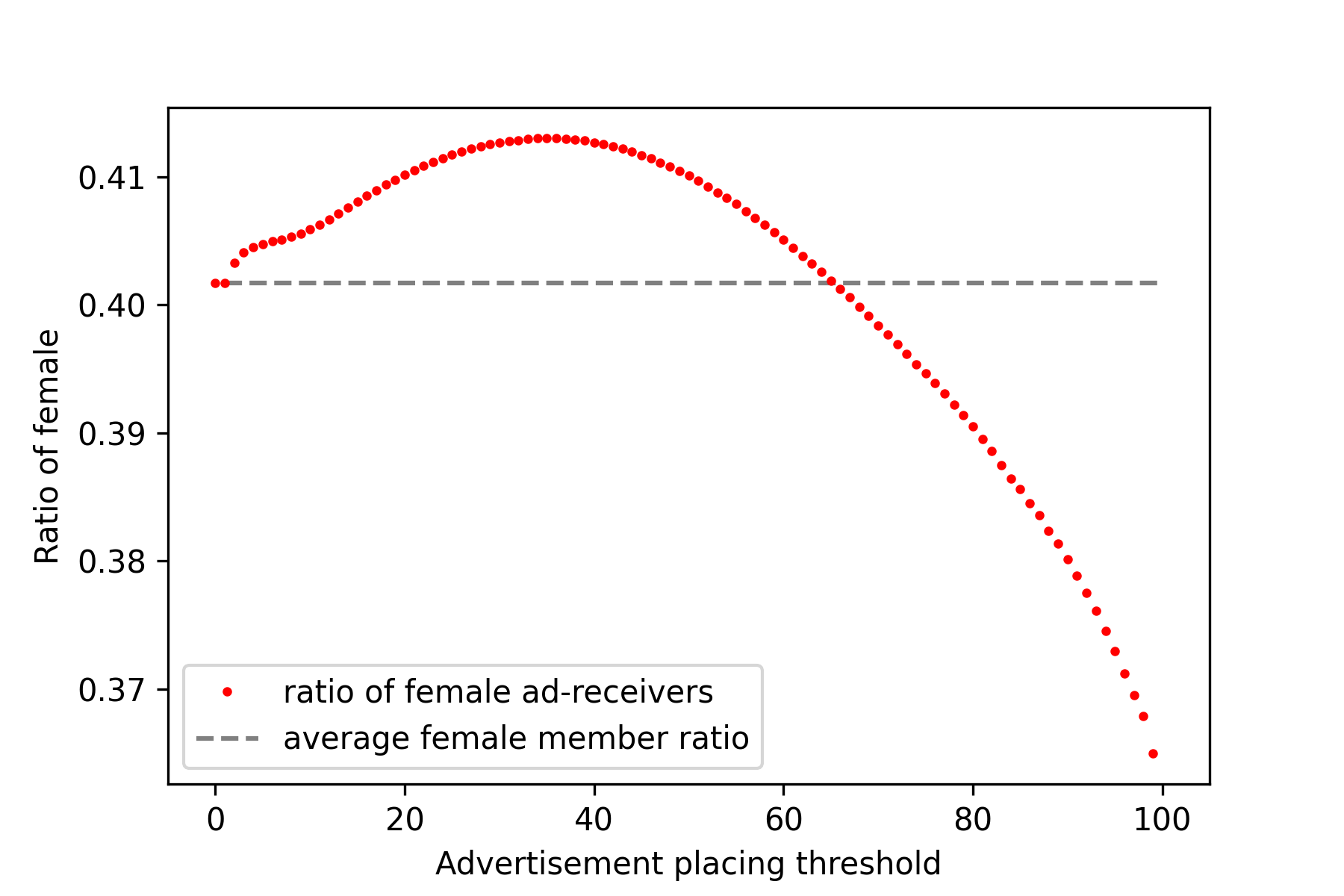}
\caption{Advertisement simulations on QQ: consider the advertising strategy that places ads to groups larger than a threshold. We see as the threshold gets lower, the advertising strategy changes from favoring males to females in terms of member exposures.}
\label{advertisement}
\end{figure}

\subsection{Preventing disproportionate content moderation among political affiliations?}
The conditions of the information landscape have deteriorated significantly over the course of the last decade. Conspiracy theories, flagrant rumours about people and events, and hateful content have all been amplified to the detriment of society at large. Group chats and end-to-end encrypted chats have not escaped this fate-rampant with the same malcontent, they have the additional problem where a lot of conversations are not subject to scrutiny. One of the many approaches to address this ecosystem is to rely on fact-checkers who identify pieces of information to verify and provide in-depth analysis into their veracity. Fact-checking organizations scour different parts of the open web and social platforms to determine what to fact-check\footnote{\url{https://www.boomlive.in/methodology}.}, and many have also set up additional tip-lines as one force to counteract the widespread misinformation\footnote{\url{https://meedan.com/blog/one-of-year-of-running-the-end-end-to-fact-checking-project/}.}. Different fact-checking organizations have different strategies in terms of prioritizing what to fact-check. Typically, it is based on a combination of importance (e.g. elections), relevance (e.g. breaking news events), the number of times an individual piece of content has been flagged, and the number of platforms on which it has been flagged.

In a highly simplified scenario where people have an equal tendency to report fake news when they see it, and the fact-checkers always prioritize to check news with more reports, one could ask the question whether prioritizing based on the number of reports is fair. As news from larger groups is more likely to be checked, the glass-ceiling effect implies that relative to the majority, fake news that originates or spread among minority members might be less likely to be detected and removed; however, the chasm effect shows that this is not necessarily true. 

Assume that the probability of fake news being detected in a group depends on the group size and the likelihood of all pieces of malcontent being detected. For simplicity, let $\theta \in [0,1]$ be the strength of the detector, with $\theta = 1$ indicating all fake news will be detected, and $\theta = 0$ indicating nothing will be flagged for a fact-check. For each group with size $k$, denote $h(k, \theta)$ as the probability that fake news in the group is detected. Equivalently, $h(k, \theta)$ is the expected ratio of detected fake news over all fake news in the group. We assume the function $h(\cdot, \cdot)$ satisfies:

\begin{enumerate}
	\item $h(\cdot, \cdot)$ is monotone increasing in group size: $h(k, \theta) < h(k+1, \theta)$.
	\item $h(\cdot, \cdot)$ is monotone increasing in detecting strength: $h(k, \theta_1) < h(k, \theta_2)$ for $0\leq \theta_1 < \theta_2 \leq 1$;
	\item $h(k, 0) = 0, h(k, 1) = 1$, and
		\begin{equation}
		\label{eq_h_assumption}
			\lim_{\theta \rightarrow 0} \frac{h(k, \theta)}{h(k+1, \theta)} = 0, \,\,\,\,\,\,\,\,\,\,\,\,
			\lim_{\theta \rightarrow 1} \frac{1-h(k, \theta)}{1-h(k+1, \theta)} = \infty.
		\end{equation}
\end{enumerate}

We make Assumption (1) since fake news in larger groups is likely to be reported more times, and therefore has a higher probability to be detected. Assumption (2) makes sense since $\theta$ measures the detector's strength.
The last Assumption (3) is a technical assumption, which means that groups with larger sizes dominate groups with smaller sizes, in the sense that a) as the strength of the detector goes to 0, $h(k, \theta)$ goes to 0 faster than $h(k+1, \theta)$ and b) as the strength of the detector goes to 1, $h(k, \theta)$ goes to 1 slower than $h(k+1, \theta)$.

Regard $h(k, \theta)$ as the protection score of a group with size $k$, and let $r^{(D)}(\theta)$ be the ratio of red groups' scores over total scores, that is,
\begin{align}\label{protection_score}
	r^{(D)}(\theta)=\frac{\sum_{k \geq 1} G_k(R)h(k, \theta)}{\sum_{k \geq 1} (G_k(R)+G_k(B))h(k, \theta)}.
\end{align}

\begin{thm}\label{fake news theorem}(proof in appendix \ref{proof_of_fn}) Assume the red group ratio $G_k(R)/(G_k(R) + G_k(B))$ is less than the overall red group ratio $r$ for very small and large groups in the network as $t\rightarrow \infty$. Then there exist $0 < \theta^{lower} \leq  \theta^{upper} < 1$, such that
	\begin{itemize}
		\item For $\theta > \theta^{upper}$, $r^{(D)}(\theta) > r$;
		\item For $\theta < \theta^{lower}$, $r^{(D)}(\theta) < r$.
	\end{itemize}
\end{thm}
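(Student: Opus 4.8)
The plan is to recognize $r^{(D)}(\theta)$ as a \emph{$\theta$-reweighted average} of the size-indexed red group ratios and to track where that weight concentrates as $\theta$ ranges over $(0,1)$. Write $\phi(k) := G_k(R)/(G_k(R)+G_k(B))$ for the red ratio among groups of size $k$ and $u_k := G_k(R)+G_k(B)$ for the total group mass at size $k$. Then
\[
 r^{(D)}(\theta) = \frac{\sum_{k\ge 1}\phi(k)\,u_k\,h(k,\theta)}{\sum_{k\ge 1}u_k\,h(k,\theta)},
 \qquad
 r = \frac{\sum_{k\ge 1}\phi(k)\,u_k}{\sum_{k\ge 1}u_k},
\]
so both quantities are weighted averages of the same sequence $\{\phi(k)\}$: the baseline $r$ uses weights $u_k$, while $r^{(D)}(\theta)$ uses the tilted weights $u_k h(k,\theta)$. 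The hypothesis that $\phi(k)<r$ for both very small and very large $k$ (which, since $r$ is itself the $u_k$-average of the $\phi(k)$, already forces $\phi$ to rise above $r$ in between, i.e.\ exactly the chasm of Theorem~\ref{lemma_bump}) is what I will exploit at the two ends of the $\theta$-range.

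First I would handle $\theta\to 0$. The likelihood-ratio assumption $h(k,\theta)/h(k+1,\theta)\to 0$ means the tilt drives all the normalized weight $p_k(\theta):=u_k h(k,\theta)/\sum_j u_j h(j,\theta)$ out to large group sizes: for any fixed $N$, bounding the denominator below by its $(N{+}1)$-st term and telescoping the ratios $h(k,\theta)/h(N+1,\theta)=\prod_{j=k}^{N}h(j,\theta)/h(j+1,\theta)$ shows $\sum_{k\le N}p_k(\theta)\to 0$. Splitting the weighted average at $N$ then gives $\limsup_{\theta\to0}r^{(D)}(\theta)\le \sup_{k>N}\phi(k)$ for every $N$, and letting $N\to\infty$ yields $\limsup_{\theta\to0}r^{(D)}(\theta)\le \limsup_{k\to\infty}\phi(k)<r$ by the large-group hypothesis (indeed $\phi(k)\to0$ whenever there is a tail glass-ceiling against red, i.e.\ $\beta(R)>\beta(B)$ in Corollary~\ref{top_k}). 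Hence $r^{(D)}(\theta)<r$ for all sufficiently small $\theta$.

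Next I would handle $\theta\to 1$. Writing $\epsilon_k(\theta):=1-h(k,\theta)$, which is decreasing in $k$, the assumption $\epsilon_k/\epsilon_{k+1}\to\infty$ makes the residual mass collapse onto the smallest group: since $\epsilon_k\le\epsilon_2$ for $k\ge2$ and $\sum_k u_k<\infty$ (from the power-law tail of Theorem~\ref{group size power-law}), we get $\sum_{k\ge2}u_k\epsilon_k=O(\epsilon_2)=o(\epsilon_1)$, so $\sum_k u_k\epsilon_k=u_1\epsilon_1(1+o(1))$ and $\sum_k \phi(k)u_k\epsilon_k=\phi(1)u_1\epsilon_1(1+o(1))$. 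Substituting $h(k,\theta)=1-\epsilon_k(\theta)$ and expanding to first order in $\epsilon_1$ gives
\[
 r^{(D)}(\theta)-r=\frac{u_1\epsilon_1}{\sum_j u_j}\,\bigl(r-\phi(1)\bigr)\,(1+o(1)),
\]
which is strictly positive for $\theta$ near $1$ because $\phi(1)<r$ by the small-group hypothesis. Thus $r^{(D)}(\theta)>r$ on some interval $(\theta^{upper},1)$.

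Finally, I would package the two endpoint statements into thresholds: set $\theta^{lower}:=\sup\{\theta: r^{(D)}(\cdot)<r \text{ on }(0,\theta)\}$ and $\theta^{upper}:=\inf\{\theta: r^{(D)}(\cdot)>r\text{ on }(\theta,1)\}$; both defining sets are nonempty by the two limits above, so $\theta^{lower}>0$ and $\theta^{upper}<1$ and the desired inequalities hold for $\theta<\theta^{lower}$ and $\theta>\theta^{upper}$. The ordering $\theta^{lower}\le\theta^{upper}$ follows because no single $\theta$ can satisfy $r^{(D)}(\theta)<r$ and $r^{(D)}(\theta)>r$ simultaneously, so the two one-sided regions cannot overlap. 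I expect the main obstacle to be making the two limit arguments fully rigorous under only the stated qualitative hypotheses on $h$ — in particular justifying the interchange of limit and infinite sum when the weight escapes to infinity as $\theta\to0$, which relies on $\sum_k u_k<\infty$ together with $\phi\le 1$ and the telescoped likelihood-ratio bound; the $\theta\to1$ side is comparatively routine once the residual is shown to be dominated by the $k=1$ term.
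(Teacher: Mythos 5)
Your proposal is correct and takes essentially the same route as the paper's proof: for $\theta\to 0$ the tilted weight escapes to large groups (where the red ratio lies below $r$), and for $\theta\to 1$ the residual weight $1-h$ concentrates on small groups, so that, since the unweighted red group ratio equals $r$ exactly, the detection-weighted ratio must exceed $r$. The only cosmetic difference is that you collapse the residual mass onto $k=1$ via a first-order expansion while the paper collapses it onto all $k<k_F^{lower}$ and argues by complementarity; both versions also share the same implicit reading of the hypothesis as uniform separation (red ratios bounded away from $r$ at the two ends), which is needed to turn the limiting inequalities into strict ones.
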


We examine this result empirically in WhatsApp with a simulated fact-checking system. Assume that for fake news to be detected, it first needs to be reported to a fact-checking organization who prioritizes the fact-check. We assume that the number of reports received in a group of size $k$ follows the Poisson distribution with the parameter being $p\cdot k$, where $p$ captures the tendency of reporting fake news in the network. Without further assumptions, we set $p=0.5$. The fact-checker ranks all the reported pieces of content by volume; if two items have the same number of reports, the fact-checker ranks the one from the larger group higher. Finally, the fact-checker sets a percentage threshold $P$ to check items ranked within the top $P\%$ ranked items. We repeat this simulation 100 times, and report our findings in Figure \ref{fake news}. 

Note that the percentage threshold corresponds to the likelihood of all pieces of malcontent being detected. We see, in Figure \ref{fake news} (a) as more fake news is detected, the protection ratio crosses the average anti-BJP ratio. That is, if the fact-checking organizations focus purely on the volume of reports, it favors the majority. If there is an opportunity, however, to apply more resources to the fact-checking initiatives, {the fact-checker starts to protect more minorities, as the protection ratio becomes above the average anti-BJP ratio}. Similar trends are found also for the ratio of number of times red groups are checked, the ratio of the total number of people protected in red, and the ratio of the total number of red members getting protected.

\begin{figure}
\centering
\includegraphics[scale=0.45]{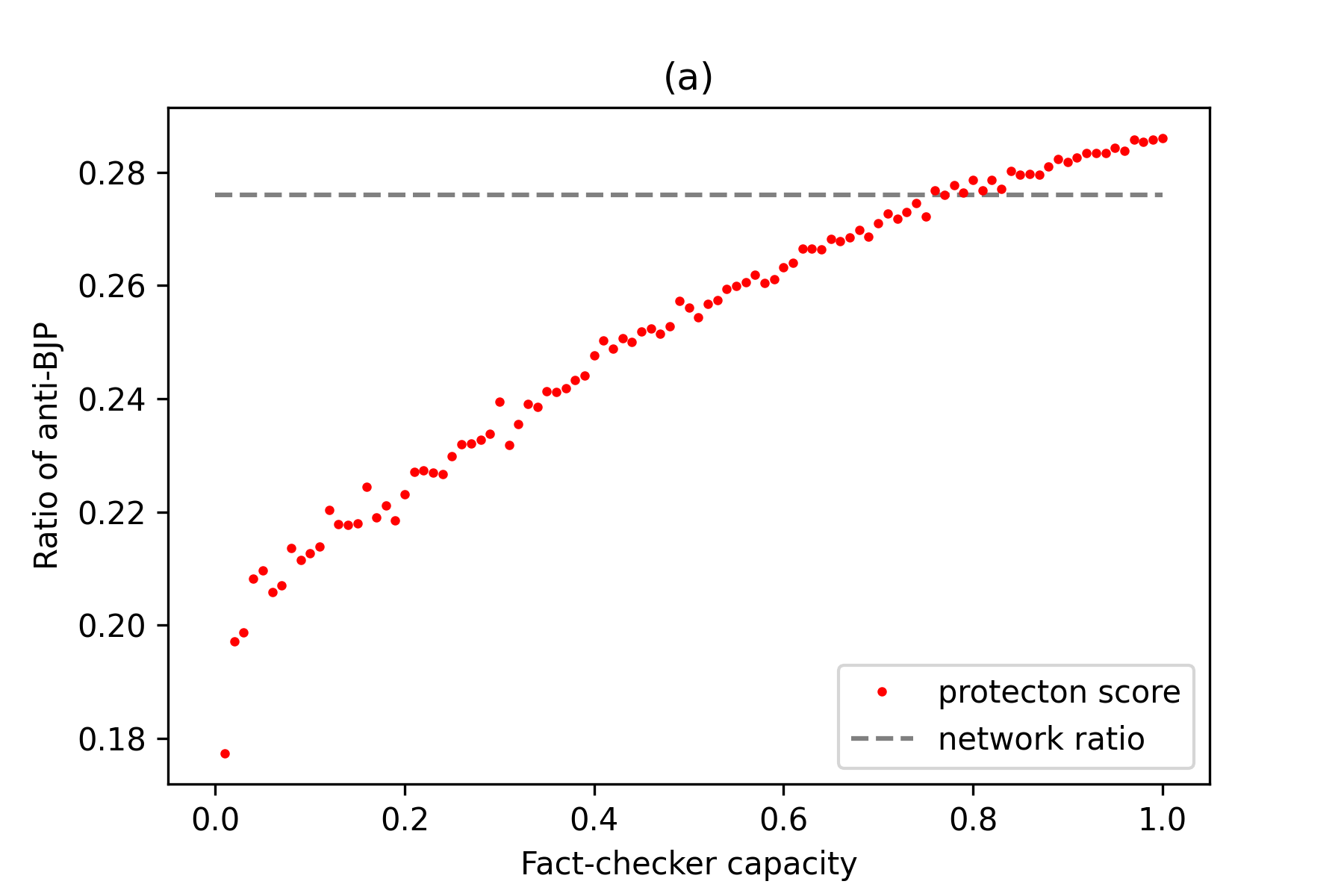}
\includegraphics[scale=0.45]{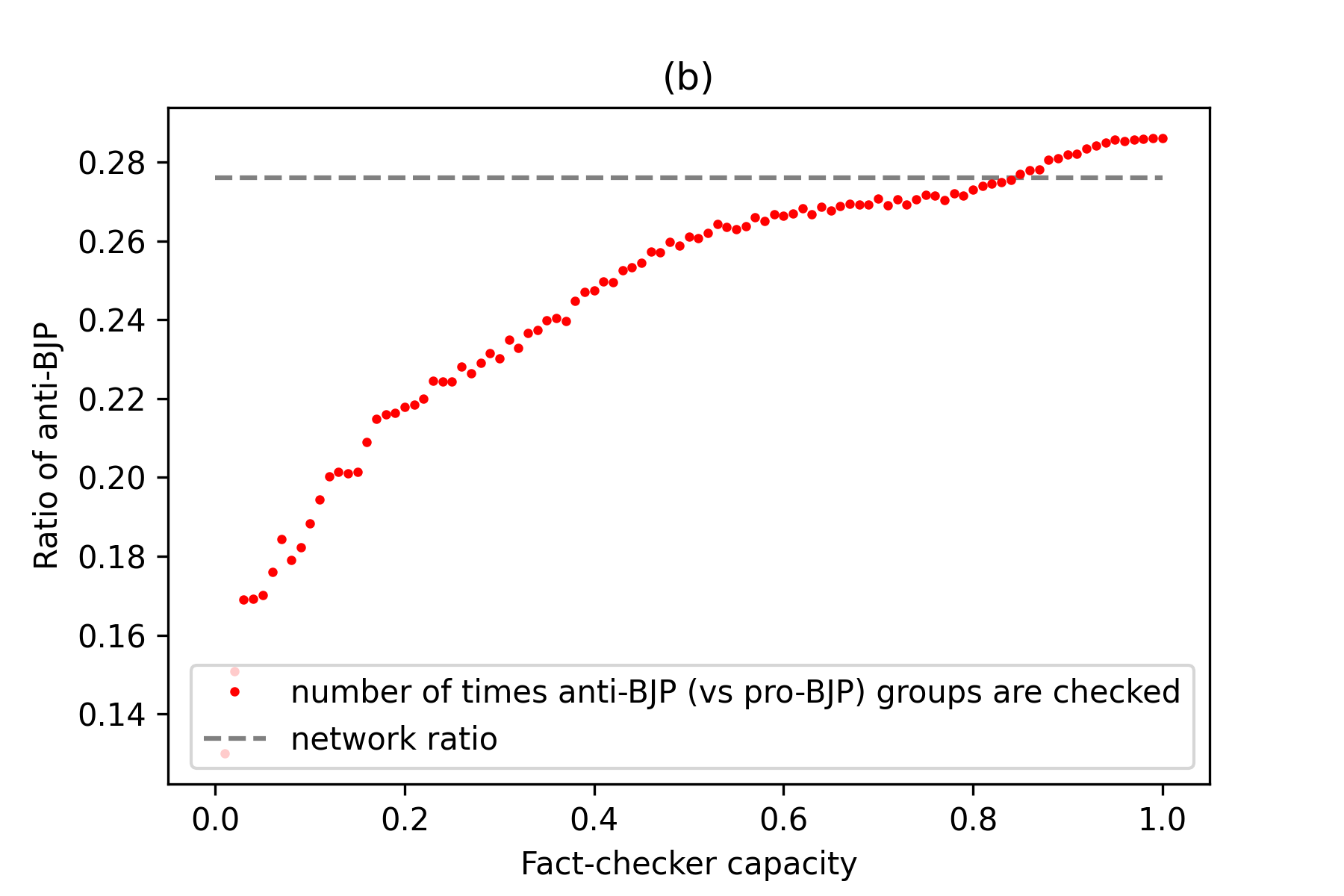}\\
\includegraphics[scale=0.45]{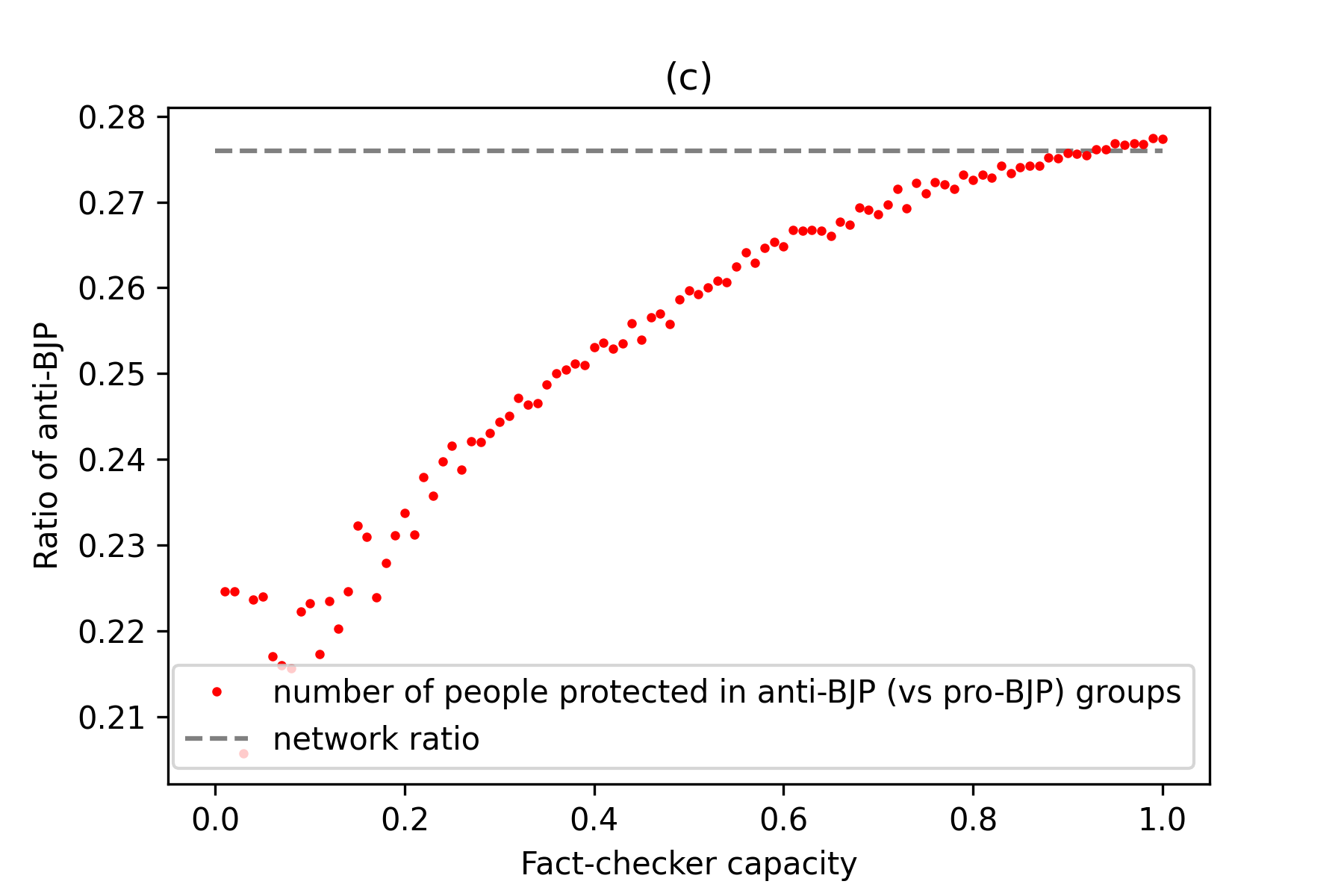}
\includegraphics[scale=0.45]{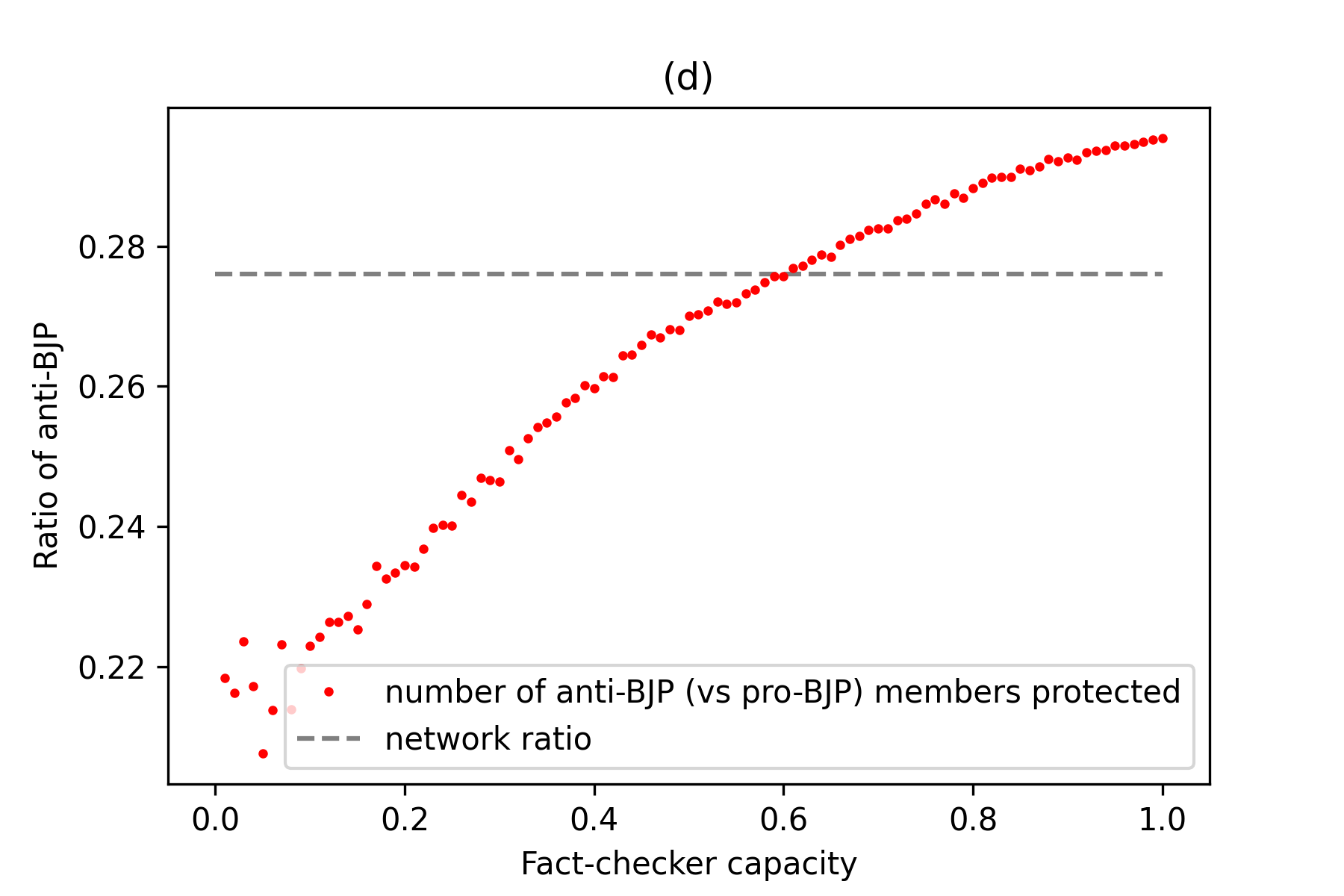}
\caption{Fact-checking simulations on WhatsApp: the glass-ceiling effect indicates that fact-checkers always protect more majority; however, we see that, as the detection strength gets larger, fact-checkers start protecting more minority than majority.}
\label{fake news}
\end{figure}

\section{Conclusions}

The graphs formed among us, as the structures of groups and communities connecting every individual, govern how today's information propagates and gets selectively curated. Bias emerges quickly and interacts with the simplest network primitives as well as complex algorithmic rules. This bias contributes to unequal opportunity among genders or disproportionate effects along political lines. Our results confirm that homophilous and rich get richer dynamics in the graph itself play a critical role in shaping the bias observed among multiple domains, paving the way for finding a common ground to counteract observed disparities. As our theoretical results suggest and empirical results confirm, the bias inside the tail or within the bulk of a popularity distribution can widely vary in orientation. We refer to this as a chasm between seemingly opposing views, but explain that its causes are not always in disparate treatment but may be simple systemic effects of selective homophily. This observation is critical as previous predictions of algorithmic bias on the tail are sometimes diametrically opposed to the case when a similar metric is examined at the lower end, including selecting items for fact-checking or choosing groups for targeted advertisement.

To keep our model generally applicable, we focus on the most commonly found dynamics which spans a range where popularity either plays no role or is entirely responsible for growth. This allows us to identify the necessary and sufficient conditions for the observed chasm to emerge, but that remains a crude unifying model that leaves many domain specific effects aside. We hope that our results encourage a renewed interest in a holistic view of either equitable representation or fairness guarantees for online content moderation. While each of those applications is beyond the scope of this paper, the empirical presence of a chasm and our simulations already suggests that, in order to achieve this goal, a new analysis beyond a narrow focus on tail effects is critical.

\section*{Acknowledgement}
This material is based upon work supported by the National Science Foundation under Grant No. 1761810. We would like to express our appreciation to Dr. Kiran Garimella and Prof. Dean Eckles from Massachusetts Institute of Technology for sharing their collection of the WhatsApp data, to Archis Chowdhury from BOOM for sharing his fact-checking experience with us, and to Ana-Andreea Stoica and Roland Mao for being our first readers. We are also very grateful for the generous support from the Data Science Institute and the Tow Center for Digital Journalism at Columbia University.

\bibliographystyle{ACM-Reference-Format}
\bibliography{draft1018}

\appendix

\section{Chasm effects in unipartite networks}\label{unipartite analysis}
Although we focus on the chasm effects in bipartite networks, the chasm effects and the glass-ceiling effects can also be observed and studied in unipartite social network. With a projected QQ membership network and an Instagram network, we examine the chasm effects in one-mode social networks.

\subsection{Unipartite network datasets}

\textbf{Instagram dataset\cite{stoica2018algorithmic}}
Instagram is a photo- and video-sharing platform where people like and comment on content. The Instagram dataset was collected between 2014 and 2015 and has a total of 553,628 different users whose genders were inferred from their names. Females make up 54.4\% of all users in this dataset. Even though females makeup more than half of the data, they are still considered the disadvantaged for two reasons: (1) other features of this network, like the degree distribution, suggest a bias against the female users; (2) we want to keep it consistent with prior works which have used this dataset.
\\\\
\textbf{Projected QQ membership}
We project the QQ group-member network introduced before to construct a QQ membership network. The projected QQ membership network consists of the node-set, and two nodes are joined by an undirected edge if and only if they are in the same group in the group-member network. If the two members share multiple groups, they are connected by multiple undirected edges.

\subsection{A model for unipartite networks}
Like in the bipartite networks, the observed chasm effects in unipartite networks can be captured in a homophilous model that combines the rich-get-richer and the equal-chance mechanism. We show that an simple extension of the unipartite model introduced in \cite{avin2015homophily} exhibits both chasm effect and glass-ceiling effect.

Specifically, at time $t = 2$, we initialize the unipartite network with one red member connecting to a blue member, and without loss of generalit, we define the collection of red members as the minority of the network.
\begin{itemize}
\item \textbf{Member Growth}: at time $t$, a new member $m^*$ joins the network.
\begin{itemize}
\item (\emph{minority-majority}) with probability $r$ ($0<r\leq 1/2$), the new member $m^*$ is colored red; 
\end{itemize}
\item \textbf{Connection Growth}: $m^*$ connects with an existing member, according to the following two steps:
    \begin{itemize}
	\item (\emph{rich-get-richer}) with probability $\xi$, $m^*$ picks a member $m$ with probability proportional to $\text{deg}(m)$. If $c(m^*) = c(m)$, $m^*$ connects wtih $m$ directly; otherwise, $m^*$ accepts the connection with probability $\rhop_{c(m^*)}$. If $m^*$ does not accept the connection, $m^*$ restarts from the beginning of the \emph{Connection Growth} until a new connection is built.    
    \item (\emph{equal-chance}) with probability $1-\xi$, $m^*$ uniformly picks a member $m$ at random. If $c(m^*) = c(m)$, $m^*$ connects with $m$ directly; otherwise, $m^*$ accepts the connection with probability $\rhou_{c(m^*)}$. If $m^*$ does not accept the connection, $m^*$ restarts from the beginning of the \emph{Connection Growth} until a new connection is built. 
\end{itemize}
\end{itemize}

Let $U_k(C)$ be the number of members with color $C$ and $k$ degrees. Following exact the same analysis approach as in the bipartite GSHM model, we get the degree distribution for red members and blue members as follows:

\begin{equation}\label{red_uni}
    U_1(R) = \frac{r }{1+CU_{R,1} + CU_{R,2}}, \,\,\,\,\,\,
    U_k(R) = U_{k-1}(R)\frac{(k-1)C_{R,1} + C_{R,2}}{1 + k \cdot CU_{R,1} + CU_{R,2}} \,\,\,\,\, \forall k \geq 2;
\end{equation}
\begin{equation}\label{blue_uni}
    U_1(B) = \frac{(1-r)}{1+CU_{B,1} + CU_{B,2}}, \,\,\,\,\,\,
    U_k(B) = U_{k-1}(B)\frac{(k-1)CU_{B,1} + CU_{B,2}}{1 + k\cdot CU_{B,1} + CU_{B,2}} \,\,\,\,\, \forall k \geq 2.
\end{equation}
Here,
\begin{align}
    CU_{R,1} & := \frac{1}{2}\left(\frac{r\xi}
    			{1-(1-\rhop_R)\xi (1-\alpha^*)-(1-\rhou_R)(1-\xi) (1-r)}
    			+ \frac{(1-r)\xi \rho_b^{(p)} }
    				{1-(1-\rhop_B)\xi \alpha^*-(1-\rhou_B)(1-\xi)r}\right),
        \\
    CU_{R,2} & := \frac{r (1-\xi) }
    			{1-(1-\rhop_R)\xi (1-\alpha^*)-(1-\rhou_R)(1-\xi) (1-r)}
    			+ \frac{ (1-r)\rhou_B(1-\xi)}
    				{1-(1-\rhop_B)\xi \alpha^*-(1-\rhou_B)(1-\xi)r};
\end{align}
and
\begin{align}
	CU_{B,1} & :=\frac{1}{2}\left( \frac{ (1-r)\xi }
    			{1-(1-\rhop_B)\xi \alpha^* - (1-\rhou_B)(1-\xi) r}
    			+ \frac{r\rhop_R\xi }
    				{1-(1-\rhop_R)\xi (1-\alpha^*) - (1-\rhou_R)(1-\xi) (1-r)}\right),
        \\
    CU_{B,2} & := \frac{(1-r)(1-\xi)}
    			{1-(1-\rhop_B)\xi \alpha^* - (1-\rhou_B)(1-\xi) r}
    			+ \frac{r\rhou_R (1-\xi) }
    				{1-(1-\rhop_R)\xi (1-\alpha^*) - (1-\rhou_R)(1-\xi) (1-r)},
\end{align}
where $\alpha u^*$ denotes the limit of the sum of degrees of red members over sum of degrees of all members, as $t$ goes to infinity, and $\alpha u^{*}$ is the unique solution in $(0,1)$ that satisfies
\begin{align}
	\alpha u^* &= \frac{1}{2}\left(1 + r  \frac{\xi \cdot\alpha u^* + (1-\xi)r}{1- (1-\rho_r^{(p)})\xi (1-\alpha u^*) - (1-\rho_r^{(u)})(1-\xi)(1-r)}\right.\\
	&-\left.(1-r)\frac{\xi(1-\alpha u^*) + (1-\xi)(1-r)}{1-(1-\rho_b^{(p)})\xi \cdot\alpha u^* -(1-\rho_b^{(u)})(1-\xi)r })
	\right).
\end{align}

Furthermore, let $r^{M,N}_{k,t}(R,C)$ denote the ratio of expected red connections for members of color $C$ with degree $k$ at time $t$. With the same techniques as in Lemma \ref{lemma_binomial_result_general}, we have 
\begin{equation}
	r^{(M, N)}_{k} (R,R) :=\lim_{t \rightarrow \infty} r^{(M, N)}_{k, t} (R,R) =  \frac{\sum_{j=1}^k pu_{RR,j}}{k},
\end{equation}
\begin{equation}
	r^{(M, N)}_{k} (R,B) :=\lim_{t \rightarrow \infty} r^{(M, N)}_{k, t} (R,B) =  \frac{ \sum_{j=1}^k pu_{RB,j}}{k},
\end{equation}
where
\begin{align}
pu_{RR,1} &= \frac{r(\xi \cdot\alpha u^* + (1-\xi)r)}{r(\xi\cdot\alpha u^* + (1-\xi)r) + (1-r)(\xi \cdot\alpha u^* \rho_b^{(p)} + (1-\xi)r\rho_b^{(u)})}\\
pu_{RB,1} &= \frac{r(\xi (1-\alpha u^*)\rho_r^{(p)} + (1-\xi)(1-r)\rho_r^{(u)})}{r(\xi (1-\alpha u^*)\rho_r^{(p)} + (1-\xi)(1-r)\rho_r^{(u)}) + (1-r)(\xi (1-\alpha u^*) + (1-\xi)(1-r))}\\
pu_{RR,j} &= \frac{pu^{(0)}_{RR,j}}{pu^{(0)}_{RR,j} + pu^{(0)}_{BR,j}}, \,\,\,\,\,\,\,\,\,\,
pu_{RB,j} = \frac{pu^{(0)}_{RB,j}}{pu^{(0)}_{RB,j} + pu^{(0)}_{BB,j}}, \,\,\,\,\,\,\,\,\,\,
 j\geq 2
\end{align}
with
\begin{align}
	pu^{(0)}_{RR,j} &= r\frac{\xi j/2 + (1-\xi)/\alpha}{1-(1-\rhop_R)\xi(1-\alpha u^*) - (1-\rhou_R)(1-\xi) (1-r)}, \\
	pu^{(0)}_{BR,j} &= (1-r)\frac{\rhop_B \xi j/2 + \rhou_B(1-\xi)/\alpha}{1-(1-\rhop_B)\xi \cdot\alpha u^* - (1-\rhou_B)(1-\xi) r},\\
	pu^{(0)}_{RB,j} &= r\frac{\rhop_R \xi j/2 + \rhou_R(1-\xi)/\alpha}{1-(1-\rhop_R)\xi(1-\alpha u^*) - (1-\rhou_R)(1-\xi) (1-r)}, \\
	pu^{(0)}_{BB,j} &= (1-r)\frac{\xi j/2 + (1-\xi)/\alpha}{1-(1-\rhop_B)\xi \cdot\alpha u^* - (1-\rhou_B)(1-\xi) r}.
\end{align}
Given $k$, the expected ratio of female connection for members of degree $k$ is 
\begin{align}\label{unipartite member ratio}
r^{(M, N)}_{k} (R):=\frac{r^{(M, N)}_{k} (R,R)\cdot U_k(R) +r^{(M, N)}_{k} (R,B)\cdot U_k(B)}{U_k(R) + U_k(B)}.
\end{align}

\subsection{Fitting the unipartite model on real data}
Note that $r$ can be inferred directly from the dataset. The rest of the parameters $\xi, \rho_r^{(p)}, \rho_b^{(p)}, \rho_r^{(u)}, \rho_b^{(u)}$ can be inferred by finding an optimal solution that minimizes $\vert\sum_{k=1}^{n}\frac{\hat{U_k}(R)}{\hat{U_k}(B)} - \sum_{k=1}^{n}\frac{{U_k}(R)}{{U_k}(B)}\vert$, where $\hat{U_k}(R)$ and $\hat{U_k}(B)$ are obtained through (\ref{red_uni}) and (\ref{blue_uni}), and ${U_k}(R)$ and ${U_k}(B)$ are empirical values observed from the dataset. With the set of $r, \xi, \rho_r^{(p)}, \rho_b^{(p)}, \rho_r^{(u)}, \rho_b^{(u)}$, we can use (\ref{unipartite member ratio}) to obtain the numerical values for the ratio of female connection.

\begin{figure}[H]
\includegraphics[height=0.225\textheight, width=0.5\linewidth]{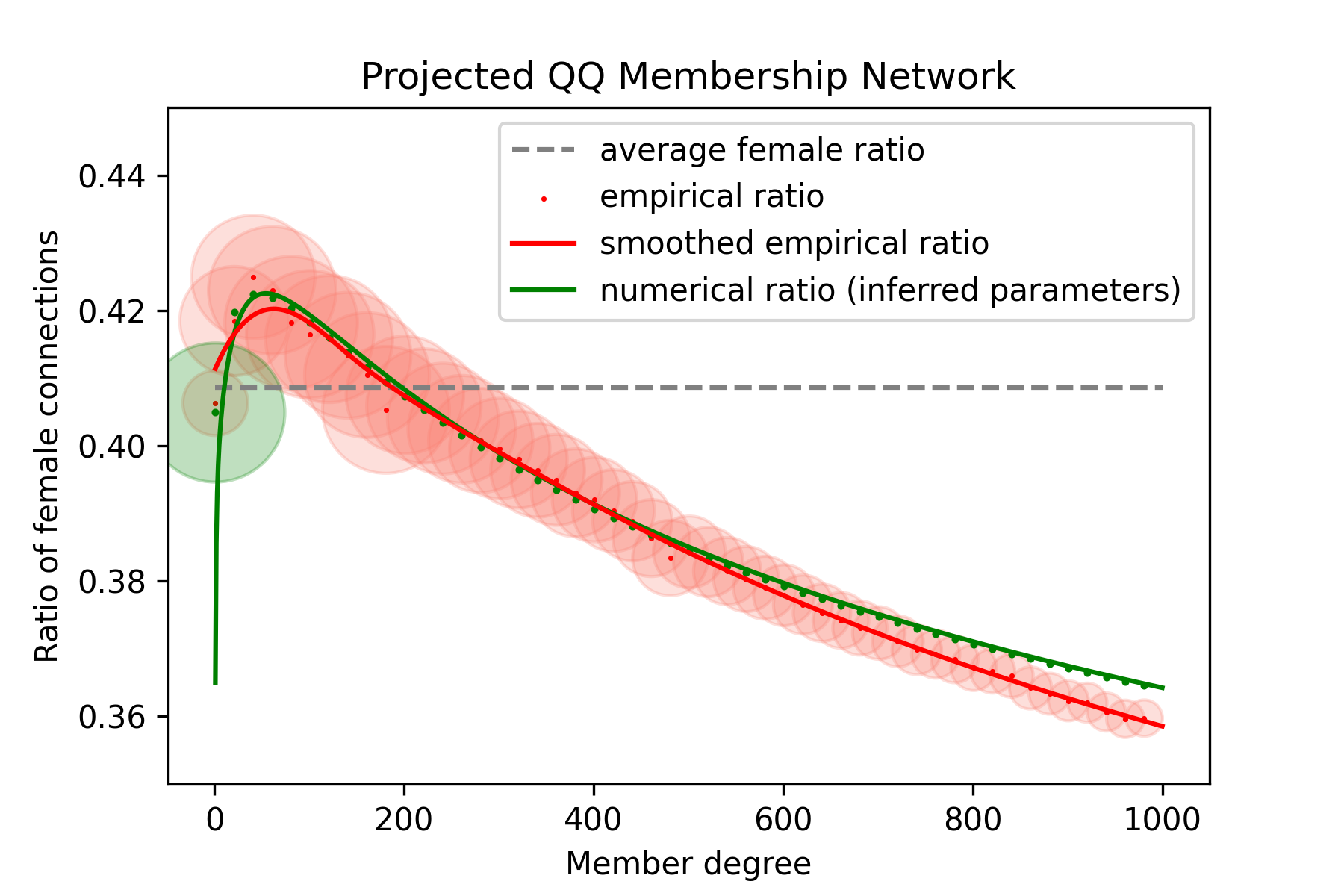}
\includegraphics[scale=0.56]{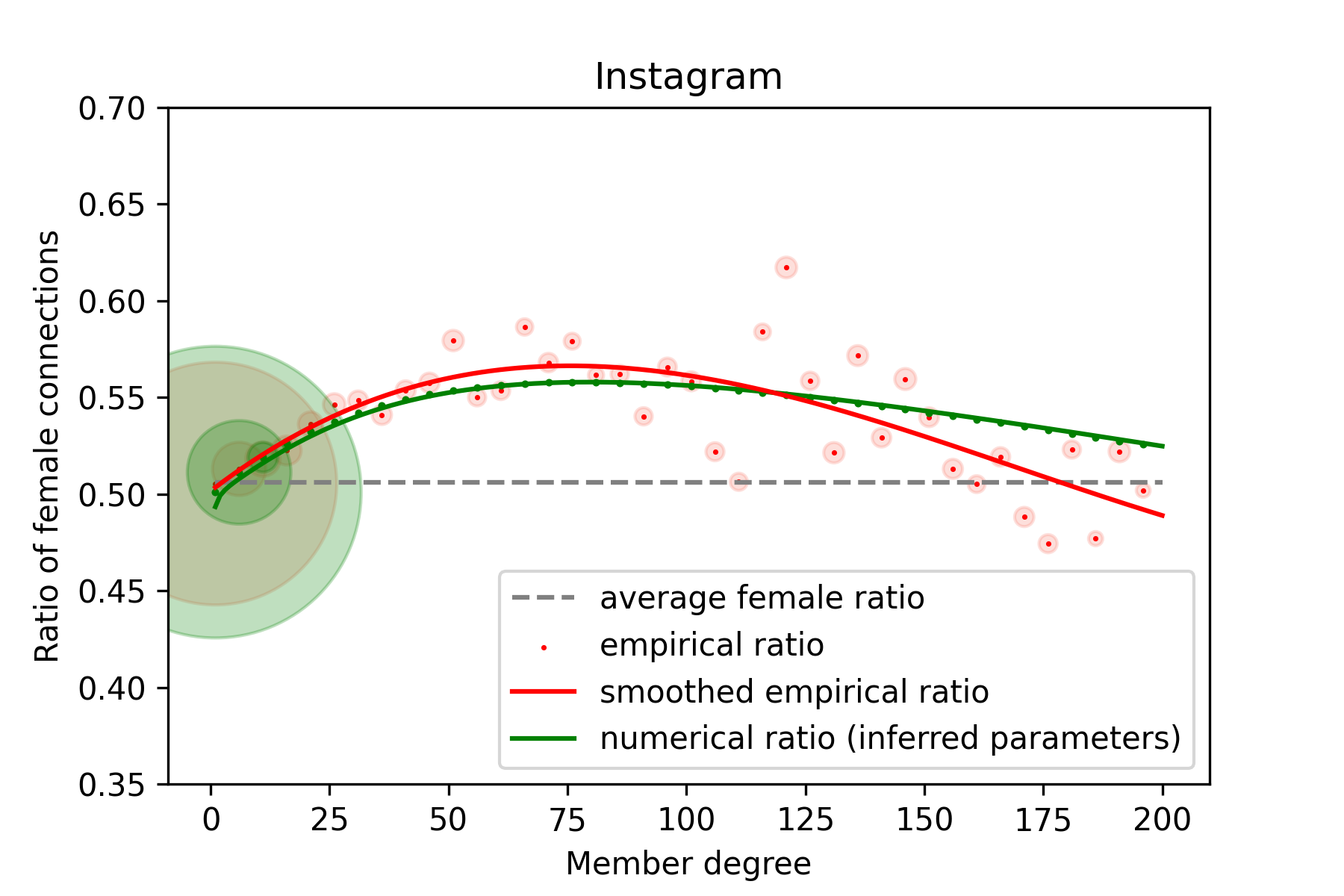}
\caption{The radii of the red and green circles represent the member degree distributions in the empirical datasets and in the numerical simulations respectively. Our unipartite model captures the chasm effect, the glass ceiling effect, as well as the member degree at which the ratio changes its monotonicity in the two unipartite network datasets. The model also captures the distribution of member degree for Instagram.}
\label{unipartite_fit}
\end{figure}

We observe in Figure \ref{unipartite_fit} that our unipartite model clearly captures the chasm effect and the glass-ceiling effect for both the projected QQ membership network and the Instagram network. The model also locates the member degree on which the ratio of female connections changes monotonicity well for both datasets. Furthermore, our unipartite model also captures the member degree distribution in the Instagram dataset, but fails to capture the member degree distribution in the projected QQ membership network. This is not surprising, as the projection can connect strangers that are within the same group in the bipartite network as friends in the projected unipartite network, and gives nodes in the unipartite network relatively high degree.

\section{An adjusted GSHM model}\label{gender ratio color}
In both the SHM and the GSHM models, we assume that the group creator always chooses the topic of the creator's own affiliation, and can maintain the chosen topic as the main focus of the group. This assumption is particularly applicable to the political party setting. However, the assumption can be less obvious in the study of demographic imbalance, as males can also create supportive groups for females. In the later scenario, we can adjust the \emph{Group Growth} step in the GSHM model as the following: 
\begin{itemize}
	\item \textbf{Group Growth}: with probability $\eta$ ($0<\eta <1$), the member creates a group, and the group is colored red with probability $r$ and blue with probability $1-r$.
\end{itemize}
 We refer this adjusted model as \emph{adjusted-GSHM}.

It is easy to check that the adjust-GSHM model is \emph{almost} equivalent to the GSHM model, in the sense that Theorem \ref{group size power-law}, Theorem \ref{member degree power-law}, Theorem \ref{lemma_bump}, and their corresponding corollaries for the GSHM model apply exactly to the adjust-GSHM model. The only difference occurs in Lemma \ref{member_ratio_nonmonotone}, where (\ref{rm_ratio_1}) needs to be changed to 
	\begin{align}
		\lim_{t \rightarrow \infty} r^{(M, G)}_{1, t} (R)&= r.
	\end{align}

Therefore, the adjusted-GSHM model is equivalent to the GSHM model when studying the change of the ratio of minority group over different group sizes, and is equivalent to the GSHM model when studying the change of the ratio of minority group member over groups with large sizes.

\section{Proof of Theorem \ref{group size power-law}}\label{proof of thm 5.1}

\begin{reptheorem}{group size power-law}
Let $\{N(M\cup G, t, \Theta\}$ be a sequence of networks produced by the BGMG model. Assume that $\rhop_R, \rhop_B > 0$. The red group-size distribution $G_k(R)$ and the blue group-size distribution $G_k(B)$ asymptotically follow the power law distributions; specifically, as $t$ goes to infinity,
\begin{align}
G_k(R)\propto k^{-\beta(R)}, \,\,\,\,\,\,
G_k(B)\propto k^{-\beta(B)},
\end{align}
with $ \beta(R) = 1 + \frac{1}{C_{R,1}}$ and $\beta(B) = 1 + \frac{1}{C_{B,1}}$, where
\begin{align}
    C_{R,1} & := \frac{r (1-\eta) \xi }
    			{1-(1-\rhop_R)\xi (1-\alpha^*)-(1-\rhou_R)(1-\xi) (1-r)}
    			+ \frac{(1-r) (1-\eta)\rhop_B\xi }
    				{1-(1-\rhop_B)\xi \alpha^*-(1-\rhou_B)(1-\xi)r},
        \\
	C_{B,1} & := \frac{(1-r) (1-\eta) \xi }
    			{1-(1-\rhop_B)\xi \alpha^* - (1-\rhou_B)(1-\xi) r}
    			+ \frac{r (1-\eta)\rhop_R\xi }
    				{1-(1-\rhop_R)\xi (1-\alpha^*) - (1-\rhou_R)(1-\xi) (1-r)},
\end{align}
and $\alpha^*$ is the unique number $\in (0,1)$ satisfying
    	\begin{equation}
    		\label{eq_alpha_star}
    	\alpha^* = r \eta + \frac{r (1-\eta) (\xi \alpha^* + (1-\xi)r) }
    			{1-(1-\rhop_R)\xi (1-\alpha^*)-(1-\rhou_R)(1-\xi) (1-r)}
    			+ \frac{(1-r) (1-\eta) (\rhop_B \xi \alpha^* + \rhou_B (1-\xi)r)}
    				{1-(1-\rhop_B)\xi \alpha^*-(1-\rhou_B)(1-\xi)r }.
    	\end{equation}

\end{reptheorem}

\begin{proof}
We develop a recurrence for $\mathbb{E}\left(G_{k,t}(R)\right)$. 
First, define
\begin{align}
	p_t^{RR}(k) &:= \mathbb{P}(\text{a red member joins a red group with size $k$ at time t}), \\
	p_t^{BR}(k) &:= \mathbb{P}(\text{a blue member joins a red group with size $k$ at time t}).
\end{align}

By our construction of the model, it is easy to check that,
\begin{align}\label{p_t_RR}
    p_t^{RR}(k) =& (\alpha r + (1-\alpha) r^{(E, M)}_t(R)) (1-\eta) \frac{\xi \frac{k}{t} + (1-\xi) \frac{1}{G_{t}(R)+G_{t}(B)}}
    	{1-(1-\rhop_R)\xi r_t^{(E,G)}(B) - (1-\rhou_R)(1-\xi) \frac{G_{t}(B)}{G_{t}(R)+G_{t}(B)}},\\
    p_t^{BR}(k) =& (\alpha (1-r) + (1-\alpha) (1-r^{(E, M)}_t(R))) (1-\eta) \frac{\rhop_B \xi \frac{k}{t} + \rhou_B (1-\xi) \frac{1}{G_{t}(R)+G_{t}(B)}}{1-(1-\rhop_B)\xi r_t^{(E,G)}(R) - (1-\rhou_B)(1-\xi) \frac{G_{t}(R)}{G_{t}(R)+G_{t}(B)}}. 
\end{align}
Note that a red group of degree $k$ at time $t+1$ could have arisen from three scenarios:
\begin{enumerate}
    \item at time $t$, it was a red group of size $k$, and no new member joins at time $t+1$;
    \item at time $t$, it was a red group of size $k-1$, and a new member joins at time $t+1$;
    \item in the special case of $k=1$, a red group did not exist at time $t$ can appear if a red person creates it.
    \end{enumerate}
Therefore,
\begin{align}
    \mathbb{E}\left(G_{k,t+1}(R)\vert \mathcal{F}_t\right) 
    	= & G_{k,t}(R)\left(1-p_t^{RR}(k) - p_t^{BR}(k) \right) \\
    +& G_{k-1,t}(R)\left(p_t^{RR}(k-1) + p_t^{BR}(k-1) \right),
\end{align}
where $\mathcal{F}_t$ is the $\sigma$-field containing the information of the graph until time $t$. Note that
\begin{align}
    p_t^{RR}(k) + p_t^{BR}(k) = \frac{A_t(R) k + B_t(R)}{t},
\end{align}

\begin{align}\label{A_t}
    A_t(R)&: =   \frac{(\alpha r + (1-\alpha) r^{(E, M)}_t(R)) (1-\eta) \xi }
    				{1-(1-\rhop_R)\xi r_t^{(E,G)}(B) - (1-\rhou_R)(1-\xi) \frac{G_{t}(B)}{G_{t}(R)+G_{t}(B)}}\\
    			&+  \frac{(\alpha (1-r) + (1-\alpha) (1-r^{(E, M)}_t(R))) (1-\eta)\rhop_B \xi}
    			{1-(1-\rhop_B)\xi r_t^{(E,G)}(R) - (1-\rhou_B)(1-\xi) \frac{G_{t}(R)}{G_{t}(R)+G_{t}(B)}},\\
    B_t(R)&: = \frac{(\alpha r + (1-\alpha) r^{(E, M)}_t(R)) (1-\eta) (1-\xi) \frac{t}{G_{t}(R)+G_{t}(B)}}
    				{1-(1-\rhop_R)\xi r_t^{(E,G)}(B) - (1-\rhou_R)(1-\xi) \frac{G_{t}(B)}{G_{t}(R)+G_{t}(B)}}\\
    			&+  \frac{(\alpha (1-r) + (1-\alpha) (1-r^{(E, M)}_t(R))) (1-\eta)\rhou_B (1-\xi) \frac{t}{G_{t}(R)+G_{t}(B)}}
    			{1-(1-\rhop_B)\xi r_t^{(E,G)}(R) - (1-\rhou_B)(1-\xi) \frac{G_{t}(R)}{G_{t}(R)+G_{t}(B)}}.
\end{align}
We then have 
\begin{equation}
\label{eq_power_law_internal_1}
    \mathbb{E}\left(G_{k,t+1}(R)\vert \mathcal{F}_t\right) = G_{k,t} (R)\left(1-\frac{A_t(R) k + B_t(R)}{t}\right) + G_{k-1,t}(R) \frac{A_t(R) (k-1) + B_t(R)}{t}.
\end{equation}

When $k=1$, taking the probability of a red group being created into consideration, we have
\begin{equation}
\label{eq_power_law_internal_2}
\mathbb{E}\left(G_{1,t+1}(R)\vert \mathcal{F}_t\right) = G_{1,t} \left(1 - \frac{A_t(R) + B_t(R)}{t}\right) + \alpha \cdot r  \cdot \eta + (1-\alpha)\cdot r^{(E, M)}_t(R) \cdot \eta.
\end{equation}
By lemma \ref{lemma_pre_convergence}, We can show that
\begin{equation}
    \lim_{t\rightarrow \infty}  A_t(R) = C_{R,1}, \,\, 
    \lim_{t\rightarrow \infty}  B_t(R) = C_{R,2}, \,\, a.s,
\end{equation}
where
\begin{align}
\label{eq_cr}
    C_{R,1} & := \frac{r (1-\eta) \xi }
    			{1-(1-\rhop_R)\xi (1-\alpha^*)-(1-\rhou_R)(1-\xi) (1-r)}
    			+ \frac{(1-r) (1-\eta)\rhop_B\xi }
    				{1-(1-\rhop_B)\xi \alpha^*-(1-\rhou_B)(1-\xi)r},
        \\
    C_{R,2} & := \frac{r (1-\eta) (1-\xi) \frac{1}{\eta} }
    			{1-(1-\rhop_R)\xi (1-\alpha^*)-(1-\rhou_R)(1-\xi) (1-r)}
    			+ \frac{(1-r) (1-\eta)\rhou_B(1-\xi) \frac{1}{\eta} }
    				{1-(1-\rhop_B)\xi \alpha^*-(1-\rhou_B)(1-\xi)r}.
\end{align}
By Lemma \ref{lem3.1},  $G_k(R)$ has the following expressions:
\begin{equation}\label{power_law_fit1}
    G_1(R) = \frac{r \eta}{1+C_{R,1} + C_{R,2}}, \,\,\,\,\,\,
    G_k(R) = G_{k-1}(R)\frac{(k-1)C_{R,1} + C_{R,2}}{1 + k C_{R,1} + C_{R,2}} \,\,\,\,\, \forall k \geq 2.
\end{equation}
This completes the proof for $G_k(R)$, and we can use the same strategy for $G_k(B)$, and show that
\begin{equation}\label{power_law_fit2}
    G_1(B) = \frac{(1-r)\rho_B \eta}{1+C_{B,1} + C_{B,2}}, \,\,\,\,\,\,
    G_k(B) = G_{k-1}(B)\frac{(k-1)C_{B,1} + C_{B,2}}{1 + kC_{B,1} + C_{B,2}} \,\,\,\,\, \forall k \geq 2,
\end{equation}
where 
\begin{align}
\label{eq_cr}
	C_{B,1} & := \frac{(1-r) (1-\eta) \xi }
    			{1-(1-\rhop_B)\xi \alpha^* - (1-\rhou_B)(1-\xi) r}
    			+ \frac{r (1-\eta)\rhop_R\xi }
    				{1-(1-\rhop_R)\xi (1-\alpha^*) - (1-\rhou_R)(1-\xi) (1-r)},
        \\
    C_{B,2} & := \frac{(1-r) (1-\eta) (1-\xi) \frac{1}{\eta}  }
    			{1-(1-\rhop_B)\xi \alpha^* - (1-\rhou_B)(1-\xi) r}
    			+ \frac{r (1-\eta)\rhou_R (1-\xi) \frac{1}{\eta}  }
    				{1-(1-\rhop_R)\xi (1-\alpha^*) - (1-\rhou_R)(1-\xi) (1-r)}.
\end{align}
Using the same argument of the proof of \cite[Theorem 4.12]{avin2015homophily} completes the proof of the power law results.
\end{proof}
\begin{lemma}{\cite[Lemma 3.1]{chung2006complex}}
\label{lem3.1}
Let $(a_t), (b_t), (c_t)$ be three sequences such that $a_{t+1} = \left(1- \frac{b_t}{t}\right)a_t + c_t$, $\lim_{t\rightarrow \infty} b_t = b>0$, and $\lim_{t\rightarrow \infty} c_t = c$. Then $\lim_{t\rightarrow \infty} a_t/t$ exists and its value is
\begin{equation}
    \lim_{t\rightarrow \infty} \frac{a_t}{t} = \frac{c}{1+b}.
\end{equation}
\end{lemma}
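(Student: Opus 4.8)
The plan is to reduce the affine recurrence to a mean-reverting (``stochastic-approximation'') form by passing to the normalized sequence $x_t := a_t/t$, and then to show directly that $x_t$ converges to the claimed limit $\ell := c/(1+b)$. First I would rewrite the recurrence: dividing $a_{t+1} = (1 - b_t/t)a_t + c_t$ by $t+1$ and using $\tfrac{(1-b_t/t)\,t}{t+1} = 1 - \tfrac{1+b_t}{t+1}$ gives
\begin{equation}
x_{t+1} = x_t + \frac{1}{t+1}\bigl(c_t - (1+b_t)x_t\bigr).
\end{equation}
This exhibits $\ell = c/(1+b)$ as the asymptotic stable fixed point of the averaged dynamics. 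Writing $y_t := x_t - \ell$ and using $c = (1+b)\ell$, the error obeys
\begin{equation}
y_{t+1} = \Bigl(1 - \tfrac{1+b_t}{t+1}\Bigr)y_t + \frac{\zeta_t}{t+1}, \qquad \zeta_t := (c_t - c) - (b_t - b)\ell,
\end{equation}
where $\zeta_t \to 0$ by hypothesis. It then remains to prove $y_t \to 0$.

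Next I would solve this linear recurrence explicitly with an integrating factor. Since $b_t \to b > 0$, there is $T_0$ and a constant $\kappa := 1 + b/2 > 1$ with $1 + b_t \ge \kappa$ and $0 \le p_s := \tfrac{1+b_s}{s+1} < 1$ for all $s \ge T_0$. Setting $Q_s^{(t)} := \prod_{u=s}^{t-1}(1-p_u)$ (empty product $=1$), unrolling the recurrence from a base index $T \ge T_0$ yields
\begin{equation}
y_t = Q_T^{(t)}\,y_T + \sum_{s=T}^{t-1} \frac{\zeta_s}{s+1}\,Q_{s+1}^{(t)}.
\end{equation}
The homogeneous term vanishes: because $\sum_u p_u = \infty$, we get $\log Q_T^{(t)} \le -\sum_{u=T}^{t-1} p_u \to -\infty$, so $Q_T^{(t)} \to 0$ as $t\to\infty$. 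For the forcing term, given $\epsilon>0$ I would fix $T$ so large that $|\zeta_s| \le \epsilon$ for $s \ge T$, then bound $\tfrac{1}{s+1} = \tfrac{p_s}{1+b_s} \le \tfrac{p_s}{\kappa}$ and invoke the telescoping identity $\sum_{s=T}^{t-1} p_s Q_{s+1}^{(t)} = 1 - Q_T^{(t)} \le 1$ (which follows from $p_s Q_{s+1}^{(t)} = Q_{s+1}^{(t)} - Q_s^{(t)}$). This gives $|y_t| \le Q_T^{(t)}|y_T| + \epsilon/\kappa$, so letting $t\to\infty$ and then $\epsilon\to0$ yields $y_t \to 0$, i.e. $a_t/t \to \ell$.

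The routine parts are the algebraic rewriting and the telescoping identity; the one place that needs care is controlling the convolution of the vanishing forcing $\zeta_s$ against the slowly decaying products $Q_{s+1}^{(t)}$. The key observation that makes this clean is that the step sizes $1/(s+1)$ are, up to the factor $1/(1+b_s) \le 1/\kappa$, exactly the increments $p_s$ appearing in the telescoping sum, so the total weight of the forcing is uniformly bounded by $1/\kappa$ no matter how slowly $Q^{(t)}$ decays. I expect this uniform-weight bound to be the main obstacle, since a naive estimate of $\sum_s \tfrac{1}{s+1}Q_{s+1}^{(t)}$ would otherwise appear to diverge; once it is in hand, the convergence $x_t \to \ell$ is immediate. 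An alternative route would solve the recurrence via the integrating factor $g_t = \prod_{s<t}(1-b_s/s) \asymp t^{-b}$ and apply Stolz--Ces\`aro to $g_t\sum_{s} c_s/g_{s+1}$, but I find the fixed-point formulation above cleaner to make fully rigorous.
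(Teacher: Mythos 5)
Your proof is correct, but note that the paper itself gives no proof of this statement: it is quoted verbatim (as Lemma~\ref{lem3.1}) from Chung and Lu \cite{chung2006complex}, so there is no in-paper argument to compare against. Your derivation is a valid, self-contained replacement for that citation. Every step checks out: the normalization $x_t = a_t/t$ does satisfy $x_{t+1} = \bigl(1-\tfrac{1+b_t}{t+1}\bigr)x_t + \tfrac{c_t}{t+1}$, the error recurrence for $y_t = x_t - \ell$ with forcing $\zeta_t = (c_t-c)-(b_t-b)\ell \to 0$ is exact, and the two estimates that carry the proof are sound: $Q_T^{(t)} \to 0$ because $\sum_u p_u \ge \kappa \sum_u \tfrac{1}{u+1} = \infty$, and the telescoping identity $\sum_{s=T}^{t-1} p_s Q_{s+1}^{(t)} = 1 - Q_T^{(t)} \le 1$ (valid since $0 \le p_s < 1$ for $s \ge T_0$, so all $Q_s^{(t)} \in (0,1]$) converts the step sizes $\tfrac{1}{s+1} \le \tfrac{p_s}{\kappa}$ into a uniformly bounded total weight. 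This is exactly the right mechanism for handling a forcing term that only tends to zero, with no rate. For comparison, the textbook proof in Chung--Lu reaches the same conclusion by a slightly different elementary route: it fixes $\epsilon > 0$, replaces $b_t, c_t$ by their $\epsilon$-bands, and sandwiches $\limsup$ and $\liminf$ of $a_t/t$ between $\tfrac{c\pm\epsilon}{1+b\mp\epsilon}$ via induction against constant-coefficient comparison recurrences. Your integrating-factor/telescoping argument buys a cleaner one-pass estimate $|y_t| \le Q_T^{(t)}|y_T| + \epsilon/\kappa$ in place of the two-sided sandwich, at the cost of slightly more bookkeeping in setting up $Q_s^{(t)}$; either argument suffices for the way the lemma is used in Theorem~\ref{group size power-law}, where it is applied to the conditional-expectation recurrences (\ref{eq_power_law_internal_1}) and (\ref{eq_power_law_internal_2}).
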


\section{Proof of Theorem \ref{member degree power-law}}\label{proof of thm 5.2}
\begin{replemma}{member degree power-law}
 Let $\{N(M\cup G, t, \Theta\}$ be a sequence of networks produced by the BGMG model. The red member-degree distribution $M_k(R)$ and the blue member-degree distribution $M_k(B)$ asymptotically follow the power law distributions with the same power; specifically, as $t$ goes to infinity,
\begin{align}
M_k(R)\propto k^{-\left(1+\frac{1}{1-\alpha}\right)}, \,\,\,\,\,\,
M_k(B)\propto k^{-\left(1+\frac{1}{1-\alpha}\right)}.
\end{align}
\end{replemma}
\begin{proof}
For any $k>1$, a red member of degree $k$ at time $t+1$ could have arisen from two scenarios:
\begin{enumerate}
\item at time $t$, it was a red member of degree $k$, and not chosen at time $t+1$;
\item at time $t$, it has size $k=1$ and chosen.
\end{enumerate}
Thus,
\begin{equation}
\mathbb{E}\left(M_{k,t+1}(R) \vert \mathcal{F}_t \right) = M_{k,t}(R)\left(1-(1-\alpha)  \cdot \frac{k}{t}\right) + M_{k-1,t}(R)\left((1-\alpha) \cdot \frac{k-1}{t}\right).
\end{equation}
When $k=1$, a red member of degree $1$ at time $t+1$ could have arisen from:
\begin{enumerate}
\item at time $t$, it was a red member of degree $1$, and not chosen at time $t+1$;
\item a new member joins the network at time $t$.
\end{enumerate}
\begin{equation}
\mathbb{E}\left(M_{1,t+1}(R) \vert \mathcal{F}_t \right) = M_{1,t}(R)\left(1-(1-\alpha) \cdot \frac{1}{t} \right) + \alpha \cdot r.
\end{equation}
Therefore, $M_k(R)$ has the following expressions:
\begin{equation}
M_1(R) = \frac{\alpha \cdot r}{2-\alpha}, \text{ and }M_k(R) = M_{k-1}(R)\frac{(1-\alpha)(k-1)}{1 + (1-\alpha)\cdot k}.
\end{equation}
Hence, $M_k(R) \propto k^{-\left(1+\frac{1}{1-\alpha}\right)}$. Exactly same argument holds for $M_k(B)$.
\end{proof}

\section{Proof of Theorem \ref{lemma_bump}}\label{proof group bump}
\begin{reptheorem}{lemma_bump}
	Following the same notations as in Theorem \ref{group size power-law}. Assume $C_{R,1} < C_{B,1}$. then the group ratio sequence $\{G_k(R)/G_k(B), k \geq 1\}$ has the chasm effect against red, if and only if $k^* > 2$, where
	\begin{equation}
		k^* := \frac{(1+C_{R,1})(1+C_{B,2}) - (1+C_{R,2})(1+C_{B,1})}{C_{R,1} - C_{B,1}},
	\end{equation}
	where
	\begin{align}
    C_{R,2} & := \frac{r (1-\eta) (1-\xi) \frac{1}{\eta} }
    			{1-(1-\rhop_R)\xi (1-\alpha^*)-(1-\rhou_R)(1-\xi) (1-r)}
    			+ \frac{(1-r) (1-\eta)\rhou_B(1-\xi) \frac{1}{\eta} }
    				{1-(1-\rhop_B)\xi \alpha^*-(1-\rhou_B)(1-\xi)r};\\
    C_{B,2} & := \frac{(1-r) (1-\eta) (1-\xi) \frac{1}{\eta}  }
    			{1-(1-\rhop_B)\xi \alpha^* - (1-\rhou_B)(1-\xi) r}
    			+ \frac{r (1-\eta)\rhou_R (1-\xi) \frac{1}{\eta}  }
    				{1-(1-\rhop_R)\xi (1-\alpha^*) - (1-\rhou_R)(1-\xi) (1-r)}.
\end{align}
	Moreover, when $k^* > 2$, the monotonicity of $\{G_k(R)/G_k(B), k \geq 1\}$ changes at $[k^*]$, which is the largest integer smaller than $k^*$.
\end{reptheorem}
\begin{proof}

	We first define 
		\begin{equation}
		g_{ratio}(k):= \frac{G_k(R)/G_k(B)}{G_{k-1}(R)/G_{k-1}(B)}
					= \frac{1 - \frac{1+C_{R,1}}{1 + k C_{R,1} + C_{R,2}}}{1 - \frac{1+C_{B,1}}{1 + k C_{B,1} + C_{B,2}}}.
	\end{equation}
	To see the monotonicity of $\{G_k(R)/G_k(B)\}$, it is sufficient to compare $g_{ratio}(k)$ with 1. Note that,
	\begin{equation}
	\label{eq_bump1}
		g_{ratio}(k) > 1 \,\,\,\,\,\, \Leftrightarrow \,\,\,\,\,\, 
					\frac{1+C_{R,1}}{1 + k C_{R,1} + C_{R,2}} < \frac{1+C_{B,1}}{1 + k C_{B,1} + C_{B,2}}.
	\end{equation}
	With some algebra, we have that
	\begin{equation}
	\label{eq_bump2}
		\frac{1+C_{R,1}}{1 + k C_{R,1} + C_{R,2}} - \frac{1+C_{B,1}}{1 + k C_{B,1} + C_{B,2}}
		= \frac{(k - k^*)(C_{B,1} - C_{R,1})}{(1 + k C_{R,1} + C_{R,2})(1 + k C_{B,1} + C_{B,2})}.
	\end{equation}
	Since the denominator is positive and $C_{B,1} - C_{R,1} > 0$, we therefore have that $g_{ratio}(k) > 1$ for $k < k^*$, and $g_{ratio}(k) < 1$ for $k > k^*$. When $k^* < 2$, $g_{ratio}(k) > 1$ for all $k>1$, and therefore is monotonically increasing.
	\end{proof}

\section{Proof of Lemma \ref{member_ratio_nonmonotone}}\label{proof member_ratio_nonmonotone}
\begin{lemma}
	\label{lemma_binomial_result_general}
	We have that
\begin{equation}
	\lim_{t \rightarrow \infty} r^{(M, G)}_{k, t} (R,R) = r^{(M, G)}_{k} (R,R) := \frac{1 + \sum_{j=2}^k p_{RR,j}}{k},
\end{equation}
\begin{equation}
	\lim_{t \rightarrow \infty} r^{(M, G)}_{k, t} (R,B) = r^{(M, G)}_{k} (R,B) := \frac{1 + \sum_{j=2}^k p_{RB,j}}{k},
\end{equation}
where
\begin{align}
	p_{RR,j} = \frac{p^{(0)}_{RR,j}}{p^{(0)}_{RR,j} + p^{(0)}_{BR,j}}, \,\,\,\,\,\,\,\,\,\,
	p_{RB,j} = \frac{p^{(0)}_{RB,j}}{p^{(0)}_{RB,j} + p^{(0)}_{BB,j}}
\end{align}
\begin{align}
	p^{(0)}_{RR,j} &= r\frac{\xi j + (1-\xi)/\eta}{1-(1-\rhop_R)\xi(1-\alpha^*) - (1-\rhou_R)(1-\xi) (1-r)}, \\
	p^{(0)}_{BR,j} &= (1-r)\frac{\rhop_B \xi j + \rhou_B(1-\xi)/\eta}{1-(1-\rhop_B)\xi \alpha^* - (1-\rhou_B)(1-\xi) r},\\
	p^{(0)}_{RB,j} &= r\frac{\rhop_R \xi j + \rhou_R(1-\xi)/\eta}{1-(1-\rhop_R)\xi(1-\alpha^*) - (1-\rhou_R)(1-\xi) (1-r)}, \\
	p^{(0)}_{BB,j} &= (1-r)\frac{\xi j + (1-\xi)/\eta}{1-(1-\rhop_B)\xi \alpha^* - (1-\rhou_B)(1-\xi) r}.
\end{align}

\end{lemma}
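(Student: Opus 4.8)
The plan is to track, for groups of a single fixed color, the expected number of red members as a function of the group size, and then normalize by that size. Fix $C\in\{R,B\}$ and set $\nu_k(C):=\lim_{t\to\infty}\mathbb{E}[\,\#\{\text{red members in } C\text{-groups of size }k\}\,]/\mathbb{E}[G_{k,t}(C)]$, the limiting expected number of red members in a typical size-$k$ group of color $C$. Since every size-$k$ group has exactly $k$ members, the quantity we want factors as $r^{(M,G)}_k(R,C)=\nu_k(C)/k$, so it suffices to compute $\nu_k(C)$. The base case is read off from the Group Growth rule: a $C$-group is seeded by its creator, whose color is $C$, so $\nu_1(R)=1$ and $\nu_1(B)$ is fixed analogously.

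The core of the proof is the one-step recurrence $\nu_k(C)=\nu_{k-1}(C)+p_{RC,k}$, where $p_{RC,k}$ is the probability that the member whose arrival grows a $C$-group from size $k-1$ to size $k$ is red. Two structural features of GSHM make this exact in the limit. First, the rate at which a given $C$-group acquires a new member depends on the group only through its color and its current size: rich-get-richer weights the \emph{size}, and the homophily acceptance probabilities $\rhop_{c(m^*)},\rhou_{c(m^*)}$ depend only on the joiner's color and the group's color, \emph{not} on the group's internal red/blue makeup. Consequently the size-$(k-1)$ groups that transition to size $k$ form an unbiased sample of all size-$(k-1)$ $C$-groups, so the expected red count of such a source group is exactly $\nu_{k-1}(C)$. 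Second, and for the same reason, the color of the incoming member is independent of the pre-existing composition, so it contributes an expected $p_{RC,k}$ additional red members. Telescoping from the base case then yields $\nu_k(R)=1+\sum_{j=2}^k p_{RR,j}$ and the stated expression for $C=B$.

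It remains to identify $p_{RC,k}$ with the displayed ratio of $p^{(0)}$-terms. For a red group, $p_{RR,k}$ is the conditional probability that a joiner to a size-$(k-1)$ red group is red, i.e.\ the ratio of the rate that a red member joins such a group to the total rate that any member does; these two rates are precisely $p_t^{RR}(k-1)$ and $p_t^{RR}(k-1)+p_t^{BR}(k-1)$ from the proof of Theorem \ref{group size power-law}. Dividing and sending $t\to\infty$, the common factors $(1-\eta)$ and the $1/t$ scalings cancel, while the equal-chance term $(1-\xi)/(G_t(R)+G_t(B))$ contributes $(1-\xi)/\eta$ once one uses $G_t(R)+G_t(B)\sim\eta t$. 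Substituting the almost-sure limits $r^{(E,M)}_t(R)\to r$ (a one-line fixed point, since the red member-edge fraction satisfies $x=\alpha r+(1-\alpha)x$), $r^{(E,G)}_t(R)\to\alpha^*$, and $G_t(B)/(G_t(R)+G_t(B))\to 1-r$ then produces exactly $p_{RR,k}=p^{(0)}_{RR,k}/(p^{(0)}_{RR,k}+p^{(0)}_{BR,k})$, with the cross-color homophily factors $\rhop_B,\rhou_B$ entering $p^{(0)}_{BR,k}$. The blue-group probabilities $p_{RB,k}$ follow by the identical computation with colors swapped.

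The main obstacle is the interchange of limits hidden in the definition of $\nu_k(C)$: it is a ratio of two expectations that each grow linearly in $t$, so I must first show, by induction on $k$ and a Chung--Lu transfer (Lemma \ref{lem3.1}) applied alongside Theorem \ref{group size power-law}, that $\mathbb{E}[G_{k,t}(C)]\sim G_k(C)\,t$ and that the expected red-member totals enjoy the same linear growth before the ratio limit is even well defined. Equally, the passage from the finite-$t$ joining probabilities to $p_{RC,k}$ must be uniform in $k$ so that the telescoped sum converges term by term; I would control this using the same almost-sure convergence of the edge- and group-ratios that drives the group-size analysis (Lemma \ref{lemma_pre_convergence}), which also keeps the denominators $1-(1-\rhop_C)\xi(\cdot)-(1-\rhou_C)(1-\xi)(\cdot)$ bounded away from zero throughout.
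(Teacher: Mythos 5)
Your proposal is correct and takes essentially the same route as the paper's proof: both decompose the expected red count of a size-$k$ group as the red creator plus, for each later join, the probability that the joiner is red, identify that probability as the limit of the rate ratio $p_t^{RR}/(p_t^{RR}+p_t^{BR})$ from the proof of Theorem \ref{group size power-law}, and pass to the limit via Lemma \ref{lemma_pre_convergence}. The only differences are organizational --- you run a forward recurrence $\nu_k(C)=\nu_{k-1}(C)+p_{RC,k}$ justified by composition-independence of the join rates, while the paper uniformly samples a size-$k$ group and traces its join times $t_1<\cdots<t_k\rightarrow\infty$ --- and you even inherit the paper's own off-by-one looseness, since a member arriving when the group has size $j-1$ should receive the rate ratio evaluated at size $j-1$, a bookkeeping slip that both arguments gloss over in identical fashion.
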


\begin{proof}
We only prove the result for red groups. For a red group $J_R$ with size $j$ at time $t$, Define the events
\begin{align}
	\Gamma_{t,j} &:= \{\text{At time t, an edge between a member and $J_R$ is added}\}\\
	\Gamma_{t,j,R} &:= \{\text{At time t, an edge between a red member and $J_R$ is added}\}, \\
	\Gamma_{t,j,B} &:= \{\text{At time t, an edge between a blue member and $J_R$ is added}\}.
\end{align}
We then have that, by the definition of our model and Lemma \ref{lemma_pre_convergence},
\begin{align}
	\mathbb{P}(\Gamma_{t,j,R})\cdot t
	=  \frac{(\alpha r + (1-\alpha) r^{(E, M)}_t(R)) (1-\eta)\left(\xi j + (1-\xi) \frac{1}{r^{(G)}_t(R)+r^{(G)}_t(B)}\right)}
    	{1-(1-\rhop_R)\xi (1-r^{(E, G)}_t(R)) - (1-\rhou_R)(1-\xi) \frac{r^{(G)}_t(B)}{r^{(G)}_t(R)+r^{(G)}_t(B)}}
    	\rightarrow  p^{(0)}_{RR,j}, \notag
\end{align}
where the convergence is for $t \rightarrow \infty$. Similarly, we have that
\begin{align}
	\mathbb{P}(\Gamma_{t,j,B})\cdot t
	& \rightarrow p^{(0)}_{BR,j}.
\end{align}
By the Bayes formula, we see that as $t \rightarrow \infty$,
\begin{align}
	&\mathbb{P}(\Gamma_{t,j,R} \mid \Gamma_{t,j}) = \frac{\mathbb{P}(\Gamma_{t,j,R})}{\mathbb{P}(\Gamma_{t,j,R}) + \mathbb{P}(\Gamma_{t,j,B})} \rightarrow p_{RR,j}.
\end{align}

We uniformly choose a red group $J_{k,R}$ at time $t$, among the red groups with size $k$. Define $ t_1 < \ldots < t_k $, such that $t_j$ is the time a new member $M_j$ joins the chosen group $J_{k,R}$. By the construction of our model, we must have that $t_1$ is the time the group is created, and the first member is of color red. For each $j>2$, at $t_j$ this group has size $j-1$. Note that as the graph size $t$ goes to infinity, since $J_{k,R}$ is uniformly chosen, we must have that $t_j \rightarrow \infty$ for each $j$. Therefore we have that
\begin{equation}
	\mathbb{E}[\text{number of red members in } J_{k,R}] \rightarrow 1 + \sum_{j=2}^k p_{RR,j}.
\end{equation}
Recall that $G_{k,t}(R)$ is the number of red groups at time $t$. Since $J_{k,R}$ is uniformly chosen, we have that
\begin{equation}
	\frac{\mathbb{E}[\text{number of red members in red groups with size }k]}
		{G_{k,t}(R)}
	\rightarrow 1 + \sum_{j=2}^k p_{RR,j},
\end{equation}
which finishes the proof with the fact that
\begin{equation}
	r^{(M, G)}_{k, t} (R,R) = \frac{\mathbb{E}[\text{number of red members in red groups with size }k]}{kG_{k,t}(R)}
\end{equation}
\end{proof}

\begin{cor}
	We have that,
	\begin{align}\label{member ratio cor}
		\lim_{t \rightarrow \infty} r^{(M, G)}_{k, t} (R)&= \frac{G_k(R) r^{(M, G)}_{k}(R,R) + G_k(B) r^{(M, G)}_{k}(R,B)}{G_k(R)+ G_k(B)}.
	\end{align}	
\end{cor}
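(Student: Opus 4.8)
The plan is to realize the overall red-member ratio in size-$k$ groups as a color-weighted average of the two within-color ratios already computed in Lemma \ref{lemma_binomial_result_general}. The only structural fact needed is that every size-$k$ group carries exactly one color, so that by linearity of expectation the red members residing in size-$k$ groups split cleanly according to whether their host group is red or blue.

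First I would fix $k$ and $t$ and decompose the expected number of red members contained in size-$k$ groups as the sum of the expected red members in size-$k$ \emph{red} groups and the expected red members in size-$k$ \emph{blue} groups. By the definitions underlying Lemma \ref{lemma_binomial_result_general}, a size-$k$ group of a fixed color contributes $k$ members in total, so these two contributions equal $k\,\mathbb{E}[G_{k,t}(R)]\,r^{(M,G)}_{k,t}(R,R)$ and $k\,\mathbb{E}[G_{k,t}(B)]\,r^{(M,G)}_{k,t}(R,B)$. The denominator of $r^{(M,G)}_{k,t}(R)$, namely the expected total number of members in size-$k$ groups, is exactly $k\,\mathbb{E}[G_{k,t}(R)+G_{k,t}(B)]$ since each such group has precisely $k$ members. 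Cancelling the common factor $k$ yields the exact finite-$t$ identity
\begin{equation}
r^{(M,G)}_{k,t}(R) = \frac{\mathbb{E}[G_{k,t}(R)]\,r^{(M,G)}_{k,t}(R,R) + \mathbb{E}[G_{k,t}(B)]\,r^{(M,G)}_{k,t}(R,B)}{\mathbb{E}[G_{k,t}(R)] + \mathbb{E}[G_{k,t}(B)]}.
\end{equation}

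Finally I would divide numerator and denominator by $t$ and let $t\to\infty$, substituting $\mathbb{E}[G_{k,t}(C)]/t \to G_k(C)$ from Theorem \ref{group size power-law} and $r^{(M,G)}_{k,t}(R,C)\to r^{(M,G)}_k(R,C)$ from Lemma \ref{lemma_binomial_result_general} to reach the claimed formula. The delicate point, and the main obstacle, is the interchange of limit and ratio: this is legitimate provided the limiting denominator $G_k(R)+G_k(B)$ is strictly positive. Positivity is guaranteed for every finite $k$ by the explicit recursions (\ref{power_law_fit1}) and (\ref{power_law_fit2}), whose base cases are positive under the standing assumption $\rhop_R,\rhop_B>0$ and whose multiplicative updates preserve positivity; with a positive denominator, continuity of the map $(x,y,u,v)\mapsto(xu+yv)/(x+y)$ at the limit point finishes the argument.
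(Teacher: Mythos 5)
Your proposal is correct and is essentially the argument the paper intends: the corollary is stated without proof as an immediate consequence of Lemma \ref{lemma_binomial_result_general}, the implicit reasoning being exactly your color-of-host-group decomposition of the red members in size-$k$ groups, cancellation of the common factor $k$, and passage to the limit using $\mathbb{E}[G_{k,t}(C)]/t \to G_k(C)$. Your additional verification that the limiting denominator $G_k(R)+G_k(B)$ is strictly positive (via the recursions and $\rhop_R,\rhop_B>0$) is a detail the paper omits but which is needed for the limit interchange, so it is a welcome tightening rather than a deviation.
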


\begin{replemma}{member_ratio_nonmonotone}
For the red member ratios within groups with size 1, and within groups with size goes to infinity, we have:
\begin{itemize}
	\item For groups with size 1,
	\begin{align}
		\lim_{t \rightarrow \infty} r^{(M, G)}_{1, t} (R)&= \frac{G_1(R)}{G_1(R) + G_1(B)}
		= \frac{1+C_{B,1} + C_{B,2}}{2+C_{R,1} + C_{R,2}+C_{B,1} + C_{B,2}}.
	\end{align}
	\item For groups with size goes to infinity, assume $C_{R,1} < C_{B,1}$,
	\begin{equation}
		\label{rm_ratio_infty}
		\lim_{k \rightarrow \infty} \lim_{t \rightarrow \infty} r^{(M, G)}_{k, t} (R)
		= r^{(M, G)}(R),
	\end{equation}
	where $r^{(M, G)}$ is defined as 
	\begin{equation}\label{rm_fit1}
		r^{(M, G)}(R) = \frac{q_{RB}}{q_{RB} + q_{BB}}, 
	\end{equation}
	with
	\begin{align}
			q_{RB} &= r\rhop_R \left(1-(1-\rhop_B)\xi \alpha^* - (1-\rhou_B)(1-\xi) r\right),\\
			q_{BB} &= (1-r)\left(1-(1-\rhop_R)\xi(1-\alpha^*) - (1-\rhou_R)(1-\xi) (1-r)\right).
	\end{align}
\end{itemize}
\end{replemma}

\begin{proof}
	Following Corollary \ref{lemma_binomial_result_general}, we have that for $r^{(M, G)}_{1, t}(R)$, since there is exactly 1 red (blue) member in red (blue) group with size 1, so we have that
	\begin{equation}
		r^{(M, G)}_{1, t}(R) = \frac{G_{1,t}(R)}{G_{1,t}(R) + G_{1,t}(B)}
		\rightarrow \frac{G_1(R)}{G_1(R) + G_1(B)}.
	\end{equation} 
	For the case where $k \rightarrow \infty$, since we assume that there is a glass-ceiling effect against red members, as $k \rightarrow \infty$, we have that $G_k(R)/G_k(B) \rightarrow 0$. That is, we only need focus on blue groups. 

As $j \rightarrow \infty$, it is easy to check that
\begin{align}
	\lim_{j \rightarrow \infty}p_{RB,j} = r^{(M, G)},
\end{align}
and consequently we have that
\begin{align}
	\lim_{k \rightarrow \infty}r^{(M, G)}_{k}(R,B) = r^{(M, G)},
\end{align}
which finishes the proof.
\end{proof}

\section{Proof of Theorem \ref{ads_theorem}}\label{proof_of_ads}

\begin{reptheorem}{ads_theorem}
	Assume the red member ratios for very small and large groups are smaller than the average red member ratio $r$ in the network. There exist $0 < k_A^{lower} \leq k_A^{upper}$, such that
	\begin{itemize}
		\item For $k_A > k_A^{upper}$, $r^{(A)}(k_A) < r$;
		\item For $k_A < k_A^{lower}$, $r^{(A)}(k_A) > r$.
	\end{itemize}
\end{reptheorem}

\begin{proof}
	Under our assumption, there exists some $0 < k_A^{lower} \leq k_A^{upper}$, such that $\lim_{t \rightarrow \infty} r^{(M, G)}_{k, t} < r$ for $k > k_A^{upper}$ and $k < k_A^{upper}$. Therefore, if $k_A > k_A^{upper}$, for all groups where ads are placed, their limiting red member ratios are less than $r$. Consequently, we must have $r^{(A)}(k_A) < r$. On the other hand, if $k_A < k_A^{lower}$, for the groups where ads are not placed, their limiting red member ratios are less than $r$, which means that among all the people not seeing the ads, the red member ratio is less than $r$. It further implies that among all the people seeing the ads, red member ratio is greater than $r$, that is, $r^{(A)}(k_A) > r$.

\end{proof}

\section{Proof of Theorem \ref{fake news theorem}}\label{proof_of_fn}
\begin{reptheorem}{fake news theorem}
	Assume the red member ratios for very small and large groups are smaller than the average red member ratio $r$ in the network. 	There exists $0 < \theta^{lower} <  \theta^{upper} < 1$, such that
	\begin{itemize}
		\item For $\theta > \theta^{upper}$, $r^{(D)}(\theta) > r$;
		\item For $\theta < \theta^{lower}$, $r^{(D)}(\theta) < r$.
	\end{itemize}
\end{reptheorem}
\begin{proof}
	Under our assumption, there exist some $0 < k_F^{lower} < k_F^{upper}$, such that for $k > k_F^{upper}$ and $k < k_F^{upper}$
	$$ \frac{G_k(R)}{G_k(R)+G_k(B)} < r.$$
	As $\theta \rightarrow 0$, by the assumption (\ref{eq_h_assumption}), we see that
	\begin{equation}
		\lim_{\theta \rightarrow 0} \frac{\sum_{k \leq k_F^{upper}} G_k(R)h(k, \theta)}{\sum_{k > k_F^{upper}} G_k(R)h(k, \theta)} = 0, \,\,\,\,\,\,
		\lim_{\theta \rightarrow 0} \frac{\sum_{k \leq k_F^{upper}} G_k(B)h(k, \theta)}{\sum_{k > k_F^{upper}} G_k(B)h(k, \theta)} = 0,
	\end{equation}
	which implies that $\lim_{\theta \rightarrow 0} r^{(D)}(\theta)/r_{k_F^{upper}}(\theta) = 1$, where
	\begin{equation}
		r_{k_F^{upper}}(\theta) : = \frac{\sum_{k > k_F^{upper}} G_k(R)h(k, \theta)}{\sum_{k > k_F^{upper}} (G_k(R)h(k, \theta) + G_k(B)h(k, \theta))} < r.
	\end{equation}
	Hence we see that there exists $\theta^{lower}>0$, such that $r^{(D)}(\theta) < r$ for $\theta < \theta^{lower}$.
	
	As $\theta \rightarrow 1$, by the assumption (\ref{eq_h_assumption}), we see that
	\begin{equation}
		\lim_{\theta \rightarrow 0} \frac{\sum_{k \geq k_F^{lower}} G_k(R)(1-h(k, \theta))}{\sum_{k < k_F^{lower}} G_k(R)(1-h(k, \theta))} = 0, \,\,\,\,\,\,
		\lim_{\theta \rightarrow 0} \frac{\sum_{k \geq k_F^{lower}} G_k(B)(1-h(k, \theta))}{\sum_{k < k_F^{lower}} G_k(B)(1-h(k, \theta))} = 0,
	\end{equation}
	which implies that 
	\begin{equation}
	\label{eq_fk_assist_1}
		\lim_{\theta \rightarrow 1} \frac{\sum_{k \geq 1} G_k(R)(1-h(k, \theta))}
								{\sum_{k \geq 1} (G_k(R)+G_k(B))(1-h(k, \theta))}
		=\lim_{\theta \rightarrow 1} \frac{\sum_{k < k_F^{lower}} G_k(R)(1-h(k, \theta))}
								{\sum_{k < k_F^{lower}} (G_k(R)+G_k(B))(1-h(k, \theta))}
		<r.
	\end{equation}
	Note that
	\begin{equation}
		\frac{\sum_{k \geq 1} G_k(R)}
						{\sum_{k \geq 1} (G_k(R)+G_k(B))} 
		= \frac{\lim_{t\rightarrow \infty} r^{(G)}_t(R)}{\lim_{t\rightarrow \infty} (r^{(G)}_t(R)+r^{(G)}_t(B))}=r,
	\end{equation}
	and thus (\ref{eq_fk_assist_1}) leads to
	\begin{equation}
		\lim_{\theta \rightarrow 1} \frac{\sum_{k \geq 1} G_k(R)h(k, \theta)}
								{\sum_{k \geq 1} (G_k(R)+G_k(B))h(k, \theta)}
		>r.
	\end{equation}
	 Consequently, there exists $0<\theta^{upper}<1$, such that $r^{(D)}(\theta) > r$ for $\theta > \theta^{upper}$.
\end{proof}

\section{Proof of Lemma \ref{lemma_pre_convergence}}
\begin{lemma}
	\label{lemma_pre_convergence}
	Under the assumption that $\rhop_R, \rhop_B > 0$, we have the following convergence results:
	\begin{itemize}
		\item The proportion of edges coming from red members converges; that is
		\begin{equation}
			\label{eq_rt_converge}
    		\lim_{t \rightarrow \infty} r^{(E, M)}_t(R) = r \,\,\,\, a.s.
    	\end{equation}
    	
    	\item The ratio of red group counts over t converges; that is
    	\begin{equation}
    		\label{eq_red_group_converge}
    	\lim_{t\rightarrow \infty} r^{(G)}_t(R) = r \eta \,\,\,\, a.s.
    	\end{equation}
        	
    	\item The proportion of edges coming from red groups converges; that is
    	\begin{equation}
    		\label{eq_red_group_degree_converge}
    	\lim_{t \rightarrow \infty} r^{(E, G)}_t(R) = \alpha^* \,\,\,\, a.s.
    	\end{equation}
    	where $\alpha^*$ is the unique number $\in (0,1)$ satisfying
    	\begin{equation}
    		\label{eq_alpha_star}
    	\alpha^* = r \eta + \frac{r (1-\eta) (\xi \alpha^* + (1-\xi)r) }
    			{1-(1-\rhop_R)\xi (1-\alpha^*)-(1-\rhou_R)(1-\xi) (1-r)}
    			+ \frac{(1-r) (1-\eta) (\rhop_B \xi \alpha^* + \rhou_B (1-\xi)r)}
    				{1-(1-\rhop_B)\xi \alpha^*-(1-\rhou_B)(1-\xi)r }.
    	\end{equation}
	\end{itemize}
\end{lemma}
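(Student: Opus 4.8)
The plan is to prove the three limits in the order listed, exploiting their triangular dependence: \eqref{eq_rt_converge} is self-contained, \eqref{eq_red_group_converge} uses only it, and \eqref{eq_red_group_degree_converge} uses both. I will use repeatedly that GSHM adds exactly one edge per step, so the total member degree and the total group size each equal $t$ up to an $O(1)$ initialization; thus $r^{(E,M)}_t(R)=E^{(M)}_t(R)/t$ and $r^{(E,G)}_t(R)=E^{(G)}_t(R)/t$ are honest fractions in $[0,1]$. The hypothesis $\rhop_R,\rhop_B>0$ ensures that each active member accepts some group with probability bounded away from $0$, so the rejection-and-restart loop terminates almost surely and the effective post-rejection attachment probabilities below are well defined.

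For \eqref{eq_rt_converge} put $X_t=E^{(M)}_t(R)$; the active member is red with conditional probability $p_t:=\alpha r+(1-\alpha)X_t/t$ and $X_t$ grows by one precisely then, so
\begin{equation}
\mathbb{E}[X_{t+1}-X_t\mid\mathcal{F}_t]=\alpha r+(1-\alpha)\frac{X_t}{t}.
\end{equation}
With $Z_t=X_t-rt$ this gives $\mathbb{E}[Z_{t+1}\mid\mathcal{F}_t]=Z_t(1+\tfrac{1-\alpha}{t})$, hence $\mathbb{E}[Z_t]=O(t^{1-\alpha})=o(t)$ and $X_t/t\to r$ in mean; bounded increments together with an Azuma--Borel--Cantelli argument upgrade this to almost sure convergence (equivalently, the member-colouring dynamics is the two-type linear preferential-attachment-with-immigration process whose minority fraction is shown to converge to $r$ in \cite{avin2015homophily}). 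For \eqref{eq_red_group_converge}, a red group is created at step $t$ exactly when the Group Growth branch fires (probability $\eta$) and the active member is red, i.e.\ with conditional probability $\eta p_t\to\eta r$; summing the bounded increments of $G_t(R)$ and applying the martingale strong law yields $G_t(R)/t\to r\eta$ a.s., and likewise $G_t(B)/t\to(1-r)\eta$, so $G_t(R)/(G_t(R)+G_t(B))\to r$.

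The substance is \eqref{eq_red_group_degree_converge}. Conditioning on the active member's colour and chosen mechanism and accounting for the restart loop, a red active member that joins an existing group lands on a red one with effective probability $\bigl(\xi r^{(E,G)}_t(R)+(1-\xi)\tfrac{G_t(R)}{G_t(R)+G_t(B)}\bigr)/D_{R,t}$, where $D_{R,t}=1-(1-\rhop_R)\xi r^{(E,G)}_t(B)-(1-\rhou_R)(1-\xi)\tfrac{G_t(B)}{G_t(R)+G_t(B)}$ is exactly the acceptance normaliser in \eqref{p_t_RR}, and symmetrically for a blue active member with $\rhop_B,\rhou_B$. Adding the creation branch $\eta p_t$ to the joining branch $(1-\eta)[\,p_t\cdot(\text{red}\to\text{red})+(1-p_t)(\text{blue}\to\text{red})\,]$ and substituting the already-established limits $r^{(E,M)}_t(R)\to r$ and $G_t(R)/(G_t(R)+G_t(B))\to r$, the conditional drift of $v_t:=r^{(E,G)}_t(R)$ converges to $F(v_t)$, where $F(\alpha)$ is the right-hand side of \eqref{eq_alpha_star} with every $\alpha^*$ replaced by $\alpha$.

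It remains to analyse $F$ and pass to the limit. I would check that $F:[0,1]\to[0,1]$ is continuous and strictly increasing — each fractional term is an affine-over-affine ratio whose derivative is positive by $\rhop_R,\rhop_B>0$ — with $F(0)>0$ and $F(1)<1$ (a blue member can always create or join a blue group), so $F-\mathrm{id}$ changes sign and a fixed point $\alpha^*\in(0,1)$ exists; uniqueness follows from $F'<1$ on $(0,1)$, the one genuinely computational estimate. Since $\alpha^*$ is then the unique stable zero of the limiting drift $F-\mathrm{id}$ and the perturbation $F_t-F\to0$ by the first two parts, the ODE method for stochastic approximations with bounded increments and step $\sim1/t$ gives $v_t\to\alpha^*$ a.s. The main obstacle is exactly this last step: threading the rejection loop so that the drift reproduces the denominators $D_R,D_B$ of \eqref{eq_alpha_star} verbatim, and then combining the stability/uniqueness estimate $F'<1$ with an almost sure (rather than merely in-mean) stochastic-approximation convergence under a drift that depends self-referentially on the tracked ratio; the first two limits are routine by comparison once the single-edge-per-step bookkeeping is in place.
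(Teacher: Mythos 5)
Your Parts 1 and 2 are essentially the paper's own argument: the paper also computes the drift $\mathbb{E}[E^{(M)}_{t+1}(R)\mid\mathcal{F}_t]=E^{(M)}_t(R)+\alpha r+(1-\alpha)r^{(E,M)}_t(R)$, deduces $\mathbb{E}[r^{(E,M)}_t(R)]-r=O(t^{-\alpha})$, and upgrades to almost-sure convergence via a Doob martingale, Azuma's inequality, and Borel--Cantelli; for the group counts it conditions on the whole sequence $\{r^{(E,M)}_t(R)\}$ to expose a Binomial$(E^{(M)}_t(R)-1,\eta)$ structure and applies Hoeffding, which is interchangeable with your martingale strong-law argument. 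Your Part 3 is a genuinely different route: the paper never invokes the ODE method, but instead runs an explicit second-moment contraction for $(r^{(E,G)}_t(R)-\alpha^*)^2$ (the decomposition into $I^{(1)}_t,I^{(2)}_t,I^{(3)}_t$), controls the perturbation terms using \emph{summable} concentration bounds inherited from Parts 1--2 (Lemmas \ref{lem_H_derivative} and \ref{lem_delta_summable}), and closes with the Robbins--Siegmund almost-supermartingale theorem. Your stochastic-approximation framing would in fact be lighter, since it needs only almost-sure vanishing of the perturbation rather than summability of its expectations.

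However, there is a genuine gap at exactly the step you single out as ``the one genuinely computational estimate'': uniqueness of the fixed point via $F'<1$ on $(0,1)$. That estimate is false in general. Take $\xi=1$ (allowed in GSHM), $r=1/2$, $\eta=0.1$, $\rhop_R=\rhop_B=0.1$. Then
\begin{equation}
F(x)=r\eta+r(1-\eta)\frac{x}{\rhop_R+(1-\rhop_R)x}+(1-r)(1-\eta)\frac{\rhop_B x}{1-(1-\rhop_B)x},
\end{equation}
and
\begin{equation}
F'(0)=\frac{r(1-\eta)}{\rhop_R}+(1-r)(1-\eta)\,\rhop_B=4.5+0.045>1,
\end{equation}
so $F'>1$ on a whole right-neighbourhood of $0$ inside $(0,1)$. (The fixed point in this example is still unique, namely $\alpha^*=1/2$ by red--blue symmetry, but not by your argument.) This gap is load-bearing: your conclusion $v_t\to\alpha^*$ a.s.\ via the ODE method requires $\alpha^*$ to be the \emph{unique} globally attracting zero of $F-\mathrm{id}$ on $[0,1]$. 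The paper's Lemma \ref{lem_F_derivative} avoids derivative bounds altogether: multiply $F(x)-x$ by the two positive denominators to obtain a cubic $K(x)$ that shares its sign with $F(x)-x$ on $(0,1)$; then $K(0)>0>K(1)$ together with the fact that a cubic has at most three real roots forces exactly one root $\alpha^*$ in $(0,1)$, and the resulting sign structure ($F$ above the diagonal left of $\alpha^*$, below it to the right) supplies the one-sided contraction toward $\alpha^*$ that both the paper's moment argument and your ODE argument need. Replacing your $F'<1$ claim with this sign argument repairs the proposal; as written, the estimate you propose to verify cannot be verified.
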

    We divide the proof into three parts.
    
    \textbf{Part 1. Proof of (\ref{eq_rt_converge})} 
    Note that $E^{(M)}_t(R)$ is the total degree of red nodes at time $t$. By our model, given $r^{(E, M)}_t(R)$, the total degree of red nodes at time $t+1$ could take two values: $E^{(M)}_t(R)$ and $E^{(M)}_t(R)+1$, with probability $1-\alpha r - (1-\alpha) r^{(E, M)}_t(R)$ and $\alpha r + (1-\alpha) r^{(E, M)}_t(R)$ respectively. Recall that $\mathcal{F}_t$ is the $\sigma-$field containing the information of the graph up to time $t$. Therefore we have that
    \begin{equation}
    	\mathbb{E}\left( E^{(M)}_{t+1}(R)\vert \mathcal{F}_t\right) =
    	E^{(M)}_t(R) + \alpha r + (1-\alpha) r^{(E, M)}_t(R),
    \end{equation}
    which gives
    \begin{equation}
    	\mathbb{E}\left( r^{(E, M)}_{t+1}(R) - r \vert \mathcal{F}_t\right) =
    	\frac{t + (1-\alpha)}{t+1} (r^{(E, M)}_t(R)-r).
    \end{equation}
    Recall that our model starts from $t=2$. Therefore
    \begin{equation}
    	\label{eq_r_E_converge}
    	\mathbb{E}\left( r^{(E, M)}_{t+1}(R) - r\right) =
    	\prod_{i=2}^t \frac{i + (1-\alpha)}{i+1} (r^{(E, M)}_2(R)-r)
    	=  O \left(\exp(-\sum_{i=2}^t \frac{\alpha}{i+1}) \right)
    	= O(t^{- \alpha}).
    \end{equation}
    Next we show a concentration inequality for $r^{(E, M)}_t(R)$. For $T > 0$, we define a Doob martingale, that for $0 \leq t \leq T$,
    \begin{equation}
    	W_t := \mathbb{E}\left( r^{(E, M)}_T(R) - r \vert \mathcal{F}_t\right) = \prod_{i=t}^{T-1} \frac{i + (1-\alpha)}{i+1} (r^{(E, M)}_t(R)-r).
    \end{equation}
    It satisfies that $\{W_t, 2 \leq t \leq T\}$ is a martingale, and $W_T = r^{(E, M)}_T(R)-r$, $W_2 = \mathbb{E}[r^{(E, M)}_T(R)-r]$. Next we bound the difference between $W_t$ and $W_{t-1}$. We have that
    \begin{equation}
    	W_t - W_{t-1} = \prod_{i=t}^{T-1} \frac{i + (1-\alpha)}{i+1} \left((r^{(E, M)}_t(R)-r)- \frac{t-\alpha}{t}(r^{(E, M)}_{t-1}(R)-r)\right).
    \end{equation}
    Since $E^{(M)}_{t}(R)$ could just take two values $E^{(M)}_{t-1}(R)$ and $E^{(M)}_{t-1}(R)+1$, we have that $|r^{(E, M)}_t(R) - r^{(E, M)}_{t-1}(R)| = O (1/t)$. And thus
    \begin{equation}
    	W_t - W_{t-1} = \prod_{i=t}^{T-1} \frac{i + (1-\alpha)}{i+1} O(1/t)
    	= O \left(\exp(-\sum_{i=t}^t \frac{\alpha}{i+1}) \right) O(1/t).
    	= O(t^{-1}(T/t)^{- \alpha}).
    \end{equation}
    Applying the Azuma's inequality \cite{alon2004probabilistic} for martingale, we get that there exist constants $c_1, c_2>0$, such that for any $T, x > 0$, 	
    \begin{equation}
    \label{eq_W_T_tail}
	    \mathbb{P}\left(|W_T - W_2| > x \right)
	    \leq \exp \left( - c_1\frac{x^2}{T^{-2 \alpha} \sum_{j = 1}^T t^{-2+2 \alpha}} \right)
	    \leq \exp \left( - c_2 x^2 T^{\min(1,2\alpha)}/\log{T} \right),
	\end{equation}
	where the last step is because
	\begin{equation}
	T^{-2 \alpha} \sum_{j = 1}^T t^{-2+2 \alpha} = 
		\begin{cases}
			O(T^{-2 \alpha}), &\text{if } 2\alpha<1; \\
			O(T^{-1}\log{T}), &\text{if } 2\alpha=1; \\
			O(T^{-1}), &\text{if } 2\alpha>1. 
		\end{cases}
	\end{equation}
	From (\ref{eq_W_T_tail}) we have that, for any $\epsilon > 0$, the tail probability $$\mathbb{P}\left(|r^{(E, M)}_T(R) - \mathbb{E}[r^{(E, M)}_T(R)]| > \epsilon \right) = \mathbb{P}\left(|W_T - W_2| > \epsilon \right)$$ is summable over $T$. By the Borel Cantelli lemma, we see that $r^{(E, M)}_t(R) - \mathbb{E}[r^{(E, M)}_t(R)] \rightarrow 0$ a.s., which gives our desired result with (\ref{eq_r_E_converge}). Moreover, since we already show that $W_2 = O(T^{- \alpha})$ we have that there exist constants $c_3, c_4 > 0$, for any $x>0$,
    \begin{align}
    \label{eq_re_concentration}
	    \mathbb{P}\left(|W_T| > x \right) &\leq \mathbb{P}\left(|W_T-W_2| > x-|W_2| \right) \notag\\
	    &\leq \exp \left( - c_3 \max(x-|W_2|,0)^2 T^{\min(1,2\alpha)}/\log{T} \right)
	    \leq \exp \left( - c_4  x^2 T^{\min(1,2\alpha)}/\log{T} \right).
	\end{align}

	\textbf{Part 2. Proof of (\ref{eq_red_group_converge})} 
	According to our model, at each time $t$, with probability $\alpha r^{(E, M)}_t(R) + (1-\alpha) r$ a red member adds an edge, and with probability $\eta$ the edge is added by creating a new red group. Let's consider the number of red groups in the model conditioned on a given sequence $\{r^{(E, M)}_t(R), t>0\}$. 
	
	For each $t$, there are two cases: (1) case 1, $E^{(M)}_{t+1}(R)=E^{(M)}_{t}(R)+1$, in this case a red member adds an edge at time $t$, and conditioned on $\{r^{(E, M)}_t(R), t>0\}$, the probability that this edge is added by creating a new red group is $\eta$: this is because how this edge is added does not influence the value of $r^{(E, M)}_{t+1}(R)$ and thus does not influence $\{r^{(E, M)}_t(R), t>0\}$, and hence whether we condition on $\{r^{(E, M)}_t(R), t>0\}$ or not does not change the probability that the new edge is added by creating a group; (2) case 2, $E^{(M)}_{t+1}(R)=E^{(M)}_{t}(R)$, in this case a blue member adds an edge at time $t$, and no red group is created. 
	
	We also have that, the events $\{$a red group is created at time $t \}$ over different $t$ are independent conditioned on $\{r^{(E, M)}_t(R), t>0\}$. Intuitively, it is because the probability of $\{$a red group is created at time $t$ $\}$ only depends on the value of $E^{(M)}_{t+1}(R)-E^{(M)}_{t}(R)$. The independence claim could also be verified by writing out the posterior distribution of those events given $\{r^{(E, M)}_t(R), t>0\}$.
	
	Recall that our initial condition is that there is a red (blue) member with an edge to a red (blue) group, in total two members and two groups. Therefore, given $\{r^{(E, M)}_t(R), t>0\}$, the number of red groups $G_t(R)$ satisfies that, $G_t(R)-1$ follows a Binomial distribution $B(E^{(M)}_t(R) - 1, \eta)$. Therefore, by Hoeffding's inequality (\cite{boucheron2013concentration}), we have that for any $x > 0$,
	\begin{equation}
		\mathbb{P}\left(\left\vert \frac{G_t(R)-1}{E_t^{(M)}(R) - 1} - \eta\right\vert > x \mid \{r^{(E, M)}_t(R), t>0\}\right)
		\leq 2 \exp \left( - 2 (E_t^{(M)}(R) - 1) x^2\right),
	\end{equation}
	which further implies that
	\begin{small}
	\begin{equation}
	\label{eq_rG_converge_1}
		\mathbb{P}\left(\left\vert r^{(G)}_t(R) - \eta r^{(E, M)}_t(R) + \frac{\eta-1}{t} \right\vert > x \mid \{r^{(E, M)}_t(R), t>0\}\right)
		\leq 2 \exp \left( - \frac{2x^2 t^2}{E^{(M)}_t(R) - 1}\right)
		\leq 2 \exp \left( - \frac{2x^2 t^2}{t - 1}\right).
	\end{equation}
	\end{small}

	Hence for any $\epsilon > 0$, with probability 1 the tail probability $$\mathbb{P}\left(\left|r^{(G)}_t(R) - \eta r^{(E, M)}_t(R) + \frac{\eta-1}{t}\right| > \epsilon \mid \{r^{(E, M)}_t(R), t>0\} \right) $$ is summable over $t$. By the Borel Cantelli lemma, we see that $r^{(G)}_t(R) - \eta r^{(E, M)}_t(R) + (\eta-1)/t$ goes to 0 a.s., which gives $r^{(G)}_t(R) \rightarrow r \eta$ a.s. with the fact that $r^{(E, M)}_t(R) \rightarrow r$ a.s..
	
	Moreover, since by the triangle inequality
	\begin{equation}
		\left\vert r^{(G)}_t(R) - r \eta \right\vert 
		\leq \left\vert r^{(G)}_t(R) - \eta r^{(E, M)}_t(R) + \frac{\eta-1}{t} \right\vert+\frac{1-\eta}{t} + |\eta r^{(E, M)}_t(R)-\eta r|,
	\end{equation}
	we see that for any $x>0$, 
	\begin{equation}
		\left\{ \left\vert r^{(G)}_t(R) - r \eta \right\vert>x \right\}
		\subset
		\left\{ \left\vert r^{(G)}_t(R) - \eta r^{(E, M)}_t(R) + \frac{\eta-1}{t} \right\vert > \frac{x}{2} - \frac{1-\eta}{t} \right\}
		\cup
		\left\{|\eta r^{(E, M)}_t(R)-\eta r| > \frac{x}{2}\right\}.
	\end{equation}
	Therefore, for the unconditional tail probability of $r^{(G)}_t(R) - r \eta$, we have
	\begin{align}
		\mathbb{P}\left(\left\vert r^{(G)}_t(R) - r \eta \right\vert > x \right)
		&\leq \mathbb{P}\left(\left\vert r^{(G)}_t(R) - \eta r^{(E, M)}_t(R) + \frac{\eta-1}{t} \right\vert > \frac{x}{2} - \frac{1-\eta}{t} \right)
		+ \mathbb{P}\left( |r^{(E, M)}_t(R)-r|\geq \frac{x}{2 \eta}\right) \notag.
	\end{align}
	Note that the unconditional version of (\ref{eq_rG_converge_1}) also holds, since the right hand side does not depend on $\{r^{(E, M)}_t(R), t>0\}$. Together with (\ref{eq_re_concentration}), we have that there exists a constant $c_5 > 0$, such that for any $x>0$,
	\begin{align}
	\label{eq_rg_concentration}
		\mathbb{P}\left(\left\vert r^{(G)}_t(R) - r \eta \right\vert > x \right)
		&\leq 2 \exp \left( - 2 \frac{(x/2 - (1-\eta)/{t})^2 t^2}{t - 1}\right)
		+ \exp \left( - c_4 \left(\frac{x}{2 \eta}\right)^2 t^{\min(1,2\alpha)}/\log{t} \right) \notag \\
		&\leq \exp \left( - c_5  x^2 t^{\min(1,2\alpha)}/\log{t} \right).
	\end{align}
	\textbf{Part 3. Proof of (\ref{eq_red_group_degree_converge}) and (\ref{eq_alpha_star})} 
	Recall that $E^{(G)}_t(R)$ is the total degree of red groups. Similar to part 1, at each time $t+1$, $E^{(G)}_{t+1}(R)$ could take two values: $E^{(G)}_t(R)$ and $E^{(G)}_t(R)+1$. By our definition of the model, one can verify that, the probability that $E^{(G)}_{t+1}(R) = E^{(G)}_t(R)+1$ is a function of $r^{(E, G)}_t(R), r^{(E, M)}_t(R), r^{(G)}_t(R), r^{(G)}_t(B)$, which we denote by $H(r^{(E, G)}_t(R)$, $r^{(E, M)}_t(R)$, $r^{(G)}_t(R)$, $r^{(G)}_t(B))$, and it takes the following expression
		\begin{align}
    		H(x, y, z, w) &:= (\alpha r + (1-\alpha) y)\eta + \frac{(\alpha r + (1-\alpha) y) (1-\eta) (\xi x + (1-\xi)\frac{z}{z+w}) }
    				{1-(1-\rhop_R)\xi (1-x) - (1-\rhou_R)(1-\xi) \frac{w}{w+z}}\\
    			&+  \frac{(\alpha (1-r) + (1-\alpha) (1-y)) (1-\eta)(\rhop_B \xi x + \rhou_B(1-\xi)\frac{z}{z+w})}
    			{1-(1-\rhop_B)\xi x - (1-\rhou_B)(1-\xi) \frac{z}{w+z}}.
    	\end{align}

    	We already see that $r^{(E, M)}_t(R) \rightarrow r$ a.s. and $r^{(G)}_t(R) \rightarrow r \eta$ a.s. Similarly, $r^{(G)}_t(B) \rightarrow (1-r) \eta$ a.s. We denote
    	\begin{align}
    		F(x) &= H(x, r, r \eta, (1-r) \eta) \\
    		&= r \eta + \frac{r (1-\eta) (\xi x + (1-\xi)r) }
    			{1-(1-\rhop_R)\xi (1-x)-(1-\rhou_R)(1-\xi) (1-r)}
    			+ \frac{(1-r) (1-\eta) (\rhop_B \xi x + \rhou_B (1-\xi)r)}
    				{1-(1-\rhop_B)\xi x-(1-\rhou_B)(1-\xi)r }.
    	\end{align}
    We have the following Lemma, whose proof is deferred to Appendix \ref{axillary section}.
    \begin{lemma}
    \label{lem_F_derivative}
	Under the assumption that $\rhop_R, \rhop_B > 0$, $F(x)$ satisfies
	\begin{enumerate}
	\item $F(x)$ has exactly one fixed point, denoted $\alpha^*$, in $[0,1]$;
	\item There exists $\gamma < 1$, such that for any $x \in (0,1)$
		\begin{equation}
			|F(\alpha^*) - x| \leq \gamma |\alpha^* - x|.
		\end{equation}
	\end{enumerate}
	\end{lemma}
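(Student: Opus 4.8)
The displayed inequality in part~2 must read $|F(x) - \alpha^*| \le \gamma|\alpha^* - x|$ (as written it would force $\gamma \ge 1$, since $F(\alpha^*)=\alpha^*$); that is, $F$ contracts toward its fixed point. The key point is that this is a \emph{secant} bound, not a derivative bound: one checks directly that $F'$ need not lie below $1$ everywhere (for instance, when $\rhop_R$ is tiny the first term makes $F'(0)$ large), so no uniform bound on $F'$ is available and I would not attempt one. Instead the plan is to show that $F$ is increasing and crosses the diagonal exactly once, and then deduce the uniform secant contraction by a compactness argument.

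First I would record the range and boundary behaviour. Since $F(x)$ is the limiting probability that the new edge attaches to a red group, it lies in $[0,1]$; moreover the group-creation term contributes $r\eta$ to $F$ and $(1-r)\eta$ to $1-F$ irrespective of $x$, so
\begin{equation}
r\eta \le F(x) \le 1-(1-r)\eta \qquad \text{for all } x\in[0,1].
\end{equation}
Hence $H(x):=F(x)-x$ is continuous with $H(0)\ge r\eta>0$ and $H(1)\le-(1-r)\eta<0$, and the intermediate value theorem already yields a fixed point $\alpha^*\in(0,1)$.

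Next I would show $F$ is increasing. Writing each of the two fractions as a ratio of affine functions of $x$ and differentiating, the second (blue) term is increasing outright, while the first (red) term has derivative whose sign equals that of
\begin{equation}
1-(1-\rhop_R)\xi-(1-\xi)\big[(1-\rhou_R)(1-r)+(1-\rhop_R)r\big].
\end{equation}
The assumption $\rhop_R>0$ forces the bracket strictly below $1$, making this expression positive, so the red term is increasing too and $F'>0$ on $(0,1)$. The heart of the lemma is then the single-crossing claim: at \emph{any} fixed point $x_0$ of $F$ one has $F'(x_0)<1$. I would prove this by substituting the relation $F(x_0)=x_0$ into the explicit expression for $F'(x_0)$, which cancels the potentially large factors responsible for the unbounded-derivative behaviour; this substitution is the step I expect to be the main obstacle. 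Granting it, every zero of $H$ is a strict down-crossing, so together with $H(0)>0>H(1)$ the fixed point $\alpha^*$ is unique, which is part~1.

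Finally, with $F$ increasing and $H$ changing sign from $+$ to $-$ only at $\alpha^*$, the secant slope $s(x):=\big(F(x)-\alpha^*\big)/(x-\alpha^*)$ satisfies $0<s(x)<1$ for every $x\neq\alpha^*$: for $x<\alpha^*$ monotonicity and single crossing give $x<F(x)<\alpha^*$, and for $x>\alpha^*$ they give $\alpha^*<F(x)<x$. The function $s$ extends continuously to $x=\alpha^*$ with value $F'(\alpha^*)<1$, so $s$ is continuous on the compact interval $[0,1]$ and strictly below $1$ throughout; taking $\gamma:=\max_{x\in[0,1]}s(x)<1$ gives $|F(x)-\alpha^*|\le\gamma|x-\alpha^*|$ for all $x$, which is part~2.
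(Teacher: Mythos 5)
Your preliminary work is correct: the typo diagnosis (the bound must read $|F(x)-\alpha^*|\le\gamma|x-\alpha^*|$, which is indeed what the paper's proof establishes), the boundary bounds $r\eta\le F(x)\le 1-(1-r)\eta$ giving existence of a fixed point by the intermediate value theorem, the monotonicity computation (your sign factor $1-(1-\rhop_R)\xi-(1-\xi)\bigl[(1-\rhou_R)(1-r)+(1-\rhop_R)r\bigr]$ is right, and it is positive, being at least $\rhop_R(\xi+(1-\xi)r)$), and the observation that no uniform bound $F'<1$ is available (with $\xi=1$ the red term contributes $r(1-\eta)/\rhop_R$ to $F'(0)$). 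But the proposal has a genuine gap exactly where you flag it: the single-crossing claim, that $F'(x_0)<1$ at every fixed point $x_0$, is never proved --- you write ``Granting it.'' That claim is the entire content of the lemma: it is what forces uniqueness in part~1, and it is what lets your secant function $s$ extend continuously across $\alpha^*$ with value $F'(\alpha^*)<1$ in part~2. As it stands, you have a correct reduction of the lemma to an unproved assertion, not a proof, and the plan of ``substituting $F(x_0)=x_0$ into $F'(x_0)$ and cancelling'' is not carried out (nor is it clear it would be transparent).

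The paper avoids this analytic difficulty entirely with an algebraic device you should know about. Multiply $F(x)-x$ by the product of the two denominators appearing in $F$ --- both strictly positive on $[0,1]$ when $\rhop_R,\rhop_B>0$ --- to get a polynomial $K(x)$ of degree $3$ in $x$. On $[0,1]$, $K$ has the same sign as $F(x)-x$, and $K(0)>0>K(1)$ by your boundary bounds. The coefficient of $x^3$ is $(1-\rhop_R)(1-\rhop_B)\xi^2$ (the paper misstates its sign as negative, but then uses the limits $K(x)\to-\infty$ as $x\to-\infty$ and $K(x)\to+\infty$ as $x\to+\infty$, which are those of a positive leading coefficient); in the nondegenerate case the sign pattern forces at least one real root in each of $(-\infty,0)$, $(0,1)$, $(1,\infty)$, and since a cubic has at most three roots, exactly one lies in $(0,1)$ --- uniqueness with no derivative computation at all (if the leading coefficient vanishes, $K$ has degree at most $2$ and $K(0)>0>K(1)$ again leaves exactly one root in $(0,1)$). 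Moreover $\alpha^*$ is then a simple middle root of the cubic, so $K'(\alpha^*)<0$, which translates back into exactly the down-crossing $F'(\alpha^*)<1$ that your compactness argument needs. Conversely, your monotonicity step would be a worthwhile supplement to the paper's own proof: the paper's sign argument yields only $s(x)<1$ and tacitly needs $s(x)>-1$ to conclude $|s(x)|<1$, and your proof that $F$ is increasing supplies $s(x)>0$, closing that loose end.
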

    
    Let $\alpha^* \in (0,1)$ be the number satisfying that $F(\alpha^*) = \alpha^*$. Similar to part 1, we can calculate the second moment of $\alpha_{t+1}-\alpha^*$
    \begin{align}
    	\mathbb{E}\left( (\alpha_{t+1}-\alpha^*)^2\vert \mathcal{F}_t\right) =&
    	\left(\frac{tr^{(E, G)}_t(R)}{t+1} - \alpha^*\right)^2 (1-H(r^{(E, G)}_t(R), r^{(E, M)}_t(R), r^{(G)}_t(R), r^{(G)}_t(B)))\nonumber \\
    	&+ \left(\frac{tr^{(E, G)}_t(R)+1}{t+1} - \alpha^*\right)^2 H(r^{(E, G)}_t(R), r^{(E, M)}_t(R), r^{(G)}_t(R), r^{(G)}_t(B)) \\
    	 =& I^{(1)}_t + I^{(2)}_t + I^{(3)}_t \label{eq_decomp_I_123},
    \end{align}
    where
    \begin{align}
    	I^{(1)}_t =& \frac{t^2 (r_t^{(E,G)}(R)-\alpha^*)^2 + 2t(r_t^{(E,G)}(R)-\alpha^*)\left((1-\alpha^*)F(r^{(E, G)}_t(R))\right)
    				-\alpha^*\left(1-F(r^{(E, G)}_t(R))\right)}{(t+1)^2}, \\
    	I^{(2)}_t =& \frac{(\alpha^*)^2\left(1-H(r^{(E, G)}_t(R), r^{(E, M)}_t(R), r^{(G)}_t(R), r^{(G)}_t(B))\right)}{{(t+1)^2}} \nonumber \\
    	&+ \frac{(1-\alpha^*)^2 			H(r^{(E, G)}_t(R), r^{(E, M)}_t(R), r^{(G)}_t(R), r^{(G)}_t(B))}{(t+1)^2}, \\	
    	I^{(3)}_t =& \frac{2t(r_t^{(E,G)}(R)-\alpha^*)\left((1-\alpha^*)\Delta(r^{(E, G)}_t(R), r^{(E, M)}_t(R), r^{(G)}_t(R), r^{(G)}_t(B))\right.}{(t+1)^2} \nonumber \\
    	&-\frac{\left.\alpha^*(1-\Delta(r^{(E, G)}_t(R), r^{(E, M)}_t(R), r^{(G)}_t(R), r^{(G)}_t(B)))\right)}{(t+1)^2},		
    \end{align}
    with
    \begin{align}
    	\Delta(r^{(E, G)}_t(R), r^{(E, M)}_t(R), r^{(G)}_t(R), r^{(G)}_t(B)):=
    		H(r^{(E, G)}_t(R), r^{(E, M)}_t(R), r^{(G)}_t(R), r^{(G)}_t(B))-F(r^{(E, G)}_t(R)).
    \end{align}
	
	We need the following lemmas.

	\begin{lemma}
		\label{lem_H_derivative}
		Under the assumption that $\rhop_R, \rhop_B > 0$, there exists $c_0 > 0$, such that for any $x,y,z,w \in (0,1)$
		\begin{equation}
			|\Delta(x,y,z,w)| < c_0 (|y-r| + |z-r \eta| + |w-(1-r) \eta|).
		\end{equation}
	\end{lemma}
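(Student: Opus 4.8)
The plan is to prove a Lipschitz-type estimate by splitting $(0,1)^3$ (the range of $(y,z,w)$) into a neighborhood of the reference point $p^\ast := (r,\,r\eta,\,(1-r)\eta)$, on which $H$ is smooth, and its complement, on which the right-hand side is bounded below so that a crude a priori bound on $\Delta$ suffices. Writing $d(y,z,w) := |y-r| + |z-r\eta| + |w-(1-r)\eta|$ and recalling $F(x) = H(x,r,r\eta,(1-r)\eta)$, we have $\Delta(x,y,z,w) = H(x,y,z,w) - H(x,r,r\eta,(1-r)\eta)$, so $\Delta$ is simply the increment of $H$ in its last three arguments measured from $p^\ast$.

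First I would record two a priori facts, uniform in $x\in[0,1]$. For boundedness, setting $s := \tfrac{z}{z+w}$ and using the identity $1-(1-\rhop_R)\xi-(1-\rhou_R)(1-\xi) = \rhop_R\xi + \rhou_R(1-\xi)$, one checks that $\xi x + (1-\xi)s$ never exceeds the first denominator (and symmetrically for the blue term), so each fraction in $H$ is at most $1-\eta$; a short computation with $u := \alpha r + (1-\alpha)y$ then gives $H \le 1-\eta(1-u)\le 1$, whence $0\le H\le 1$ and $M := \sup|\Delta| \le 1$ on all of $(0,1)^4$. For the denominators, the same substitution shows the first denominator equals $\rhop_R\xi + \rhou_R(1-\xi) + (1-\rhop_R)\xi x + (1-\rhou_R)(1-\xi)s$, which is $\ge \rhop_R\xi>0$ when $\xi>0$ and equals $\rhou_R + (1-\rhou_R)s$ when $\xi=0$; the blue denominator is treated the same way using $\rhop_B>0$.

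Next I would fix $U := \{(y,z,w)\in[0,1]^3 : |z-r\eta|<\tfrac{r\eta}{2},\ |w-(1-r)\eta|<\tfrac{(1-r)\eta}{2}\}$. On $\overline U$ both $z$ and $w$ are bounded below by positive constants, so $z+w$ is bounded below and $s=\tfrac{z}{z+w}$ ranges in a closed subinterval $[s_-,s_+]\subset(0,1)$; combined with the denominator identity this bounds both denominators below by a constant independent of $x$. Hence $H(x,\cdot,\cdot,\cdot)$ is a rational function of $(y,z,w)$ with non-vanishing denominators and first partials bounded on the compact set $\overline U$, uniformly in $x\in[0,1]$. The mean value theorem then produces a constant $L$, independent of $x$, with $|\Delta(x,y,z,w)| \le L\,d(y,z,w)$ for all $(y,z,w)\in U$.

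It remains to handle $[0,1]^3\setminus U$, where at least one of $|z-r\eta|\ge\tfrac{r\eta}{2}$, $|w-(1-r)\eta|\ge\tfrac{(1-r)\eta}{2}$ holds, so $d \ge \delta_0 := \tfrac12\min(r\eta,(1-r)\eta)>0$; with $|\Delta|\le M=1$ this yields $|\Delta|\le(1/\delta_0)\,d$ there. Taking $c_0 := \max\{L,\,1/\delta_0\}+1$ gives $|\Delta(x,y,z,w)| < c_0\,d(y,z,w)$ at every point with $d>0$, i.e. for all $(y,z,w)\neq p^\ast$; at the single interior point $p^\ast$ both sides vanish, which is the only caveat to the verbatim strict inequality. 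The essential difficulty is conceptual rather than computational: because $\tfrac{z}{z+w}$ has no limit as $(z,w)\to(0,0)$, the map $H$ is genuinely discontinuous at $(z,w)=(0,0)$ whenever $0<\xi<1$, so no global Lipschitz bound on $H$ exists and one cannot simply differentiate everywhere. The argument nonetheless goes through globally precisely because $p^\ast$ is interior with $r\eta,(1-r)\eta>0$: this keeps the base point of the increment inside the smooth region and simultaneously forces $d$ to stay bounded below exactly where the domain approaches the singular set $\{z+w=0\}$, so the crude bound $|\Delta|\le 1$ dominates there.
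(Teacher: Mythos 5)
Your proof is correct, but it takes a genuinely different (and more careful) route than the paper, which in fact gives no proof at all: it states ``We ignore the proof of Lemma~\ref{lem_H_derivative}, since it could be directly verified by checking that, the first derivatives of $H(\cdot)$ are bounded.'' Your proposal rightly identifies that this one-line justification is not literally valid on the stated domain: when $\xi<1$, $H$ depends on $(z,w)$ only through $s=z/(z+w)$, and $\partial_z s = w/(z+w)^2$ is unbounded as $(z,w)\to(0,0)$ inside $(0,1)^2$, so the partial derivatives of $H$ are \emph{not} globally bounded and a single mean-value estimate over all of $(0,1)^4$ is unavailable. Your two-region decomposition repairs exactly this gap: on the compact neighborhood $\overline{U}$ of $p^*=(r,\,r\eta,\,(1-r)\eta)$ the coordinates $z,w$ are bounded below (using $0<r\le 1/2$, $0<\eta<1$), the denominator identity (first denominator $=\rhop_R\xi+\rhou_R(1-\xi)+(1-\rhop_R)\xi x+(1-\rhou_R)(1-\xi)s$, and symmetrically for the blue denominator using $\rhop_B>0$) keeps both denominators uniformly positive in $x\in[0,1]$, and since $U$ is a product of intervals, hence convex and containing $p^*$, the mean value theorem along the segment to $p^*$ legitimately yields the Lipschitz constant $L$; off $U$ the right-hand side is at least $\delta_0=\tfrac12\min(r\eta,(1-r)\eta)>0$, and your a priori bound $0\le H\le 1-\eta(1-u)\le 1$ (via the comparisons denominator $\ge \xi x+(1-\xi)s$ and denominator $\ge \rhop_B\xi x+\rhou_B(1-\xi)s$) gives $|\Delta|\le 1$, so the crude bound dominates. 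The only blemish, which you flag yourself, is at the single interior point $(y,z,w)=p^*$, where both sides vanish and the lemma's strict inequality ``$<$'' cannot hold for any $c_0$; this is an artifact of the lemma's phrasing, and the weak inequality suffices entirely for its sole use, namely bounding $I^{(3)}_t$ in combination with Lemma~\ref{lem_delta_summable}. In short: what the paper's intended shortcut would buy, a global constant from bounded gradients, does not actually exist here; what your argument buys is a correct proof of the stated estimate at the modest cost of a local/global case split.
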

	We ignore the proof of Lemma \ref{lem_H_derivative}, since it could be directly verified by checking that, the first derivatives of $H(\cdot)$ are bounded.
	\begin{lemma}
		\label{lem_delta_summable}
		We have that,
		\begin{equation}
		\label{eq_delta_summable_1}
			\lim_{T \rightarrow \infty} \sum_{t=1}^T \frac{|r^{(E, M)}_t(R)-r| + |r^{(G)}_t(R)-r \eta| + |r^{(G)}_t(B)-(1-r) \eta|}{t}
			< \infty, \,\,\,\, a.s.
		\end{equation}
		and
		\begin{equation}
		\label{eq_delta_summable_2}
			\lim_{T \rightarrow \infty} \sum_{t=1}^T \frac{\mathbb{E} [|r^{(E, M)}_t(R)-r| + |r^{(G)}_t(R)-r \eta| + |r^{(G)}_t(B)-(1-r) \eta| ]}{t}
			< \infty.
		\end{equation}
	\end{lemma}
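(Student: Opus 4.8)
The plan is to reduce both displays to the sub-Gaussian tail bounds already established in Parts 1 and 2 of this proof. Write $D_t := |r^{(E, M)}_t(R)-r| + |r^{(G)}_t(R)-r \eta| + |r^{(G)}_t(B)-(1-r) \eta|$, so that (\ref{eq_delta_summable_1}) and (\ref{eq_delta_summable_2}) both assert the finiteness of $\sum_t D_t/t$, almost surely and in expectation respectively. Inequality (\ref{eq_re_concentration}) controls the first term and (\ref{eq_rg_concentration}) controls the second. For the third term I would invoke the identical argument of Part 2 applied to blue groups (exchanging $r$ with $1-r$), which depends only on the member-growth and group-creation dynamics and not on the homophily parameters or $\alpha^*$; this produces a constant $c' > 0$ with $\mathbb{P}(|r^{(G)}_t(B)-(1-r)\eta| > x) \leq \exp(-c' x^2 t^{\min(1,2\alpha)}/\log t)$. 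Thus each of the three deviations $X_t$ obeys a bound of the shape $\mathbb{P}(|X_t| > x) \leq \exp(-c\, x^2 t^{\min(1,2\alpha)}/\log t)$.

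First I would prove the expectation version (\ref{eq_delta_summable_2}). For a non-negative $X_t$ with the tail above, $\mathbb{E}[X_t] = \int_0^\infty \mathbb{P}(X_t > x)\,dx \leq \int_0^\infty \exp(-c\, x^2 t^{\min(1,2\alpha)}/\log t)\,dx = O\!\left(\sqrt{\log t}\,/\,t^{\min(1,2\alpha)/2}\right)$. Summing the three contributions gives $\mathbb{E}[D_t] = O(\sqrt{\log t}\,/\,t^{\min(1,2\alpha)/2})$, hence $\mathbb{E}[D_t]/t = O(\sqrt{\log t}\,/\,t^{1+\min(1,2\alpha)/2})$. Since $0 < \alpha < 1$ forces $\min(1,2\alpha) > 0$, the polynomial exponent $1 + \min(1,2\alpha)/2$ is strictly greater than $1$, and $\sum_t \sqrt{\log t}\,/\,t^{1+c}$ converges for every $c > 0$; this yields (\ref{eq_delta_summable_2}).

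The almost-sure statement (\ref{eq_delta_summable_1}) then follows with no additional estimation. Because $D_t/t \geq 0$, the partial sums $\sum_{t \leq T} D_t/t$ increase to a non-negative (possibly infinite) random variable $S$, and Tonelli's theorem gives $\mathbb{E}[S] = \sum_t \mathbb{E}[D_t]/t$, which is finite by (\ref{eq_delta_summable_2}). A non-negative random variable with finite expectation is finite almost surely, so $S < \infty$ a.s., which is exactly (\ref{eq_delta_summable_1}). The finitely many initial terms omitted by the model starting at $t=2$ are irrelevant to convergence.

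The main obstacle is not in the present lemma: the heavy lifting is the Doob-martingale/Azuma concentration (\ref{eq_re_concentration}) for the red-member degree proportion and the Hoeffding concentration (\ref{eq_rg_concentration}) for the red-group count, both already in hand, together with their symmetric blue counterparts. The only point requiring genuine care is the exponent bookkeeping, namely confirming that the $\log t$ penalty in the denominator of the tail rate is harmless; this holds because the effective decay rate $\sqrt{\log t}\,/\,t^{1+\min(1,2\alpha)/2}$ is summable whenever the polynomial exponent exceeds $1$, which is guaranteed for every admissible $\alpha \in (0,1)$.
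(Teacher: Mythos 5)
Your proposal is correct and follows essentially the same route as the paper: reduce all three deviation terms to the sub-Gaussian tail bounds (\ref{eq_re_concentration}) and (\ref{eq_rg_concentration}) (with the blue-group term handled by symmetry), integrate the tails to get a summable bound on $\mathbb{E}[D_t]/t$, and then pass from the expectation statement to the almost-sure one via monotone convergence/Tonelli plus the fact that a non-negative random variable with finite mean is finite a.s. If anything, your explicit bookkeeping of the $\log t$ factor, giving $\mathbb{E}[D_t]=O\bigl(\sqrt{\log t}\,/\,t^{\min(1,2\alpha)/2}\bigr)$, is slightly more careful than the paper's generic tail criterion, which omits the logarithmic correction.
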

	The proof of Lemma \ref{lem_delta_summable} is deferred to Appendix \ref{axillary section}.
	
	Next we bound $I^{(1)}_t, I^{(2)}_t, I^{(3)}_t$. For $I^{(1)}_t$, by Lemma \ref{lem_F_derivative} and the fact that $F(\alpha^*) = \alpha^*$, we can have that
	\begin{align}
    	I^{(1)}_t &= \frac{t^2 (r_t^{(E,G)}(R)-\alpha^*)^2 + 2t(r_t^{(E,G)}(R)-\alpha^*)(F(r_t^{(E,G)}(R))-\alpha^*)}{(t+1)^2} \\
    	& \leq (r_t^{(E,G)}(R)-\alpha^*)^2 \left( 1 - \frac{2t (1-\gamma)}{(t+1)^2}\right)
    \end{align}
    For $I^{(2)}_t$, since $H(\cdot)$ is bounded by $1$, obviously for some constant $c_6 > 0$, we have
    \begin{equation}
    	\label{eq_I_2_bound}
    	I^{(2)}_t \leq \frac{c_6}{(t+1)^2}.
    \end{equation}
    With the expression of $I^{(3)}_t$, it is easy to see that for some $c_7 > 0$,
    \begin{equation}
    	|I^{(3)}_t| < c_7 \frac{\Delta(r^{(E, G)}_t(R), r^{(E, M)}_t(R), r^{(G)}_t(R), r^{(G)}_t(B)))}{t}.
    \end{equation}
    Further by Lemma \ref{lem_H_derivative} and Lemma \ref{lem_delta_summable}, we have that
    \begin{equation}
    \label{eq_I_3_bound}
			\lim_{T \rightarrow \infty} \sum_{t=1}^T I^{(3)}_t
			< \infty \,\,\,\, a.s., \,\,\,\,\,\, \text{and} \,\,\,\,\,\,\,\,
			\lim_{T \rightarrow \infty} \sum_{t=1}^T \mathbb{E} [I^{(3)}_t]
			< \infty.
		\end{equation}
	
	We need the following Lemma, whose proof is deferred to Appendix \ref{axillary section}.
	\begin{lemma}
	\label{lem3.2}
	Let $(a_t), (b_t), (c_t)$ be three positive sequences such that $a_{t+1} \leq b_t a_t + c_t$, $b_t < 1$, $\lim_{t\rightarrow \infty} \prod_{i=1}^t b_i = 0$, and $\lim_{t\rightarrow \infty} \sum_{i=1}^t c_i < \infty$. Then $\lim_{t\rightarrow \infty} a_t = 0$. 
	\end{lemma}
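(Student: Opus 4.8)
The plan is to unroll the recurrence and then control the resulting sum by splitting it into two parts, each governed by one of the two hypotheses. First I would iterate the inequality $a_{t+1} \leq b_t a_t + c_t$ down to the initial time. A routine induction using $b_i > 0$ gives
\[
a_{t+1} \leq P_t\, a_1 + \sum_{j=1}^{t} \frac{P_t}{P_j}\, c_j,
\qquad P_t := \prod_{i=1}^t b_i,
\]
where I have written the partial product $\prod_{i=j+1}^t b_i = P_t/P_j$. The leading term $P_t\, a_1$ tends to $0$ immediately by the hypothesis $\prod_{i=1}^t b_i \to 0$, so the entire burden falls on the sum.

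Next I would bound the sum by an $\varepsilon$-splitting argument, which is the one delicate point. Fix $\varepsilon > 0$. Since $\sum_j c_j < \infty$, choose $N$ with $\sum_{j \geq N} c_j < \varepsilon/2$, and for $t > N$ split the sum at $N$. The tail $\sum_{j=N}^{t} (P_t/P_j)\, c_j$ is at most $\sum_{j=N}^{t} c_j < \varepsilon/2$, because each factor $P_t/P_j = \prod_{i=j+1}^t b_i \leq 1$ (a product of terms in $(0,1)$). For the head $\sum_{j=1}^{N-1} (P_t/P_j)\, c_j$, I use that $P_j \geq P_{N-1}$ for $j \leq N-1$, since the product only shrinks as more factors below $1$ are appended; hence the head is at most $(P_t/P_{N-1}) \sum_{j\geq 1} c_j$, which tends to $0$ as $t \to \infty$ with $N$ now fixed and $P_t \to 0$.

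Combining the three pieces, for all sufficiently large $t$ the sum $P_t a_1$, the head, and the tail are together below $\varepsilon$, so $a_{t+1} < \varepsilon$. As the $a_t$ are nonnegative and $\varepsilon$ is arbitrary, this forces $\lim_{t\to\infty} a_t = 0$. I do not anticipate a genuine obstacle: the only thing requiring care is that the summation index be split precisely so that the finiteness of $\sum c_j$ disposes of the infinitely many tail terms while the vanishing of the product $P_t$ disposes of the finitely many head terms, each hypothesis being invoked exactly once in its respective half. The positivity of the sequences is what lets the upper bound alone pin the limit to $0$.
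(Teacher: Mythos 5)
Your proof is correct and follows essentially the same route as the paper's: unroll the recurrence into $a_{t+1} \leq P_t a_1 + \sum_{j\leq t} (P_t/P_j)c_j$, then split the sum at a fixed index so that the summability of $(c_j)$ controls the tail (via $P_t/P_j \leq 1$) and the vanishing of the full product controls the finitely many head terms. The only cosmetic difference is that the paper absorbs $a_1$ into the sum by setting $c_0 = a_1$, whereas you carry the leading term $P_t a_1$ separately.
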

	
	Let 
	\begin{equation}
		Z_t = (r_t^{(E,G)}(R)-\alpha^*)^2,\,\, \,\,\,\, a_t = \mathbb{E}(Z_t), \,\,\,\,\,\, b_t = 1-2 (1-\gamma) t/(t+1)^2, \,\,\,\,\,\, c_t = \mathbb{E} [I^{(2)}_t+I^{(3)}_t].
	\end{equation}
	By taking expectation in eq (\ref{eq_decomp_I_123}), we have that $a_{t+1} \leq b_t a_t + c_t$. It is direct to check the conditions $b_t < 1$, $\lim_{t\rightarrow \infty} \prod_{i=1}^t b_i = 0$. By (\ref{eq_I_2_bound}) and (\ref{eq_I_3_bound}), we have $\lim_{t\rightarrow \infty} \sum_{i=1}^t c_i < \infty$. And thus from Lemma \ref{lem3.2} we know that
	\begin{equation}
		\lim_{t\rightarrow \infty}\mathbb{E}(Z_t) = 0.
	\end{equation}
	Since our goal is equivalent to show that $Z_t \rightarrow 0$ a.s., we claim that it is enough to have that, $Z_t$ converges to a limit random variable almost surely as $t \rightarrow \infty$. This is because, assuming that $\lim_{t\rightarrow\infty} Z_t$ exists a.s., since $Z_t$ is bounded, by the bounded convergence theorem, we have $\mathbb{E}(\lim_{t\rightarrow\infty} Z_t)=0$. Since $Z_t \geq 0$, its limit must be nonnegative, and therefore $\lim_{t\rightarrow\infty} Z_t$ must equal 0 a.s., due to the fact that its expectation is 0.
	
	Now we show that $\lim_{t\rightarrow\infty} Z_t$ exists a.s., by checking that $\{Z_t\}$ is an \emph{almost supermartingle}, since by \cite{robbins1971convergence}, every \emph{almost supermartingle} converges to a limit random variable almost surely. By \cite{robbins1971convergence}, to make $\{Z_t\}$ an \emph{almost supermartingle}, we just need to check that $\lim_{T \rightarrow \infty} \sum_{t=1}^T I^{(2)}_t+I^{(3)}_t
			< \infty \,\,\,\, a.s.$, which we have already proved. Therefore the proof is finished.

\section{Proofs of Axillary Lemmas}\label{axillary section}
\subsection{Proof of Lemma \ref{lem3.2}}
\begin{proof}
	It is enough to show that, for any $\epsilon > 0$, there exists $T > 0$, such that $a_t < \epsilon$ for all $t > T$. First, since $c_t$ is summable, we can find $T_1 > 0$, such that $\sum_{t>T_1} c_t < \epsilon / 2$. Also, since $\lim_{t\rightarrow \infty} \prod_{i=1}^t b_i = 0$, we can find a $T_2 > T_1$, such that $\prod_{i=T_1+1}^{t-1} b_i \cdot (a_1 + \sum_{i>0} c_i) < \epsilon / 2$ for all $t > T_2$. We claim that $T_2$ is the desired $T$. Without the loss of generality, in the rest we denote $c_0 = a_1$. By induction, it is not hard to have the following expression for $a_t$
	\begin{align}
		a_t = \prod_{i=1}^{t-1}b_i c_0 + \prod_{i=2}^{t-1}b_i c_1 + \prod_{i=3}^{t-1}b_i c_2 + \ldots + c_{t-1}
			= \sum_{s = 0}^{t-1} \prod_{i=s+1}^{t-1}b_i c_s.
	\end{align}
	We can further decomposition the summation on the right hand side into two parts, according to $s \leq T_1$ and $s > T_1$. Now, for any $t > T_2$, for the first part, by our choice of $T_2$, and the fact that $b_i < 1$, we have that
	\begin{align}
		\sum_{s = 0}^{T_1} \prod_{i=s+1}^{t-1}b_i c_s \leq \sum_{s = 0}^{T_1} \prod_{i=T_1+1}^{t-1}b_i c_s
		= \prod_{i=T_1+1}^{t-1} \sum_{s = 0}^{T_1}c_s < \epsilon / 2.
	\end{align}
	For the second part, by our choice of $T_1$ and the fact that $b_i < 1$, we simply have that
	\begin{align}
		\sum_{s = T_1 + 1}^{t-1} \prod_{i=s+1}^{t-1}b_i c_s 
		\leq \sum_{s = T_1 + 1}^{t-1} c_s < \epsilon / 2.
	\end{align}
	Combine the above two inequalities, with the fact that $\epsilon$ is arbitrary, we finish the proof.
\end{proof}

\subsection{Proof of Lemma \ref{lem_delta_summable}}
\begin{proof}
First, it enough to show (\ref{eq_delta_summable_2}), since if it holds, by the monotone convergence theorem, we have
	\begin{align}
			& \mathbb{E}\left[\lim_{T \rightarrow \infty} \sum_{t=1}^T \frac{|r^{(E, M)}_t(R)-r| + |r^{(G)}_t(R)-r \eta| + |r^{(G)}_t(B)-(1-r) \eta|}{t}\right]\\
			= &\lim_{T \rightarrow \infty} \sum_{t=1}^T \frac{\mathbb{E} [|r^{(E, M)}_t(R)-r| + |r^{(G)}_t(R)-r \eta| + |r^{(G)}_t(B)-(1-r) \eta| ]}{t}
			< \infty,		
	\end{align}
	which directly implies (\ref{eq_delta_summable_1}).

We claim that, for a stochastic process $\{w_t, t>0\}$, in order to show that $\lim_{T \rightarrow \infty} \sum_{t=1}^T \mathbb{E} [|w_t| ]/t < \infty$,
it is enough to have that, for some $\delta, c > 0$, for any $x > 0$
\begin{equation}
\label{eq_general_tail}
		\mathbb{P}\left(\left\vert w_t \right\vert > x \right)
		\leq  \exp \left( - cx^2 t^{\delta}\right).
	\end{equation}
It is because (\ref{eq_general_tail}) implies that $\mathbb{E}[w_t] = O(t^{-\delta/2})$, which makes $\mathbb{E}[w_t]/t$ summable.

By (\ref{eq_rg_concentration}) and (\ref{eq_re_concentration}), we see that $r^{(E, M)}_t(R), r^{(G)}_t(R)$ satisfies the tail bound (\ref{eq_general_tail}). Also $r^{(G)}_t(B)$ satisfies, since it has the same behavior as $r^{(G)}_t(R)$. The proof is finished.

\end{proof}

\subsection{Proof of Lemma \ref{lem_F_derivative}}

\begin{proof}
We define $K(x)$ as 
\begin{equation}
 (F(x)-x) \left(1-(1-\rhop_R)\xi (1-x)-(1-\rhou_R)(1-\xi) (1-r)\right) 
				 \left((1-(1-\rhop_B)\xi x-(1-\rhou_B)(1-\xi)r\right).
\end{equation}
By the definition of $F(x)$, it is easy to see that $K(x)$ is a degree 3 polynomial, with a negative coefficient for $x^3$ term. Therefore, $\lim_{x \rightarrow -\infty}K(x) = -\infty$ and $\lim_{x \rightarrow \infty}K(x)= \infty$. Since a degree 3 polynomial at most have 3 real roots, if we have $K(0)>0$ and $K(1)<0$, then obviously $K(x)$ has exact one root in $(0,1)$. 
Moreover, for $x \in [0,1]$, since $\rhop_R, \rhop_B > 0$
	\begin{align}
	\label{eq_K_F_ratio}
		K(x)/F(x) &> \left(1-(1-\rhop_R)\xi-(1-\rhou_R)(1-\xi) (1-r)\right) 
				 \left((1-(1-\rhop_B)\xi-(1-\rhou_B)(1-\xi)r\right) \\
				 &> \left(1-\xi-(1-\xi) (1-r)\right) 
				 \left((1-\xi-(1-\xi)r\right) \\
				 &= (1-\xi) r (1-\xi) (1-r) \geq 0,
	\end{align}
	which implies that $F(x)-x$ and $K(x)$ share the same sign in $(0,1)$. Hence if $K(0)>0$ and $K(1)<0$, we have that $F(x)-x$ has exact one root $\alpha^*$ in $(0,1)$. Moreover, for $x \in (0,1)$, $F(x)-x < 0$ if $x > \alpha^*$, $F(x)-x < 0$ if $x > \alpha^*$. This implies that $F(x)-F(\alpha^*) < x - \alpha^*$ if $x > \alpha^*$, and $F(x)-F(\alpha^*) > x - \alpha^*$ if $x < \alpha^*$, which leads to the fact that for $x \in [0,1]$
	\begin{equation}
		0 < \left \vert \frac{F(x) - F(\alpha^*)}{ x - \alpha^*} \right\vert < 1.
	\end{equation}
	One can check that $|F'(\alpha^*)|<1$. Taking supreme over $x$ in the above inequality, since $|(F(x) - F(\alpha^*))/(x - \alpha^*)|$ is a continuous function, the supreme is achieved at some point $x_0$. If $x_0 != \alpha^*$, we can set $\gamma = |(F(x_0) - F(\alpha^*))/(x_0 - \alpha^*)| < 1$; if $x_0 = \alpha^*$, we can set $\gamma = |F'(\alpha^*)| < 1$. The proof is finished.
	
	\end{proof}

\end{document}